\documentclass[showpacs,twocolumn,pra,superscriptaddress,notitlepage]{revtex4-1}
\setcounter{secnumdepth}{3}
\usepackage{color}
\usepackage{amsmath,amsthm,mathrsfs,dsfont}
\usepackage{graphicx}
\usepackage{tikz}
\usepackage{xr}
\usepackage{xc}
\usepackage{float}
\usepackage{bbm}
\usepackage{epsfig} 
\usepackage{verbatim}
\usepackage{array}
\usepackage{setspace}
\usepackage{bbding}
\usepackage{amssymb}
\usepackage{pifont}
\usepackage{braket}
\usepackage{tikz}

\usepackage{multirow}
\usepackage{setspace}
\usetikzlibrary{quantikz}

\usepackage{newtxtext,newtxmath}
\usepackage{lineno}

\externaldocument[suppl-]{../npjqiv3suppl/npjqiv3suppl}
\externalcitedocument[suppl-]{../npjqiv3suppl/npjqiv3suppl}
   
\usepackage[unicode=true,
 bookmarks=true,bookmarksnumbered=false,bookmarksopen=false,
 breaklinks=false,pdfborder={0 0 1},backref=false,colorlinks=true]
 {hyperref}
\hypersetup{
 linkcolor=magenta, urlcolor=blue, citecolor=blue, pdfstartview={FitH}, hyperfootnotes=false, unicode=true}
 
\usepackage{color}
\newtheorem{lemma}{Lemma}
\newtheorem{theorem}{Theorem}

\makeatletter
\@ifundefined{textcolor}{}
{%
 \definecolor{BLACK}{gray}{0}
 \definecolor{WHITE}{gray}{1}
 \definecolor{RED}{rgb}{1,0,0}
 \definecolor{GREEN}{rgb}{0,1,0}
 \definecolor{BLUE}{rgb}{0,0,1}
 \definecolor{CYAN}{cmyk}{1,0,0,0}
 \definecolor{MAGENTA}{cmyk}{0,1,0,0}
 \definecolor{YELLOW}{cmyk}{0,0,1,0}
}


\usepackage{amsfonts}\usepackage{tabularx}\usepackage{dcolumn}\usepackage{bm}\usepackage{graphicx}\usepackage{epstopdf}

\setcounter{MaxMatrixCols}{10}
\hypersetup{urlcolor=blue}

\makeatother

\newcolumntype{C}[1]{>{\centering\arraybackslash$}p{#1}<{$}}

\usepackage{algorithm}
\usepackage{algorithmic} 

\begin{document}

\widetext

\title{Circuit complexity of quantum access models for encoding classical data}

\author{Xiao-Ming Zhang}
\email{phyxmz@gmail.com}
\affiliation {Key Laboratory of Atomic and Subatomic Structure and Quantum Control (Ministry of Education), Guangdong Basic Research Center of Excellence for Structure and Fundamental Interactions of Matter, School of Physics, South China Normal University, Guangzhou 510006, China} 
\affiliation {Guangdong Provincial Key Laboratory of Quantum Engineering and Quantum Materials, Guangdong-Hong Kong Joint Laboratory of Quantum Matter, Frontier Research Institute for Physics, South China Normal University, Guangzhou 510006, China}
\affiliation{Center on Frontiers of Computing Studies, Peking University, Beijing 100871, China}
\affiliation{School of Computer Science, Peking University, Beijing 100871, China}

\author{Xiao Yuan}

\affiliation{Center on Frontiers of Computing Studies, Peking University, Beijing 100871, China}
\affiliation{School of Computer Science, Peking University, Beijing 100871, China}

\begin{abstract}
\noindent \textbf{Abstract} How to efficiently encode classical data is a fundamental task in quantum computing. 
While many existing works treat classical data encoding as a black box in oracle-based quantum algorithms, their explicit constructions are crucial for the efficiency of practical algorithm implementations. Here, we unveil the mystery of the classical data encoding  black box and study the Clifford$+T$ complexity in constructing several typical quantum access models. For general matrices (even including sparse ones), we prove that sparse-access input models and block-encoding both require nearly linear circuit complexities relative to the matrix dimension. We also give construction protocols achieving near-optimal gate complexities.
On the other hand, the construction becomes efficient with respect to the data qubit when the matrix is a linear combination of polynomial terms of efficiently implementable unitaries. As a typical example, we propose improved block-encoding when these unitaries are Pauli strings. Our protocols are built upon improved quantum state preparation and a select oracle for Pauli strings, which hold independent values. Our access model constructions provide considerable flexibility, allowing for tunable ancillary qubit numbers and offering corresponding space-time trade-offs.
\end{abstract}
\maketitle

\section{Introduction}\label{sec:intro}

Quantum computing offers speedups over the classical counterpart in different tasks, including factoring, searching, simulation, etc~\cite{Nielsen.02}. 
However, the speedups, in many cases, rely on the existence of efficient oracles or access models to encode the related classical data~\cite{Aaronson.21}. 
In this context, a function $f(x)$ representing the classical data of interest is encoded using a unitary operation $U_f$, which acts as an oracle in the computation. To study quantum advantages, the number of queries to $U_f$ in a quantum algorithm is compared to the number of queries to $f(x)$ in classical algorithms. Quantum computing provides substantial reduction in query complexity for many problems of practical importance~\cite{Grover.97,Berry.07,Harrow.09}. 
 
There are various access models to encode classical data. One commonly used access model is the sparse-access input model (SAIM)~\cite{Berry.07,Childs.11,Childs.10,Berry.14,Harrow.09,Childs.17,Gilyen.19,Chakraborty.19,Babbush.23}, which encodes general sparse matrices and outputs the value or position of the non-zero elements when provided with appropriate inputs. SAIM is initially introduced for Hamiltonian simulation and discrete quantum walks~\cite{Berry.07,Childs.11,Childs.10,Berry.14}, and has then found broad applications in other fields such as machine learning~\cite{Harrow.09,Childs.17,Chakraborty.19} and classical oscillator simulations~\cite{Babbush.23}. For example, the quantum linear system problem could be solved with $\tilde O(\kappa)$ queries to SAIM~\cite{Harrow.09,Childs.17,Gilyen.19}, where $\kappa$ represents the condition number of the matrix to be inverted.

Another important access model is block-encoding, which serves as a crucial subroutine for quantum signal processing~\cite{Low.17,Low.19} and its generalization~---~quantum singular-value transformations (QSVT)~\cite{Gilyen.19,Martyn.21}. 
The success of
block-encoding enables the realization of Hamiltonian simulation with an optimal query complexity~\cite{Low.17,Low.19}. Furthermore, many seminal quantum algorithms, including Grover's algorithm, quantum Fourier transformation, and the HHL algorithm, could be viewed as special cases of QSVT, where the problem of interest is encoded using block-encoding~\cite{Martyn.21}.

Many existing works treat access models as black boxes for convenience. However, the actual circuit complexity of the algorithm also depends on the cost of each query to these access models. While being important, this problem only draws much attention very recently with many basic problems still left open. In particular, Ref~\cite{Clader.22} presents a nearly time-optimal protocol for block-encoding of general dense matrices of $2^n\times 2^n$ dimension. A circuit depth of $\tilde O(n)$ can be achieved at the expense of exponential ancillary qubits. Ref~\cite{Sunderhauf.23} examines matrices with $D$ data each appearing $M$ times and considers examples including checkerboard matrices and tridiagonal matrices with polynomial circuit complexities. However, the cost of block-encoding of more general matrices remains unexplored. Moreover, it is still unclear if there is a fundamental limit to the resource required by data encoding.  

In this work, we provide a framework of constructing quantum access models in the fault-tolerant setting using Clifford$+T$ gates. The protocol works for general classical data and takes the underlying structure of the data, such as sparsity and linear combintaion of unitaries (LCU), into consideration. Our results represent a direct mapping from the query complexity of quantum algorithms to their practical circuit complexity. Our protocols allow tunable ancillary qubit numbers and offer space-time trade-off. 
For general sparse matrices of dimension $2^n=N$, we investigate the SAIM and block-encoding. 
For both access models, we first show that the gate count lower bound increases about linearly with respect to $N$. We then develop construction algorithms with varying ancillary qubit numbers ranging from $\Omega(n)$ to $O(N)$. Across the entire range of qubit numbers, we achieve nearly optimal circuit complexity. 
We next study the block-encoding of LCU. Efficient block-encoding is achievable when the matrix can be represented as a linear combination of a polynomial number of unitaries, which  can be implemented using  polynomial-size quantum circuits.

Our access model construction relies on optimized realizations of various subroutines that are independently valuable, including quantum state preparation, selective oracles for Pauli strings, and sparse Boolean functions. In all the listed operations, we achieve improved or at least comparable circuit complexities compared to the best-known realizations.

We now introduce the definition of SAIM and block-encoding in below. 
Let $N=2^n$, we consider a sparse matrix $H\in\mathbb{C}^{N\times N}$ with at most $s=O(1)$ nonzero elements at each row and column. Let $H_{x,y}$ be the value of the element at the $x$th row and $y$th column, and each $H_{x,y}$ is a $d$-digit integer ($d=O(1)$). Let idx denote a $2n$-qubit index register, and wrd denote an $n$-qubit word register, the sparse-access input model (SAIM) corresponds to two unitaries $O_H$, $O_F$, which satisfies
\begin{subequations}\label{eq:ofh}
\begin{align}
O_H|x,y\rangle_{\text{idx}}|z\rangle_{\text{wrd}}&=|x,y\rangle_\text{idx}|z\oplus H_{x,y}\rangle_\text{wrd},\\
O_F|x,k\rangle_{\text{idx}}&=|x,F(x,k)\rangle_{\text{idx}}.
\end{align}
\end{subequations}
Here, $F(x,k)$ is the column index of the $k$th nonzero element in row $x$. Due to its simplicity and generality, Eq.~\eqref{eq:ofh} becomes one of the standard access models in quantum computing, which is usually assumed to be available in processing classical data.

\noindent We call a unitary $U$ the block encoding of $H$ if we have  $$\alpha\left(\langle0^{n_{\text{anc}}}|\otimes\mathbb{I}_{N}\right)U\left(|0^{n_{\text{anc}}}\rangle\otimes\mathbb{I}_{N}\right)=H,$$ where $\alpha>0$ is the normalization factor, $n_{\text{anc}}$ is the number of  ancillary qubits, and $\mathbb{I}_{N}$ is the $N$-dimensional identity. In practice, we may consider approximated construction of the block encoding. More specifically, we call unitary $\tilde U$ an $(\alpha,n_{\text{anc}},\varepsilon)-$block-encoding of $H$ if 
\begin{align}\label{eq:bl}
\left\|H-\alpha\left(\langle0^{n_{\text{anc}}}|\otimes\mathbb{I}_{N}\right)\tilde U\left(|0^{n_{\text{anc}}}\rangle\otimes\mathbb{I}_{N}\right)\right\|\leqslant\varepsilon
\end{align}
for error parameter $\varepsilon\ge 0$.
Throughout our manuscript, $\|\cdot\|$ represents either the spectral norm for matrices or Euclidean norm for vectors. For a general $N$-dimensional matrix $H$, the construction of its block-encoding requires $\Omega(\text{Poly}(N))$ gate count. This is true even for sparse $H$ as we show in Supplementary Discussion 2.

On the other hand, when $H$ has some other structures, the resource may be significantly reduced. In particular, we consider $H$ in the form of a linear combination of unitaries (LCU) as
\begin{align}\label{eq:lcugen}
H=\sum_{p=0}^{P-1}\alpha_pu_p,
\end{align}
where $u_p$ are $n$-qubit unitaries that can be implemented with polynomial-size quantum circuit, and $P=O(\text{poly}(n))$. The concept ``LCU'' appeared firstly in~\cite{Long.06}. The main purpose of Ref.~\cite{Long.06} and the follow-up work Ref.~\cite{Long.11} is to realize non-unitary transformation on quantum computers. In the context of Hamiltonian simulation, Ref.~\cite{Childs.12} has shown that LCU-based method can outperform product formula based methods. Many subsequent works with different applications have then been inspired~\cite{Low.19,Cong.19,Martyn.21,Wei.22,HaiSheng.22}. 

Without loss of generality, we may assume that $\log_2P$ is an integer, and $\sum_{p=0}^{P-1}\alpha_p=1$. This can always be satisfied by adding terms with zero amplitude, and rescaling the Hamiltonian. In particular, the linear combination of Pauli strings
\begin{align}\label{eq:lcu}
H=\sum_{p=0}^{P-1}\alpha_pH_p
\end{align}
will be studied in details. Here, $\alpha_p>0$, $P\geqslant1$, $H_p=\bigotimes_{l=1}^nH_{p,l}$, and $H_{p,l}\in\{\pm I,\pm X,\pm Y,\pm Z\}$ are single-qubit Pauli operators. Eq.~\eqref{eq:lcu} is important as it corresponds to the Hamiltonian of almost all physical quantum systems, such as the spin and molecular systems.    

In our constructions, we consider the fault-tolerant quantum computing setting. More specifically, we only use two-qubit Clifford gate and single-qubit $T$ gate, which is equivalent to the elementary gate set $\mathcal{G}_{\text{clf}+T}\equiv\{H,S,T,\text{CNOT}\}$. All gates in $\mathcal{G}_{\text{clf}+T}$ are error-correctable with surface code~\cite{Fowler.12}. We benchmark the circuit complexity of a given quantum circuit with three quantities: total number of elementary gates, total qubit number (including data qubits and ancillary qubits), and circuit depth. We will also discuss the space-time trade-off of our algorithm, i.e. the circuit depth under a certain number of ancillary qubits. We also allow at least $O(n)$ ancillary qubits, because this does not increase the total space complexity.

\section{Results}

\noindent\textbf{Circuit complexity lower bound}

\noindent Before discussing the access model construction, we first study the lower bound of the circuit complexity.  We will focus on the encoding of sparse matrices. The methodology here is general and can be readily applied to other related problems.

Our strategy is as follows. Firstly, we analyze the \textit{capacity} of a quantum circuit with bounded resource, i.e. how much unique unitaries can be constructed, given fixed number of elementary gates or circuit depth. Secondly, we analyze the \textit{size} of the access model, i.e. the number of unique unitaries required to approximate the access model with arbitrary parameters. The circuit complexity can then be estimated by comparing the capacity of a quantum circuit and the size of the access model. All proofs of our lemma and theorems in this section are provided in Supplementary Discussion 1.

$\\$
\noindent\textit{Quantum circuit capacity.} 
\noindent Assuming that we are given a finite two-qubit elementary gate set $\mathcal{G}_{\text{ele}}$. We define $g\equiv|\mathcal{G}_{\text{ele}}|=O(1)$ with $|\cdot|$ the number of elements in the set. Our first result is that the capacity can be lower bounded only with the number of elementary gates, independent of the space and time resources. 

\begin{lemma}\label{lm:CC}
Let $\mathcal{G}_{C}$ be the set containing all $n$-qubit unitaries that can be constructed with $C$ elementary gates in $\mathcal{G}_{\text{ele}}$. Then, we have $\log|\mathcal{G}_{C}|= O\left((C\log(C+n))\right)$, even with unlimited ancillary qubit number. 
\end{lemma}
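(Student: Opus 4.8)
The plan is to bound $|\mathcal{G}_C|$ by a direct counting argument over circuit descriptions, converting the freedom of using unlimited ancillas into a harmless combinatorial factor. The crucial first observation is that, although arbitrarily many ancillary qubits are allowed, a circuit built from only $C$ two-qubit gates can touch at most $2C$ distinct qubits in total. Any ancilla that is never acted upon by a gate simply remains in its initial state and contributes an identity factor, so it cannot enlarge the set of realizable $n$-qubit unitaries. Hence I would argue that, without loss of generality, it suffices to consider circuits acting on at most $m \le n + 2C$ qubits: the $n$ data qubits together with the at most $2C$ touched ancillas, which can be canonically relabeled.

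Next I would enumerate the circuit descriptions. A circuit is specified by an ordered list of $C$ gates, where each gate is determined by its type (one of $g = |\mathcal{G}_{\text{ele}}| = O(1)$ choices) and by the ordered pair of qubits on which it acts (at most $m(m-1) \le m^2$ choices). The number of distinct circuit descriptions is therefore at most $(g\,m^2)^C$. Since each description realizes at most one $n$-qubit unitary in $\mathcal{G}_C$ (the map from a fixed circuit to the unitary it induces on the data register is single-valued once the ancilla initialization and finalization convention is fixed), this count upper-bounds $|\mathcal{G}_C|$.

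Combining the two steps, I would substitute $m \le n + 2C$ to get $|\mathcal{G}_C| \le \big(g\,(n+2C)^2\big)^C$ and then take logarithms:
\begin{align}
\log|\mathcal{G}_C| \le C\big(\log g + 2\log(n+2C)\big) = O\big(C\log(C+n)\big),
\end{align}
where $g = O(1)$ is absorbed into the constant. This matches the claimed bound and, because the argument never referenced the number of unused ancillas, it holds even with unlimited ancillary qubit number.

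The main obstacle I anticipate is not the counting itself but making the ancilla reduction airtight: one must pin down what it means for a unitary to be ``constructed with $C$ gates'' in the presence of ancillas (e.g. that the ancillas are returned to $|0\rangle$, so that each circuit genuinely induces a unitary on the data register), and then verify that inserting or deleting untouched ancillas leaves this induced unitary unchanged. Once this is established, the single-valuedness of the circuit-to-unitary map legitimizes bounding $|\mathcal{G}_C|$ by the number of circuit descriptions, and the remaining estimate is elementary.
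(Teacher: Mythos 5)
Your proposal is correct and follows essentially the same route as the paper's proof: both reduce the unlimited-ancilla setting to circuits on $O(n+C)$ canonically labeled qubits (the paper does this by introducing and labeling ancillas $A_1,A_2,\dots$ as each gate first touches them) and then count at most $\left(g\,O((n+C)^2)\right)^C$ circuit descriptions to conclude $\log|\mathcal{G}_C|=O(C\log(C+n))$. The caveat you flag about pinning down the ancilla convention is handled in the paper by requiring ancillas to start in $|0\rangle$ and be uncomputed, exactly as you suggest.
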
 

Lemma.~\ref{lm:CC} implies that the capacity does not always increase with ancillary qubit number, which can be understood as follows. All ancillary qubits should be uncomputed at the end of the circuit. When $C$ is fixed, only finite number of unitaries can satisfy this requirement, while constructable by those elementary gates. We also note that the circuit depth $D$ is bounded by $C$, so Lemma.~\ref{lm:CC} also implies a relation between capacity and circuit depth. 

On the other hand, when the ancillary qubit number and circuit depth are finite, the lower bound of capacity can be tighten as follows.

\begin{lemma}\label{lm:cancd}
Let $\mathcal{G}'_{n_{\text{anc}},D}$ be the set containing all unitaries that can be constructed with $n_{\text{anc}}$ ancillary qubits and $D$ circuit depth. Then, we have $\log\big|\mathcal{G}'_{n_{\text{anc}},D}\big|=O\left(D(n+n_{\text{anc}})\right)$.
\end{lemma}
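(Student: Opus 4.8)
The plan is to bound $|\mathcal{G}'_{n_{\text{anc}},D}|$ by directly counting the distinct unitaries reachable in depth $D$ on the full register of $m := n+n_{\text{anc}}$ qubits, rather than first bounding the gate count $C$ and invoking Lemma~\ref{lm:CC}. The latter route gives $C = O(Dm)$ and hence $\log|\mathcal{G}_C| = O\!\left(Dm\log(Dm+n)\right)$, which carries an unwanted logarithmic factor; the whole point of exploiting the depth bound is to remove it. First I would put every depth-$D$ circuit into canonical layered form $V_D\cdots V_1$, where each $V_t$ is a product of two-qubit gates acting on a disjoint set of qubit pairs (all other qubits carrying the identity). Since the realized unitary is the ordered product of the layers, $|\mathcal{G}'_{n_{\text{anc}},D}| \le N_1^{\,D}$ with $N_1$ the number of distinct single layers, so $\log|\mathcal{G}'_{n_{\text{anc}},D}| \le D\log N_1$ and the entire statement reduces to proving $\log N_1 = O(m)$.

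The second step is to enumerate the single layers. A layer is specified by (i) a matching on the $m$ qubits, fixing which pairs carry a gate, and (ii) for each matched pair an element of $\mathcal{G}_{\text{ele}}$, of which there are $g = O(1)$. Because a matching contains at most $\lfloor m/2\rfloor$ pairs, the gate-label choices contribute at most $g^{m/2} = 2^{O(m)}$, which is comfortably within budget. The remaining and decisive factor is the number of matchings on the $m$ qubits, and controlling it is where the argument must be done carefully.

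This combinatorial count is the main obstacle. A naive encoding that records, for each matched qubit, the identity of its partner spends $\Theta(\log m)$ bits per qubit and thus reintroduces exactly the $\log m$ factor we are trying to eliminate; indeed, on a fully connected register the number of matchings is $2^{\Theta(m\log m)}$, which would only yield $\log|\mathcal{G}'_{n_{\text{anc}},D}| = O(Dm\log m)$. Obtaining the clean $O(Dm)$ therefore requires showing that the number of admissible layers is only $2^{O(m)}$, i.e.\ that each layer carries $O(1)$ bits of structural information per qubit. I would establish this by bounding the matchings by $2^{\#\text{edges}}$ of the interaction graph, which is $2^{O(m)}$ once the allowed connectivity has bounded degree, so that the gate locations in a layer are drawn from only $O(m)$ candidate pairs. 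With $\log N_1 = O(m)$ in hand, the bound $\log|\mathcal{G}'_{n_{\text{anc}},D}| = O(Dm) = O\!\left(D(n+n_{\text{anc}})\right)$ follows immediately from the first step, and all remaining manipulations are routine bookkeeping.
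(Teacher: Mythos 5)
Your decomposition into layers and the reduction to counting single layers is exactly the paper's strategy (the paper also counts, per layer, a pairing of the $m=n+n_{\text{anc}}$ qubits times a choice of gate from $\mathcal{G}_{\text{ele}}$ for each pair, and raises to the power $D$). The problem is the step you yourself flag as decisive. The lemma, as stated and as used elsewhere in the paper (see the proof of Lemma~\ref{lm:CC}, where $a_j$ and $b_j$ are chosen freely among all qubits), places no connectivity restriction on where a two-qubit gate may act. Your resolution --- bounding the number of matchings by $2^{\#\text{edges}}$ of a bounded-degree interaction graph --- imports a hypothesis that is not in the statement, so what you prove is a different lemma. On the all-to-all register that the lemma actually concerns, a single layer already admits $(m-1)!!=2^{\Theta(m\log m)}$ perfect matchings, and (e.g.\ for layers of disjoint CNOTs) distinct matchings give distinct unitaries; so your own counting route cannot do better than $\log\big|\mathcal{G}'_{n_{\text{anc}},D}\big|=O\big(D(n+n_{\text{anc}})\log(n+n_{\text{anc}})\big)$, and indeed at $D=1$ the capacity is already $\Theta(m\log m)$, not $O(m)$.

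It is worth saying that you have put your finger on a real soft spot rather than missed an idea: the paper's own proof asserts that there are only $O\big((n+n_{\text{anc}})^{2D}\big)$ ways to choose the pairings across all $D$ layers, which undercounts the $2^{\Theta(Dm\log m)}$ matchings and is precisely the quantity you identified as the obstruction. So neither your argument nor the paper's closes the gap between $O(Dm)$ and $O(Dm\log m)$ for unrestricted connectivity; the honest conclusions are either the weaker bound $\log\big|\mathcal{G}'_{n_{\text{anc}},D}\big|=O\big(D(n+n_{\text{anc}})\log(n+n_{\text{anc}})\big)$ (which still supports the downstream lower bounds up to a logarithmic factor) or the stated bound under an explicit bounded-degree connectivity assumption, which you would then need to add to the hypotheses. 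If you keep your version, state that assumption up front and note that it is where the $\log m$ is saved.
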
 

Lemma.~\ref{lm:CC},~\ref{lm:cancd} represent the ultimate representational power of quantum circuits constructed with local gates. Lemma.~\ref{lm:CC} and \ref{lm:cancd} can be used to estimate the circuit complexity lower bound whenever the tasks have requirement on $|\mathcal{G}_C|$ or $|\mathcal{G}_{n_\text{anc},D}'|$.  Moreover, similar results can be obtained straightforwardly for other type of elementary gate sets, such as $k$-local operations with $k>2$. 

\noindent\textit{Circuit complexity for encoding sparse matrices.}
\noindent With Lemma.~\ref{lm:CC} and \ref{lm:cancd}, we now estimate the circuit complexity lower bound for accessing sparse matrices. 
For SAIM, it turns out that at least $\Omega(N!)$ unique unitaries are required to cover the set of all SAIM for $1$-sparse matrices. So according to Lemma.~\ref{lm:CC},~\ref{lm:cancd}, we have the following result.

\begin{theorem}
\label{th:lbsm}
Given an arbitrary finite two-qubit elementary gate set $\mathcal{G}_{\text{ele}}$.
Let $n_{\text{anc}}$, $D$ and $C$ be the number of ancillary qubits, circuit depth and total number of gates in $\mathcal{G}_{\text{ele}}$ required to approximate SAIM in Eq.~\eqref{eq:ofh} with any accuracy $\varepsilon<1$. Then, we have $(n+n_{\text{anc}})D=\Omega(2^nn)$ and $C=\Omega(2^{n})$.
\end{theorem}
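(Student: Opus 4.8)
The plan is to combine the capacity bounds of Lemma~\ref{lm:CC} and Lemma~\ref{lm:cancd} with a counting lower bound on the number of distinct circuit unitaries that any construction of the SAIM must contain. I would restrict attention to $1$-sparse matrices and to the oracle $O_F$ alone, since a lower bound for $O_F$ immediately lower-bounds the cost of the full SAIM. For a $1$-sparse $H$ whose nonzero pattern is a full permutation, the map $x\mapsto F(x,0)=:\pi(x)$ is a permutation of $\{0,\dots,N-1\}$, and by Eq.~\eqref{eq:ofh} the oracle acts on the active ($k=0$) subspace as $O_F^{\pi}|x,0\rangle_{\text{idx}}=|x,\pi(x)\rangle_{\text{idx}}$. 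Distinct permutations give distinct oracles, so there are at least $N!$ SAIMs to be realized. The goal is to show that $\Omega(N!/K^{N})$ distinct unitaries are needed to cover all of them within error $\varepsilon<1$, where $K$ is a constant depending only on $\varepsilon$; taking logarithms then yields $\log(\text{count})=\Omega(N\log N)=\Omega(2^{n}n)$, which is exactly the capacity that Lemmas~\ref{lm:CC} and~\ref{lm:cancd} must supply.

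Next I would make the covering argument precise. Writing $M=(\langle 0^{n_{\text{anc}}}|\otimes\mathbb{I})\tilde U(|0^{n_{\text{anc}}}\rangle\otimes\mathbb{I})$ for the encoded block of a candidate circuit $\tilde U$, an $\varepsilon$-approximation in the sense of Eq.~\eqref{eq:bl} means $\|M-O_F^{\pi}\|\leqslant\varepsilon$, hence $\|M|x,0\rangle-|x,\pi(x)\rangle\|\leqslant\varepsilon$ for every row $x$. Since $\|M\|\leqslant1$ we have $\|M|x,0\rangle\|\geqslant 1-\varepsilon$, and a short computation gives $\mathrm{Re}\,\langle x,\pi(x)|M|x,0\rangle\geqslant 1-\varepsilon$, so the column $\pi(x)$ satisfies $|\langle x,\pi(x)|M|x,0\rangle|^{2}\geqslant(1-\varepsilon)^{2}$. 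Because $\sum_{y}|\langle x,y|M|x,0\rangle|^{2}\leqslant1$, at most $K:=\lfloor(1-\varepsilon)^{-2}\rfloor$ columns can carry weight $\geqslant(1-\varepsilon)^{2}$; thus, for a fixed $\tilde U$, each row admits at most $K$ candidate values of $\pi(x)$, and the number of permutations that $\tilde U$ can simultaneously approximate is at most $K^{N}$. Consequently any construction covering all $N!$ oracles uses at least $N!/K^{N}$ distinct unitaries, and since $K=O(1)$ for fixed $\varepsilon<1$ we obtain $\log(N!/K^{N})=N\log N-N\log K-O(N)=\Omega(2^{n}n)$.

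Finally I would feed this into the two capacity lemmas. All the distinct circuits lie in $\mathcal{G}_{C}$ (resp.\ $\mathcal{G}'_{n_{\text{anc}},D}$), so $\log|\mathcal{G}_{C}|\geqslant\Omega(2^{n}n)$ and $\log|\mathcal{G}'_{n_{\text{anc}},D}|\geqslant\Omega(2^{n}n)$. Lemma~\ref{lm:cancd} then gives $D(n+n_{\text{anc}})=\Omega(2^{n}n)$ directly. For the gate count, Lemma~\ref{lm:CC} gives $C\log(C+n)=\Omega(2^{n}n)$; a two-case split finishes it, since if $C\geqslant 2^{n}$ the claim is immediate, while if $C<2^{n}$ then $\log(C+n)=O(n)$ forces $C=\Omega(2^{n})$. (Replacing the data-qubit count $n$ by the $O(n)$ registers idx and wrd does not affect these asymptotics.)

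The main obstacle is the covering step under the relatively weak promise $\varepsilon<1$. A naive pairwise-separation argument---two distinct permutation oracles differ by $\sqrt{2}$ on some basis state---only forbids sharing a circuit when $2\varepsilon<\sqrt{2}$, i.e.\ $\varepsilon<1/\sqrt{2}$. The per-row candidate-counting bound above is what upgrades this to all $\varepsilon<1$: even though a single $M$ may now be consistent with exponentially many ($K^{N}$) permutations, the factorial $N!\sim N^{N}$ dominates $K^{N}$ for constant $K$, so the logarithm of the count remains $\Theta(N\log N)$. Care is also needed because $M$ is merely a contraction ($\|M\|\leqslant1$) rather than a unitary, which is precisely why I would use the clean bound $\mathrm{Re}\,\langle x,\pi(x)|M|x,0\rangle\geqslant1-\varepsilon$ (via $\|M|x,0\rangle\|\geqslant1-\varepsilon$) rather than a fidelity estimate on a normalized output.
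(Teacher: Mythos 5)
Your proof is correct, and its skeleton --- lower-bound the number of distinct unitaries that any family of approximating circuits must contain, then feed that count into the capacity bounds of Lemma~\ref{lm:CC} and Lemma~\ref{lm:cancd}, with the same two-case split to extract $C=\Omega(2^{n})$ from $C\log(C+n)=\Omega(2^{n}n)$ --- is the same as the paper's. The difference lies entirely in the covering step. The paper works with the XOR-type value oracle $O_H$ restricted to $1$-sparse, $1$-digit matrices and argues via its Lemma~\ref{lm:mono} (pairwise separation $\geqslant 2\varepsilon$ forces one distinct approximator per target): it claims $\|O_{H_1}-O_{H_2}\|\geqslant 2$, which indeed holds for XOR oracles in operator norm (the factor $2$ is attained by phase kickback on a $|-\rangle$ word state, though the paper's phrasing in terms of a computational-basis input would only give $\sqrt{2}$), and concludes $\Omega(N!)$ distinct circuits are needed for any $\varepsilon<1$. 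You instead work with the in-place oracle $O_F$, where --- as you correctly observe --- the naive pairwise separation on basis states is only $\sqrt{2}$ and would cap the argument at $\varepsilon<1/\sqrt{2}$; your per-row candidate-counting bound (each encoded block $M$ is consistent with at most $K^{N}$ permutations, $K=\lfloor(1-\varepsilon)^{-2}\rfloor=O(1)$) repairs this, since $N!/K^{N}$ still has logarithm $\Omega(N\log N)$. Your route is therefore somewhat more robust: it does not rely on the factor-of-$2$ separation special to XOR oracles, it correctly treats the encoded block as a contraction rather than a unitary, and it would survive even if only $O_F$ were available; the paper's route is shorter but leans on the stronger separation of $O_H$. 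Both arguments are sound and yield identical asymptotics.
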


A similar result is also obtained for the block-encoding of sparse matrix as follows.

\begin{theorem}\label{th:hd_spbe}
Given an arbitrary finite two-qubit elementary gate set $\mathcal{G}_{\text{ele}}$.
Let $n_{\text{anc}}$, $D$ and $C$ be the number of ancillary qubits, circuit depth and total number of gates in $\mathcal{G}_{\text{ele}}$ required to construct the block-encoding of $H$ with any accuracy $\varepsilon<2$.
 Then, we have $(n+n_{\text{anc}})D=\Omega(N)$ and $C=\Omega(N^\alpha)$ for arbitrary $\alpha\in(0,1)$.
\end{theorem}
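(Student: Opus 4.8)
The plan is to prove both bounds by a packing (counting) argument of exactly the type already used for the SAIM in Theorem~\ref{th:lbsm}, but with a family of targets engineered so that the admissible accuracy can be pushed all the way up to $\varepsilon<2$. Concretely, I would first exhibit an explicit family of sparse matrices whose block-encodings are pairwise incompatible, then lower bound the number of distinct unitaries this forces, and finally feed that number into Lemmas~\ref{lm:CC} and~\ref{lm:cancd}. To avoid clashing with the exponent $\alpha$ appearing in the statement, I will write the block-encoding normalization factors as $\beta,\beta'>0$.

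For the family I would take, for each sign vector $s\in\{+1,-1\}^{N}$, the diagonal matrix $H^{(s)}=2\,\mathrm{diag}(s_0,\dots,s_{N-1})$. Each $H^{(s)}$ is $1$-sparse, its entries are the single-digit integers $\pm2$, so it is an admissible target, and there are $2^{N}$ of them. The key point is the entry magnitude $2$ rather than $1$: this is precisely what raises the incompatibility threshold to $\varepsilon<2$. Suppose a single unitary $\tilde U$ were simultaneously an $(\beta,n_{\text{anc}},\varepsilon)$- and an $(\beta',n_{\text{anc}},\varepsilon)$-block-encoding of two distinct members $H^{(s)}\neq H^{(s')}$, and let $M=(\langle0^{n_{\text{anc}}}|\otimes\mathbb{I}_{N})\tilde U(|0^{n_{\text{anc}}}\rangle\otimes\mathbb{I}_{N})$ be the common encoded block. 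Choosing a coordinate $i$ where the signs differ, say $H^{(s)}_{ii}=+2$ and $H^{(s')}_{ii}=-2$, and setting $m=M_{ii}$, the fact that every matrix entry is bounded in modulus by the spectral norm turns Eq.~\eqref{eq:bl} into $|2-\beta m|\le\varepsilon$ and $|-2-\beta' m|\le\varepsilon$. For $\varepsilon<2$ the first disk lies strictly in the open right half-plane, forcing $\mathrm{Re}(\beta m)\ge 2-\varepsilon>0$ and hence $\mathrm{Re}(m)>0$ since $\beta>0$; the second disk lies strictly in the open left half-plane, forcing $\mathrm{Re}(\beta' m)\le \varepsilon-2<0$ and hence $\mathrm{Re}(m)<0$ since $\beta'>0$ — a contradiction. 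Thus distinct $H^{(s)}$ require distinct block-encoding unitaries, so at least $2^{N}$ distinct unitaries are needed to cover the family.

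It then remains to plug this into the capacity lemmas. All these unitaries act on $n+n_{\text{anc}}$ qubits and are assumed constructible with depth $D$ (resp. gate count $C$), so $|\mathcal{G}'_{n_{\text{anc}},D}|\ge 2^{N}$ and $|\mathcal{G}_{C}|\ge 2^{N}$. Lemma~\ref{lm:cancd} gives $c\,D(n+n_{\text{anc}})\ge\log 2^{N}=N$ for a constant $c$, i.e.\ $(n+n_{\text{anc}})D=\Omega(N)$. Lemma~\ref{lm:CC} gives $C\log(C+n)=\Omega(N)=\Omega(2^{n})$; fixing any $\alpha\in(0,1)$, if $C=O(N^{\alpha})$ then $\log(C+n)=O(n)$ and the left-hand side would be $O(2^{\alpha n}n)=o(2^{n})$, contradicting the $\Omega(2^{n})$ bound, so $C=\Omega(N^{\alpha})$ for every $\alpha\in(0,1)$.

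The main obstacle I anticipate is the per-target freedom in the normalization factor: because each $H^{(s)}$ may be encoded with its own $\beta$, one cannot simply invoke a triangle inequality of the form $\|H^{(s)}-H^{(s')}\|\le2\varepsilon$ (a naive unit-magnitude sign family only yields $\varepsilon<1$ once $\beta$ is allowed to vary). Resolving this is exactly what the magnitude-$2$ choice accomplishes: it confines the two admissible disks to opposite open half-planes for all $\varepsilon<2$, so the positivity of $\beta$ and $\beta'$ alone produces the contradiction, no matter how the normalizations are chosen. A secondary point to handle carefully is the clean translation of ``resources required'' into the statements $|\mathcal{G}'_{n_{\text{anc}},D}|\ge 2^{N}$ and $|\mathcal{G}_{C}|\ge 2^{N}$, which follows because every member of the family must be constructible within the claimed resources while mapping to a distinct unitary.
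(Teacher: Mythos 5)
Your proof is correct, and its skeleton matches the paper's: both pack $2^N$ diagonal sign matrices into the capacity bounds of Lemmas~\ref{lm:CC} and~\ref{lm:cancd} and then extract $(n+n_{\text{anc}})D=\Omega(N)$ and $C=\Omega(N^{\alpha})$ in the same way. The genuine difference is how the per-target freedom in the normalization factor is neutralized. The paper proves a general covering lemma (Lemma~\ref{lm:mono_2}) for unit-norm matrices that are pairwise $4\varepsilon$-separated: it pins the norm of the encoded block to $[1-\varepsilon,1+\varepsilon]$, deduces that the ratio of two admissible normalizations is close to $1$, and reaches a contradiction via the triangle inequality. Applied to the $\pm1$ diagonal family (separation $2$), this only covers $\varepsilon$ up to roughly $1/2$ (and the lemma itself carries a ``sufficiently small $\varepsilon$'' caveat), even though the theorem is stated for any $\varepsilon<2$; since a coarse block-encoding need not imply a fine one, that range matters. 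Your entrywise half-plane argument on the $\pm2$ family sidesteps the norm-pinning step entirely---the positivity of the two normalization factors alone forces $\mathrm{Re}(M_{ii})$ to be simultaneously positive and negative---so it delivers the full claimed range $\varepsilon<2$ and is, in that respect, tighter than the paper's own argument. The remaining bookkeeping (reading off $|\mathcal{G}_C|,\ |\mathcal{G}'_{n_{\text{anc}},D}|\geqslant 2^{N}$ and converting $C\log(C+n)=\Omega(2^{n})$ into $C=\Omega(N^{\alpha})$) coincides with the paper's.
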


Theorem.~\ref{th:lbsm},~\ref{th:hd_spbe} imply that a general sparse matrix can not be encoded with subexponential quantum gates, for both SAIM and block-encoding.
It is possible to trade ancillary qubit numbers for the circuit depth. However, the space and time complexities can not achieve sub-exponential scaling simultaneously. The hardness of SAIM can be interpreted as follows. Although $H$ is assumed to be sparse ($O(1)$ nonzero elements at each row and column), there are still totally $2^n\times O(1)=O(2^n)$ number of independent variables in total. Therefore, the quantum circuit should be large enough to contain exponential number of elementary gates.

We note that the quantum circuits capacity for ancillary-free case has been studied in Section 4.5.4 of~\cite{Nielsen.02}. Moreover, a related result to Theorem.~\ref{th:lbsm} has obtained in~\cite{Jaques.23}, which gives a distinct quantum circuit number lower bound with fixed qubit number, and show that there exists a table of size $N$ requiring $\Omega(N)$ gate count. Ref.~\cite{Nielsen.02} allows approximated implementations, but does not consider ancillary qubit usage. Ref.~\cite{Jaques.23} implicitly allows ancillary qubits, but does not consider approximated implementations. On the contrary, our results are more general, because both ancillary qubit usage and approximated implementations are allowed. Our results can be generalized from unitary to quantum channels. In Supplementary Discussion 1, we show that the circuit capacity and circuit lower bound are similar if we consider two-qubit quantum channels as elementary quantum operations, which can include measurement and feedback controls.

 $\\$
\noindent\textbf{Quantum state preparation}

\noindent Quantum state preparation is a critical step of our access model construction and of independent interest. We say that a $(n+n_{\text{anc}})$ qubit unitary $G$ prepares the $n$-qubit quantum state $|\psi\rangle$ with accuracy $\varepsilon$ if 
\begin{align}
G(|0^n\rangle\otimes|0^{n_{\text{anc}}}\rangle)=|\tilde{\psi}\rangle\otimes|0^{n_{\text{anc}}}\rangle
\end{align}
for some $\left\||\psi\rangle-|\tilde{\psi}\rangle\right\|\leqslant\varepsilon$. 

Such a problem has been studied extensively~\cite{Long.01,Grover.02,Mottonen.05,Plesch.11,Zhicheng.21,Zhang.21,Sun.21,Rosenthal.21,Zhang.22,Clader.22,Yuan.22,Ashhab.22,Gui.23}. When given sufficiently large among of ancillary qubits, the optimal Clifford$+T$ depth $O(n+\log(1/\varepsilon))$ can be achieved~\cite{Gui.23}. However, with restricted ancillary qubit number, the optimal circuit depth has not been reached. For example, with $O(n)$ ancillary qubits, the best-known Clifford$+T$ construction has achieved $O((N/n)\log(N/\varepsilon))$ circuit depths~\cite{Sun.21}. Besides, for gate count scaling, all existing algorithms have either $O(N\text{poly}(n))$ or $O(N\text{polylog}(n))$ Clifford$+T$ count. It remains an outstanding question if a linear gate count scaling with respect to the data dimension $N$ can be reached. 

\begin{table}[h]
\caption{Clifford+T complexities of $n$-qubit state preparation protocols with fixed accuracy $\varepsilon$ and total qubit (data qubit $+$ ancillary qubit) number $O(n)$.
The $\varepsilon$ scaling of Clifford$+T$ count and depth are $O(\log(1/\varepsilon))$ for all protocols. The $\varepsilon$ scaling of qubit number is $O(\log(1/\varepsilon))$ for Ref~\cite{Low.18} and $O(1)$ for all other schemes. Ref~\cite{Low.18} also minimize $T$ complexities.
\label{tab:1}}
\begin{ruledtabular}
\begin{tabular}{ccccccc}
Protocols &Count &Depth\\
\hline
Ref~\cite{Long.01,Grover.02,Mottonen.05,Plesch.11}& $O\left(N\text{poly}(n)\right)$& $O\left(N\text{poly}(n)\right)$ \\
Ref~\cite{Low.18}& $O\left(N\log(n)\right)$& $O\left(N\right)$ \\
Ref~\cite{Sun.21}& $O\left(Nn\right)$& $O\left(N\right)$ \\
Theorem.~\ref{th:sp_main}&  $\bm{O\left(N\right)}$&$\bm{O\left(N\frac{\log n(\log\log n)}{n}\right)}$
\end{tabular}
\end{ruledtabular}
\end{table}

\begin{table}[h]
\caption{Clifford+T complexities of $n$-qubit state preparation protocols with fixed $\varepsilon$ and exponential ancillary qubits. The $\varepsilon$ scaling of Clifford$+T$ count and depth are $O(\log(1/\varepsilon))$ for all schemes. Total qubit numbers are $O(N\text{poly}(n))$ for   Ref~\cite{Rosenthal.21}, $O(N\log(1/\varepsilon))$ for Ref~\cite{Low.18}, and $O(N)$ for  all other protocols. Ref~\cite{Low.18} and Ref~\cite{Clader.22} also minimize $T$ complexities. Protocols labelled $^{*}$ only require sparse connectivity, i.e. each qubit connect to $O(1)$ of other qubits. 
\label{tab:2}
}
\begin{ruledtabular}
\begin{tabular}{ccccccc}
Protocols &Count& Depth\\
\hline
Ref~\cite{Sun.21,Rosenthal.21}& $O(Nn)$& $O(n^2)$\\
Ref~\cite{Low.18,Clader.22}& $O(N\log n)$& $O(n^2)$ \\
Ref~\cite{Zhang.22}$^{*}$&  $O(N\log n)$& $O(n\log n)$ \\
Ref~\cite{Gui.23}& $O(N\log n)$& $\bm{O(n)}$\\
Theorem.~\ref{th:sp_main}$^{*}$&  $\bm{O\left(N\right)}$ & $O\left(n\log n \right)$
\end{tabular}
\end{ruledtabular}
\end{table}

Here, we provide a family of improved quantum state preparation protocols with tunable ancillary qubit number. The result is summarized in below (follows directly from Theorem.~\ref{th:stsp} in Methods).

\begin{theorem}\label{th:sp_main}
With $n_{\text{anc}}$ ancillary qubits where $\Omega(n)\leqslant n_{\text{anc}}\leqslant O(N)$, an arbitrary $n$-qubit quantum state can be prepared to accuracy $\varepsilon$ with $O(N\log(1/\varepsilon))$ count and $\tilde O\left(N\log(1/\varepsilon)\frac{\log(n_{\text{anc}})}{n_{\text{anc}}}\right)$ depth of Clifford$+T$ gates, where $\tilde O$ suppresses the doubly logarithmic factors of $n_{\text{anc}}$. 
\end{theorem}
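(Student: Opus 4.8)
The plan is to build the state-preparation unitary from a cascade of \emph{multiplexed} (uniformly controlled) rotations and to show that the total work across the cascade is only linear in $N$. First I would write the target state in the standard binary-tree form: acting on the data qubits one at a time, the $(k+1)$-th qubit is rotated by an amplitude angle $\theta_x$ and given a relative phase $\phi_x$ that depend on the length-$k$ prefix $x\in\{0,1\}^{k}$ of the already-processed qubits. There are $\sum_{k=0}^{n-1}2^{k}=O(N)$ such angles in total, and any $n$-qubit pure state is reproduced exactly by the corresponding sequence of multiplexed $R_y$ and $R_z$ gates. The task therefore reduces to implementing each multiplexed rotation cheaply.

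The key to the linear gate count is to avoid synthesizing $2^{k}$ separate controlled rotations at level $k$. Instead I would precompute each angle classically to $b=O(\log(1/\varepsilon))$ bits, store the $2^{k}$ angles of that level in a classical table, and load the relevant entry into a $b$-qubit ancilla register with a data-lookup (QROM) oracle conditioned on the $k$ prefix qubits; the loaded angle is then applied to the target qubit by the phase-gradient technique (a bounded number of additions into a one-time phase-gradient resource state), after which the lookup is uncomputed. A unary-iteration realization of the lookup uses $O(2^{k})$ Toffoli gates for address decoding and $O(2^{k}b)$ CNOTs for read-out, so level $k$ costs $O(2^{k}b)$ gates and the rotation overhead is only $O(b)$. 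Summing over $k$ gives $\sum_{k=0}^{n-1}O(2^{k}b)=O(Nb)=O(N\log(1/\varepsilon))$, the claimed count, which is near-optimal in light of the capacity bound of Lemma~\ref{lm:CC}.

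For the depth I would parallelize this same construction across the available $n_{\text{anc}}$ ancillas via a space--time trade-off for the data-lookup: with more ancillas the unary iteration at level $k$ is replaced by a balanced select/swap network whose depth is smaller than its gate count by a factor of order $n_{\text{anc}}/\log n_{\text{anc}}$, the logarithmic overhead coming from the fan-out needed to broadcast control bits and the fan-in tree used to recombine the parallel branches. Load-balancing the total $O(N\log(1/\varepsilon))$ gates over $\Theta\!\left(n_{\text{anc}}/\log n_{\text{anc}}\right)$ concurrent wires then yields depth $\tilde O\!\left(N\log(1/\varepsilon)\,\log(n_{\text{anc}})/n_{\text{anc}}\right)$, with $\tilde O$ absorbing the doubly logarithmic recombination cost; specializing to $n_{\text{anc}}=\Theta(n)$ and $n_{\text{anc}}=\Theta(N)$ recovers the two regimes of Tables~\ref{tab:1} and~\ref{tab:2}. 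The admissible window $\Omega(n)\le n_{\text{anc}}\le O(N)$ arises because at least $\Omega(n)$ ancillas are needed to hold the prefix and angle registers, whereas beyond $O(N)$ ancillas the depth can no longer drop once every table entry owns a wire.

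The remaining ingredient is the error analysis, and the main obstacle is to control it with only $b=O(\log(1/\varepsilon))$ bits rather than the naive $O(\log(N/\varepsilon))$. A crude triangle-inequality bound over the $O(N)$ rotations would force the larger precision; I would instead bound the deviation in $\ell_2$ norm level by level, using that the per-level rotations act on disjoint branches so their rounding errors combine in quadrature rather than additively, which keeps the precision at $O(\log(1/\varepsilon))$. Beyond this, the two genuinely technical points that I expect to be the crux are (i) exhibiting a balanced lookup network that attains the $\log(n_{\text{anc}})/n_{\text{anc}}$ depth-to-count ratio \emph{uniformly} across the whole range of $n_{\text{anc}}$ without inflating either the ancilla budget or the gate count, and (ii) verifying that every ancilla is returned to $|0\rangle$ so that the overall map is a clean unitary on the data register. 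The detailed construction realizing all of this is supplied by the Methods result (Theorem~\ref{th:stsp}), and the stated count and depth follow by reading them off.
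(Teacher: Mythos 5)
Your overall architecture (binary-tree multiplexed $R_y/R_z$ rotations, with a space--time trade-off in how the per-level work is laid out) matches the paper's, but your mechanism for the rotations is genuinely different: you use a QROM angle lookup plus phase-gradient addition in the style of Ref.~\cite{Low.18}, whereas the paper directly synthesizes each controlled rotation with Ref.~\cite{Selinger.12} and obtains the linear count from a \emph{non-uniform error budget}. That difference is where your argument has a real gap. Your quadrature observation only addresses accumulation \emph{within} a level --- and there it is actually unnecessary, since a multiplexed rotation and its approximation differ by a block-diagonal operator whose norm is the \emph{maximum} (not the sum, nor the quadrature sum) of the per-branch errors. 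The accumulation you have not dealt with is \emph{across} the $O(n)$ levels, where the triangle inequality is unavoidable: with a uniform per-rotation precision of $b$ bits the total error is $O(n\,2^{-b})$, forcing $b=O(\log(n/\varepsilon))$ and hence a count of $O(N\log(n/\varepsilon))$. That is exactly the scaling of Ref.~\cite{Low.18} in Table~1, i.e.\ the bound the theorem is improving upon, not the claimed $O(N\log(1/\varepsilon))$.

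The missing idea is the paper's level-dependent allocation of the error budget: it sets $\varepsilon_j=\varepsilon/2^{n-j+1}$ (and $\varepsilon_L=K\varepsilon/(n-L+1)^2$ in the tree-based variant), so that early levels, which contain few rotations, are synthesized to high precision while deep levels, which contain exponentially many rotations, are synthesized coarsely. The count then becomes $\sum_j 2^j\bigl((n-j+1)+\log(1/\varepsilon)\bigr)$, and the telescoping sum $\sum_j 2^j(n-j)=O(2^n)$ absorbs the extra precision bits, yielding $O(N\log(1/\varepsilon))$. You could graft the same idea onto your QROM construction by making the stored word length $b_k$ depend on $k$ in the same geometric way, but as written your uniform-$b$ analysis does not reach the stated count. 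Your depth argument and the ancilla-range discussion are plausible at the sketch level and broadly parallel the paper's hybrid of a sequential outer multiplexor with a parallel, $O(m)$-ancilla inner preparation (Theorem~\ref{th:stsp}); the one point I would press you on there is that the standard select/swap construction trades \emph{count} against ancillas, so you still need to exhibit the specific balanced network whose depth-to-count ratio is $\log(n_{\text{anc}})/n_{\text{anc}}$ uniformly over the whole range $\Omega(n)\leqslant n_{\text{anc}}\leqslant O(N)$, rather than deferring that to the paper's own Methods theorem.
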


Theorem.~\ref{th:sp_main} achieves linear scaling of Clifford$+T$ count with respect to $N$, and this is applied for arbitrary space complexity. 
When $n_{\text{anc}}=O(n)$, the circuit depth is lower than the best-known result of $O\left(N\frac{\log(N/\varepsilon)}{n}\right)$.
Moreover, compared to~\cite{Sun.21} which also study the space-time trade-off of state preparation, our method improves the circuit depth scaling for a factor of $\tilde O(n_{\text{anc}}/\log n_{\text{anc}})$. Summary of some representative state preparation protocols are provided in Table.~1 and Table.~2.

The main idea of our construction is as follows (see also Fig.~1). For $n_{\text{anc}}=O(n)$, we construct the quantum state with a set of uniformly controlled rotations (UCR) with the method in~\cite{Mottonen.05}. Instead of decomposing each UCR with identical accuracy, we distribute the decomposition error in an optimized way. UCR with $m$ controlled qubits, denoted as $m$-UCR, should be decomposed into $2^m$ number of $m$-qubit controlled rotations. When performing Clifford$+T$ decomposition, to reduce the total circuit complexity, we allow larger decomposition accuracy when $m$ becomes larger. 

For $n_{\text{anc}}=O(N)$, we improve the Clifford$+T$ decomposition of the method in~\cite{Zhang.22} in a similar way. In both cases, the gate count scaling $O(N\log(1/\varepsilon))$ is achieved. For arbitrary ancillary qubit number between two extreme cases, we provide a scheme combing two protocols together, which allows space-time trade-off. Details of our state preparation scheme and the corresponding complexity analysis are provided in Methods.  
We also note that our protocol for few qubit case can be combined with the depth-optimal scheme in~\cite{Gui.23}. The circuit depth can then be improved to $O(N\log(1/\varepsilon)n_{\text{anc}}/\log(n_{\text{anc}}))$, at the cost of higher gate count.

We note that when the quantum state is sparse, the circuit complexity will be significantly lower. The construction of sparse state preparation is useful for sparse block-encoding. Details about sparse state preparation and sparse matrix block-encoding are provided in Supplementary Discussion 2. 

$\\$
\noindent\textbf{Other useful subroutines}

\begin{table}[h]
\caption{Summary of the Clifford$+T$ circuit complexities of the operations serving as subroutines in this work. Suppose the subroutine has $n_{\text{dat}}$ data qubit, in the last column, the first row corresponds to the circuit depth when there are $n_{\text{anc}}=O(n_{\text{dat}})$ ancillary qubit; the second row corresponds to the circuit depth without qubit number restriction. $\tilde{\tilde{O}}$ suppresses the doubly logarithmic factors with respect to $n$ and $s$. SP, SOPS, SBM and SSP correspond to state preparation, select oracle for Pauli strings, sparse Boolean memory and sparse state preparation (Supplementary Discussion 2), respectively.
\label{tab:0}}
\begin{ruledtabular}\footnotesize
\begin{tabular}{ccccccc}
Operations &Count &Depth\\
\hline
\multicolumn{1}{c}{\multirow{2}{*}{SP (Thm.~\ref{th:sp_main})}} & \multicolumn{1}{c}{\multirow{2}{*}{$O(N\log(1/\varepsilon))$}}&$\tilde{\tilde{O}}\left(N\frac{\log(n)\log(1/\varepsilon)}{n}\right)$ \\
\multicolumn{1}{c}{}&\multicolumn{1}{c}{}& $\tilde{\tilde{O}}(n\log(1/\varepsilon))$   \\
\hline
\multicolumn{1}{c}{\multirow{2}{*}{SOPS (Thm.~\ref{th:sop})}} & \multicolumn{1}{c}{\multirow{2}{*}{$O(ML)$}}&$O\left(ML\frac{\log(m+L)}{m+L}\right)$ \\
\multicolumn{1}{c}{}&\multicolumn{1}{c}{}& $O\left(m+\log L\right)$   \\
\hline
\multicolumn{1}{c}{\multirow{2}{*}{SBM (Thm.~\ref{th:sbm})}} & \multicolumn{1}{c}{\multirow{2}{*}{$O(ns\tilde n)$}}&$O(\log(n)s\tilde n)$ \\
\multicolumn{1}{c}{}&\multicolumn{1}{c}{}& $O\left(\log(ns\tilde n)\right)$\\
\hline
\multicolumn{1}{c}{\multirow{2}{*}{SSP (Thm.~9)}} & \multicolumn{1}{c}{\multirow{2}{*}{$O(s(n\log s+\log(1/\varepsilon)))$}}&$\tilde{\tilde{O}}(s\log(s)\log(n/\varepsilon))$ \\
\multicolumn{1}{c}{}&\multicolumn{1}{c}{}& $\tilde{\tilde{O}}\left(\log(s)\log(1/\varepsilon)+\log n\right)$    
\end{tabular}
\end{ruledtabular}
\end{table}

\noindent Before discussing the construction of access models in Eq.~\eqref{eq:ofh} and Eq.~\eqref{eq:bl}, we introduce some other useful subroutines, including select oracle and quantum sparse Boolean memory. These operations may have applications individually in some other scenarios. 
For both operations, we obtain their space-time trade-off constructions, which have improved or comparable Clifford$+T$ complexities compared to the best-known realizations (see also Table.~3).

\begin{figure}[t]
    \centering
        \includegraphics[width=\columnwidth]{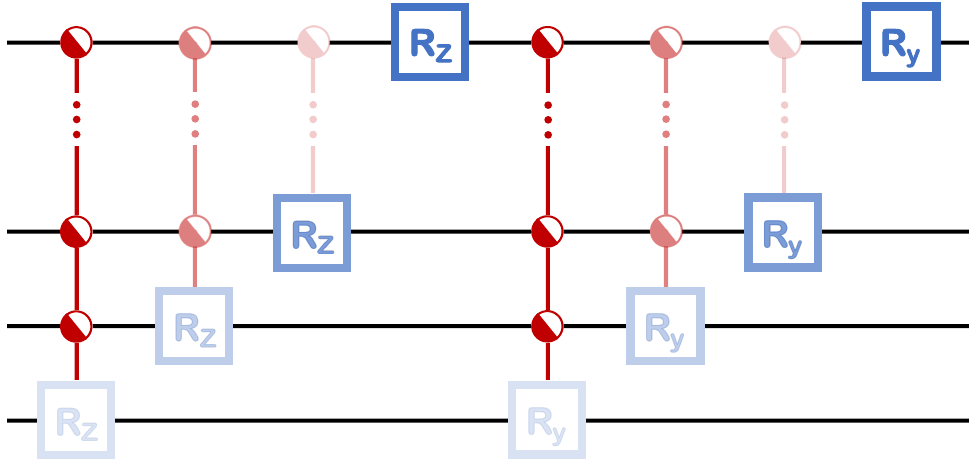}
       \caption{State preparation achieving $O(N)$ Clifford$+T$ count for few qubit case. The operation is decomposed into uniformly controlled Z- and Y-rotations, whose control and rotation parts are denoted with red and blue colors respectively. Each $m$-UCR is decomposed into $2^m$ multi-qubit controlled single-qubit rotations, and $m$ increases with the opacity of the control part (red). 
Each Z- or Y-rotation is decomposed into Clifford$+T$ gates, and the decomposition accuracy increases with the opacity of the rotation part (blue).}
       \label{fig:1}
\end{figure} 

$\\$
\noindent\textit{Select oracle for Pauli strings.} 
We consider a function of Pauli strings $H_x=\bigotimes_{l=1}^{L}H_{x,l}$, where $x\in\{0,1,\cdots,2^m-1\}$ and $H_{x,l}\in\{\pm I,\pm X,\pm Y,\pm Z\}$. We introduce two registers, the \textit{index register} contains $m$ qubits, and the \textit{word register} contains $L$ qubits. Select oracle for $H_x$ is defined as
\begin{align}\label{eq:slp}
\text{Select}(H_x)=\sum_{x=0}^{2^m-1}|x\rangle\langle x| \otimes H_x,
\end{align}
where $|x\rangle$ represents the computational basis of index register, and the unitary $H_x$ is applied at the word register. In other words, the state of index register controls the operations applied at the word register.

Several proposals of implementing Eq.~\eqref{eq:slp} has been introduced in the literature. For example, with $n_{\text{anc}}=m$ ancillary qubits, Ref~\cite{Childs.18} (Appendix G.4) proposed a method achieving $O(ML)$ circuit depth and gate count with $M=2^m$.  With $n_{\text{anc}}=O(ML)$ ancillary qubits, Eq.~\eqref{eq:slp} is a special form of the ``\textit{product unitary memory}'' in~\cite{Zhang.22}, which can be constructed with $O(\log(ML))$ depth and $O(ML)$ count of Clifford$+T$ gates. We provide an algorithm with tunable ancillary qubit number achieving the circuit complexity as follows.

\begin{theorem}\label{th:sop}
With $n_{\text{anc}}$ ancillary qubits where $\Omega(m+L)\leqslant n_{\text{anc}}\leqslant O(ML)$, Eq.~\eqref{eq:slp} 
can be realized with $O(ML)$ count and $O\left(ML\frac{\log n_{\text{anc}}}{n_{\text{anc}}}\right)$ depth of Clifford$+T$ gates.
\end{theorem}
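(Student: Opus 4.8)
The plan is to strip the Pauli structure of Eq.~\eqref{eq:slp} down to a single classical data-lookup (QROM) primitive, and then implement that primitive with a space--time trade-off that interpolates between the two extreme constructions of \cite{Childs.18} (few ancilla, unary iteration) and \cite{Zhang.22} (many ancilla, product-unitary memory). First I would factor each single-qubit factor as $H_{x,l}=\phi_{x,l}\,X^{a_{x,l}}Z^{b_{x,l}}$ with $\phi_{x,l}\in\{\pm1,\pm i\}$ and $a_{x,l},b_{x,l}\in\{0,1\}$ (using $Y=iXZ$). Since the tensor factors act on disjoint qubits, the whole string is $H_x=\phi(x)\big(\bigotimes_l X^{a_{x,l}}\big)\big(\bigotimes_l Z^{b_{x,l}}\big)$, so $\text{Select}(H_x)$ splits into three index-controlled pieces: a diagonal phase $\sum_x\phi(x)|x\rangle\langle x|$ that depends only on $x$; an $X$-type multiplexor $\sum_x|x\rangle\langle x|\otimes\bigotimes_l X^{a_{x,l}}$ that XORs the bitstring $a(x)\in\{0,1\}^L$ into the word register; and a $Z$-type multiplexor. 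The $Z$-type piece equals the $X$-type one conjugated by $H^{\otimes L}$ on the word register (as $Z=HXH$), while $\phi(x)$ is a two-bit function of $x$ applied to a phase, i.e. a lookup of word size $O(1)$ whose cost is dominated by the others. Thus it suffices to build the single primitive $|x\rangle|w\rangle\mapsto|x\rangle|w\oplus a(x)\rangle$ for a table $a:\{0,1\}^m\to\{0,1\}^L$ within $O(ML)$ count and $O\!\big(ML\log(n_{\text{anc}})/n_{\text{anc}}\big)$ depth.

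To realize this multiplexed XOR I would read $a(x)$ into a clean $L$-qubit ancilla, XOR it into the word register with $L$ CNOTs, and uncompute the read; the read $|x\rangle|0^L\rangle\mapsto|x\rangle|a(x)\rangle$ is a standard QROM, to which I apply a select--swap (QROAM) decomposition with trade-off parameter $\lambda$ chosen so that $\lambda L\approx n_{\text{anc}}$. Partition the $M$ entries into $M/\lambda$ buckets of $\lambda$ words, indexed by the high bits $x_h$ and low bits $x_l$. The \emph{select} stage uses unary iteration over the $M/\lambda$ values of $x_h$, in the manner of \cite{Childs.18}, to write the selected bucket of $\lambda L$ classical bits into a $\lambda L$-qubit register; the \emph{swap} stage routes the $x_l$-th word of that bucket into the output ancilla through a multiplexed-swap / product-memory network in the spirit of \cite{Zhang.22}. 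Summing the classical writes over all buckets gives $(M/\lambda)\cdot\lambda L=O(ML)$ gates for the select, $O(\lambda L)=O(n_{\text{anc}})$ for the swap, and the $L$ CNOTs into the word register plus uncomputation only double the count, so the total gate count is $O(ML)$, as claimed.

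For the depth I would parallelize both stages with the available ancilla. The unary-iteration controls are arranged as a balanced tree, so the one-hot flags cost depth $O(\log(M/\lambda))$, and within each selected bucket the $\lambda L$ data CNOTs target disjoint ancilla and run in depth $O(\log n_{\text{anc}})$ after fanning out the controlling flag through a copy tree. The residual bottleneck is that distinct high-index values write into the \emph{same} bucket register, forcing $O(M/\lambda)$ serial layers each of Clifford$+T$ depth $O(\log n_{\text{anc}})$, while the multiplexed swap contributes only $O(\log(\lambda L))$. Setting $\lambda L\approx n_{\text{anc}}$ (so $M/\lambda\approx ML/n_{\text{anc}}$) yields select depth $O\!\big((ML/n_{\text{anc}})\log n_{\text{anc}}\big)$, which dominates and matches the stated $O\!\big(ML\log(n_{\text{anc}})/n_{\text{anc}}\big)$. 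The two endpoints check out: $\lambda=\Theta(M)$, i.e. $n_{\text{anc}}=\Theta(ML)$, collapses the select to one bucket and reproduces the $O(\log(ML))$-depth product memory of \cite{Zhang.22}, while $\lambda=\Theta(1)$, i.e. $n_{\text{anc}}=\Theta(m+L)$, reduces to near-pure unary iteration as in \cite{Childs.18}.

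The principal difficulty is the depth accounting rather than the gate count. Because every index value writes into the same bucket register, the select stage is intrinsically serial across buckets, so the whole argument hinges on showing that \emph{everything else}---the control fan-out, the data writes, the swap tree, and their Clifford$+T$ compilations---parallelizes down to polylogarithmic depth, so that the serial part $O(M/\lambda)$ (times the per-layer $O(\log n_{\text{anc}})$ compilation overhead) sets the scaling. Equivalently, I must verify that the interpolation is tight across the entire range $\Omega(m+L)\le n_{\text{anc}}\le O(ML)$, which requires a careful joint analysis of how the \cite{Childs.18}-type select composes with the \cite{Zhang.22}-type swap, including bounding the $T$-depth of the multiply-controlled NOTs and confirming that the uncomputation inflates neither the count nor the depth beyond constant factors.
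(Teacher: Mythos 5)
Your proposal is correct and achieves the stated bounds, but it reaches them by a recognizably different route from the paper. The paper keeps the Pauli structure intact throughout: it first shows that a single controlled-$H_x$ costs $O(L)$ gates and $O(\log L)$ depth (by fanning the control out to $L$ ancillas with two $L$-Toffolis), then splits the index register into an $m_a$-bit part and an $m_b$-bit part and composes its two select-oracle lemmas directly --- the inner select over $M_a$ values is built with the bucket-brigade, exponential-ancilla construction of \cite{Zhang.22} (depth $O(m_a+\log L)$, ancilla $\approx M_aL$), and the outer select over $M_b$ values is the serial unary iteration of \cite{Childs.18}, giving depth $O(M_b(m_a+\log L))=O\!\left(ML\frac{\log n_{\text{anc}}}{n_{\text{anc}}}\right)$. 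You instead strip the problem down to a classical table lookup by writing $H_x=\phi(x)X^{a(x)}Z^{b(x)}$, handling the $Z$-part by Hadamard conjugation and the phase by an $O(1)$-word lookup, and then invoke a select--swap (QROAM) trade-off with bucket size $\lambda\approx n_{\text{anc}}/L$. The two constructions share the same underlying mechanism (split the index, serialize over one half, parallelize over the other, tune the split by $n_{\text{anc}}$), and your depth and count accounting is sound, including the observation that the serial bucket writes at $O(\log n_{\text{anc}})$ depth per layer dominate. What your reduction buys is modularity --- everything rides on a single, well-studied QROM primitive, and the $X/Z$/phase splitting would generalize to any select of operators with a classical description --- at the price of three passes over the data (the $X$-table, the $Z$-table, and the phase) and an extra $L$-qubit output buffer; the paper's direct composition applies each controlled Pauli once and needs no compute/uncompute of an intermediate classical register. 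One small caveat: your endpoint discussion sets $\lambda=\Theta(1)$ at $n_{\text{anc}}=\Theta(m+L)$, which only matches the claimed bound when $L\gtrsim m$; for $m\gg L$ you need $\lambda\approx m/L$, which your general prescription $\lambda L\approx n_{\text{anc}}$ already supplies, so this is a presentational slip rather than a gap.
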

Compared to the result in Ref~\cite{Childs.18}, our protocol reduces the circuit depth for a factor of $O\left(\frac{\log n_{\text{anc}}}{n_{\text{anc}}}\right)$ while maintaining the gate count scaling.
The proof of Theorem.~\ref{th:sop} and details of circuit constructions are provided in Methods. 

$\\$
\noindent\textit{Sparse Boolean memory.} 
We consider a sparse Boolean function $B:\{0,1\}^{n}\rightarrow\{0,1\}^{\tilde n}$, which has totally $s$ input digits $q$ satisfying $B(q)\neq0\cdots0$.  Given an $n$-qubit index register (denoted as idx) and a $\tilde n$-qubit register (denoted as wrd), we define the sparse Boolean memory $\text{Select}(B)$ as a unitary satisfying
\begin{align}\label{eq:sb}
\text{Select}(B)|q\rangle_{\text{idx}}|z\rangle_{\text{wrd}}=|q\rangle_{\text{idx}}|z\oplus B(q)\rangle_{\text{wrd}}.
\end{align}
We have the following result (see Methods for proof).
\begin{theorem}\label{th:sbm}
With $n_{\text{anc}}$ ancillary qubits where $\Omega(n)\leqslant n_{\text{anc}}\leqslant O(ns\tilde n)$, Select$(B)$ can be realized with $O(ns\tilde n)$ count and $O\left(ns\tilde n\frac{\log n_{\text{anc}}}{n_{\text{anc}}}\right)$ depth of Clifford$+T$ gates.
\end{theorem}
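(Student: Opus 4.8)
The plan is to realize $\text{Select}(B)$ by reducing it to a sparse state-preparation-like problem and invoking the same space-time trade-off machinery that underlies Theorem~\ref{th:sop} and the sparse-state protocol (SSP) in Table~\ref{tab:0}. Observe that $\text{Select}(B)$ implements a reversible lookup: on the $s$ inputs $q$ where $B(q)\neq 0\cdots 0$ it XORs a nontrivial $\tilde n$-bit word into wrd, and it acts trivially elsewhere. Since $B$ is supported on only $s$ points, the oracle is fully specified by $s$ pairs $(q_j, B(q_j))$ with $j\in\{0,\dots,s-1\}$, so the total classical information is $O(s(n+\tilde n))$ bits. The target count $O(ns\tilde n)$ therefore leaves generous room, and the essential task is to route the correct $\tilde n$-bit payload to wrd conditioned on the $n$-bit index matching one of the $s$ active addresses.

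First I would build, for each active address $q_j$, a comparator that tests whether the idx register equals $q_j$; each equality test over $n$ bits costs $O(n)$ gates and $O(\log n)$ depth (a balanced AND-tree), and produces a single flag. Second, conditioned on that flag I would write $B(q_j)$ into wrd via at most $\tilde n$ controlled-NOT-type operations, costing $O(\tilde n)$ gates each. Summing over the $s$ addresses gives $O(s(n+\tilde n))\subseteq O(ns\tilde n)$ elementary-gate work for the core logic; the residual factor is absorbed by the fan-out/uncomputation overhead of the comparator ancillas, yielding the stated $O(ns\tilde n)$ count. For the depth, the key is that the $s$ address-blocks can be parallelized: with only $\Omega(n)$ ancillas the blocks must be largely serialized, giving depth $O(ns\tilde n \cdot \log(n)/n)=O(\log(n) s\tilde n)$ as in the first depth row, whereas with $n_{\text{anc}}=O(ns\tilde n)$ ancillas one copies the index, runs all comparators and payload writes in parallel through a logarithmic-depth tree, and fans back in, giving depth $O(\log(ns\tilde n))$ as in the second row. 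The interpolating bound $O\!\left(ns\tilde n \frac{\log n_{\text{anc}}}{n_{\text{anc}}}\right)$ then follows by the standard trade-off argument: partition the $s$ blocks into $\Theta(n_{\text{anc}}/(n+\tilde n))$ parallel groups, each handled in a logarithmic-depth sub-tree, exactly as in the proofs of Theorem~\ref{th:sop} and Theorem~\ref{th:sp_main}.

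The main obstacle I expect is the \emph{uncomputation of the comparator ancillas} under the tight ancilla budget while preserving both the $O(ns\tilde n)$ count and the advertised depth. Every equality flag and every internal node of the AND-tree must be cleanly reset so that the ancillas can be reused across serialized groups; naive uncomputation doubles the count (acceptable) but, if done sequentially, can inflate the depth by a factor that breaks the clean $\log(n_{\text{anc}})/n_{\text{anc}}$ scaling. The careful part is therefore scheduling the compute-copy-uncompute phases so that payload-writing and comparator-resetting overlap across groups, and verifying that the ancilla count needed for a width-$w$ parallel layer is indeed $O(w(n+\tilde n))$ rather than something larger. I would handle this by reusing the parallelization primitive already established for $\text{Select}(H_x)$: since the Boolean-write gates here are simpler (classically controlled NOTs) than the Pauli operators there, the same ancilla-balanced tree applies verbatim, and the depth analysis transfers with $(m,L)$ replaced by the effective parameters governing the $s$ active addresses and the $\tilde n$-bit word.
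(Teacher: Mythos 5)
Your construction is correct, but it is a genuinely different route from the paper's. The paper does not build comparators at all: it takes as its base case the maximal-ancilla realization of Lemma~\ref{th:sbm_many} (the bucket-brigade-style construction imported from Ref.~\cite{Zhang.22}, costing $O(ns\tilde n)$ count and $O(\log(ns\tilde n))$ depth with $O(ns\tilde n)$ ancillas), and obtains the space--time trade-off by factoring $\text{Select}(B)$ into sub-oracles $\text{Select}(B_{l_{\min}:l_{\max}})$ over contiguous ranges of word bits and, at the low-ancilla end, further into $\text{Select}(B_{l,k_{\min}:k_{\max}})$ over blocks of active addresses, applying the lemma to each factor and serializing the factors to fit the ancilla budget. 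Your scheme instead builds the oracle from scratch: one $n$-bit equality flag per active address via a log-depth AND tree, a flag-controlled XOR of the $\tilde n$-bit payload, and uncomputation, parallelized by fanning out the index register across address groups. What your approach buys is self-containedness and a strictly better gate count, $O(s(n+\tilde n))$ rather than $O(ns\tilde n)$ (still consistent with the theorem, which is an upper bound); what the paper's approach buys is reuse of an already-verified primitive and uniformity with the other trade-off proofs (Theorems~\ref{th:sop} and~\ref{th:sp_main}). Two points in your write-up deserve tightening rather than hand-waving: (i) the $w$ parallel payload writes within a group all target the \emph{same} word register, so they must be accumulated in per-group scratch registers and merged by an XOR tree of depth $O(\log w)$ (sequential writes into wrd would reintroduce a factor of $w$ in depth); and (ii) the ancilla accounting for a width-$w$ layer is $O(w(n+\tilde n))$ including the index copies, flag trees, and payload scratch, which is what justifies taking $w=\Theta(n_{\text{anc}}/(n+\tilde n))$ groups and yields a depth bound dominated by $O\bigl(s(n+\tilde n)\log(n_{\text{anc}})/n_{\text{anc}}\bigr)\subseteq O\bigl(ns\tilde n\log(n_{\text{anc}})/n_{\text{anc}}\bigr)$. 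With those two details made explicit, your argument stands.
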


Different from SAIM, Eq.~\eqref{eq:sb} contains constant number of nonzero outputs. So its construction requires much less resource.

$\\$
\noindent\textbf{Construction of SAIM}

\noindent With all necessary tools ready, we now discuss the construction of the SAIM in Eq.~\eqref{eq:ofh}. We have the following result.

\begin{theorem}\label{th:saim}
Given $n_{\text{anc}}$ ancillary qubits where $\Omega(n)\leqslant n_{\text{anc}}\leqslant O(Nnds)$, $O_H$ can be constructed with $O(Nnds)$ count and $O\left(Nnds\frac{\log n_{\text{anc}}}{n_{\text{anc}}}\right)$ depth of Clifford$+T$ gates.

Given $n_{\text{anc}}$ ancillary qubits where $\Omega(n)\leqslant n_{\text{anc}}\leqslant O(Nns\log s)$, $O_F$ can be constructed with $O(Nns\log s)$ count and $O\left(Nns\log s\frac{\log n_{\text{anc}}}{n_{\text{anc}}}\right)$ depth of Clifford$+T$ gates.
\end{theorem}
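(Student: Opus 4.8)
The plan is to reduce both oracles to the sparse Boolean memory of Theorem~\ref{th:sbm}. For $O_H$, observe that the map $(x,y)\mapsto H_{x,y}$ is exactly a Boolean function $B_H:\{0,1\}^{2n}\to\{0,1\}^{d}$ read from the $2n$-qubit index register and written into the word register, i.e.\ precisely the action in Eq.~\eqref{eq:sb}. Since $H$ has at most $s$ nonzero entries per row and $N$ rows, at most $Ns$ of the $2^{2n}$ inputs $(x,y)$ give $B_H(x,y)\neq 0$, so $B_H$ is sparse with sparsity $Ns$. Applying Theorem~\ref{th:sbm} with input width $2n$, sparsity $Ns$, and output width $d$ yields count $O(2n\cdot Ns\cdot d)=O(Nnds)$ and depth $O\!\left(Nnds\,\frac{\log n_{\text{anc}}}{n_{\text{anc}}}\right)$ over the quoted range $\Omega(n)\le n_{\text{anc}}\le O(Nnds)$, which is the first claim.

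The oracle $O_F$ is harder because it is an in-place permutation rather than an XOR: for each row $x$ it sends the $\log s$-bit label $k$ to the $n$-bit column index $F(x,k)$, and since $F(x,\cdot)$ is injective its inverse $F^{-1}(x,\cdot)$ is well defined on the valid columns. I would implement it by the standard compute--uncompute scheme for a reversible bijection, using one $n$-qubit ancilla $a$ initialized to $0$: (i) compute the column index forward into the ancilla, so that $a=F(x,k)$; (ii) use the inverse memory to add $F^{-1}(x,a)$ into the label register, which erases it since $F^{-1}(x,a)=F^{-1}(x,F(x,k))=k$; and (iii) swap $a$ into the second half of the index register. After step (iii) the index register holds $|x,F(x,k)\rangle_{\text{idx}}$ and the ancilla is restored to $0$, realizing $O_F$ as in Eq.~\eqref{eq:ofh}.

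The step that dominates, and that fixes the $O(Nns\log s)$ budget, is (ii): the map $(x,y)\mapsto F^{-1}(x,y)$ is sparse, with only the $Ns$ valid row--column pairs giving a nonzero $k$, and its output is merely $\log s$ bits, so Theorem~\ref{th:sbm} with input width $2n$, sparsity $Ns$, and output width $\log s$ gives count $O(2n\cdot Ns\cdot \log s)=O(Nns\log s)$ and the matching depth. The main obstacle is step (i): its output $F(x,k)$ is $n$ bits wide, so treating it as a sparse memory through Theorem~\ref{th:sbm} would cost $O(Nn^2 s)$ and break the bound. The key observation is that the forward map is \emph{dense} on its natural domain, since all $Ns$ pairs $(x,k)$ are valid, so it should instead be realized by a unary-iteration lookup in which the address decoding is shared across all $n$ output bits, costing only $O(Nns)$ gates with the same depth trade-off. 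With steps (i) and (iii) thus subdominant to (ii), the total is $O(Nns\log s)$ count and $O\!\left(Nns\log s\,\frac{\log n_{\text{anc}}}{n_{\text{anc}}}\right)$ depth. The remaining routine checks are that padding rows with fewer than $s$ nonzeros to exactly $s$ dummy entries keeps $F$ and $F^{-1}$ mutually consistent, and that the ancillary-qubit budgets of the two lookups are compatible across the full range $\Omega(n)\le n_{\text{anc}}\le O(Nns\log s)$.
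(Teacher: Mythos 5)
Your proposal is correct and follows essentially the same route as the paper: $O_H$ is dispatched directly as a $2n$-index, $d$-word, $Ns$-sparse Boolean memory via Theorem~\ref{th:sbm}, and $O_F$ is built from a dense forward lookup of $F(x,k)$ into an $n$-qubit ancilla (the paper realizes your ``unary-iteration lookup'' by invoking the select-oracle construction of Theorem~\ref{th:sop}, costing $O(Nns)$), a swap with the $k$-half of the index register, and an uncomputation of $k$ by a $2n$-index, $\lceil\log_2 s\rceil$-word, $Ns$-sparse Boolean memory costing the dominant $O(Nns\log s)$. The only difference is the order of the swap and the uncomputation, which is immaterial, and you correctly identify the same key point the paper relies on: the forward map must not be treated as a sparse memory with $n$-bit output, or the bound degrades to $O(Nn^2s)$.
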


\begin{proof}

 $O_H$ corresponds to a $2n$-index, $d$-word and $Ns$-sparse Boolean function. So the construction of $O_H$ follows directly from Theorem.~\ref{th:sbm}.

The construction of $O_F$ can be realized in three steps. We introduce an $n$-qubit ancillary register (denoted as anc). In the first step, we perform the following transformation 
\begin{align}
|x,k\rangle_{\text{idx}}|0\rangle_{\text{anc}}\rightarrow|x,k\rangle_{\text{idx}}|F(x,k)\rangle_{\text{anc}}.
\end{align}
According to Theorem.~\ref{th:sop}, this step can be constructed with 
$O(2^nns)$ count and $O(2^{n}ns\frac{\log n_{\text{anc}}}{n_{\text{anc}}})$ depth with $n_{\text{anc}}$ ancillary qubits. In the second step, we apply swap gates between the ancillary register and half of the index register which encodes $k$, i.e. 
\begin{align}
|x,k\rangle_{\text{idx}}|F(x,k)\rangle_{\text{anc}}\rightarrow|x,F(x,k)\rangle_{\text{idx}}|k\rangle_{\text{anc}}
\end{align}
This step can be realized with $O(n)$ count and $O(1)$ depth of Clifford$+T$ gates. In the final step, we perform the transformation 
\begin{align}
|x,F(x,k)\rangle_{\text{idx}}|k\rangle_{\text{anc}}\rightarrow|x,F(x,k)\rangle_{\text{idx}}|0\rangle_{\text{anc}}
\end{align}
which can be realized by a $2n$-index, $\lceil\log_2s\rceil$-word and $Ns$-sparse Boolean memory. According to Theorem.~\ref{th:sbm}, this step can be constructed with $O(Nns\log s)$ count and $O\left(Nns\log s\frac{\log n_{\text{anc}}}{n_{\text{anc}}}\right)$ depth of Clifford$+T$ gates. The total gate complexity is therefore the combination of three steps above.

\end{proof}

Compared to the circuit complexity lower bound obtained in Theorem.~\ref{th:lbsm}, our protocol has nearly optimal circuit complexities with respect to the matrix dimension up to a factor of $n$. As mentioned before, SAIM is a standard access model in many quantum algorithms, and the query complexity to SAIM has been studied extensively for various tasks. With Theorem.~\ref{th:saim}, one can directly obtain the natural circuit complexity of those algorithms. Further discussions are provided in the DISCUSSION section.

$\\$
\noindent\textbf{Construction of LCU-based Block-encoding}

\noindent The construction of LCU-based block-encoding can be realized with quantum state preparation and select oracle~\cite{Low.17}. We define $\bm{\alpha}=[\alpha_1,\cdots,\alpha_{P}]$ and $|\bm{\alpha}\rangle=\sum_{p=1}^P\sqrt{\alpha_p}|p\rangle$. Let $G_{|\bm{\alpha}\rangle}$ be the state preparation unitary for $|\bm{\alpha}\rangle$, and we define $\mathbb{G}\equiv G_{|\bm{\alpha}\rangle}\otimes\mathbb{I}_{2^n}$. We then define a Select oracle corresponding to Eq.~\eqref{eq:lcugen} as
$\text{Select}(u_p)=\sum_{p=0}^{P-1}|p\rangle\langle p|\otimes u_p$. 
It can be verified that $\mathbb{G}^\dag\text{Select}(u_p)\mathbb{G}$ is a block-encoding of $H$ with normalization factor $\alpha=1$~\cite{Low.19}. The constructions of LCU-based block-encoding is then reduced to the quantum state preparation and Select$(u_p)$, both of which can be constructed with polynomial-size quantum circuits.

The exact circuit complexity of block-encoding depends on the specific form of $u_p$. We take the LCU for Pauli strings (Eq.~\eqref{eq:lcu}) as an example. Based on our improved quantum state preparation (Theorem.~\ref{th:sp_main}) and Select oracle for Pauli strings (Theorem.~\ref{th:sop}), we have the following result, where $(n_{\text{anc}},\varepsilon)$-block-encoding is the abbreviation of $(1,n_{\text{anc}},\varepsilon)$-block-encoding (see Methods section for proof).

\begin{theorem}\label{th:bl1}
With $n_{\text{anc}}$ ancillary qubits where $\Omega(\log_2P)\leqslant n_{\text{anc}}\leqslant O(NP)$, the $(n_\text{anc},\varepsilon)$-block-encoding of $H$ defined in Eq.~\eqref{eq:lcu} can be constructed with  $O\left(P(n+\log(1/\varepsilon))\right)$ count and $\tilde O\left(Pn\log(1/\varepsilon)\frac{\log n_\text{anc}}{n_\text{anc}}\right)$ depth of Clifford$+T$ gates, where $\tilde O$ suppresses the doubly logarithmic factors of $n_{\text{anc}}$. 
\end{theorem}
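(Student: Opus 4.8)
The plan is to realize the block-encoding through the standard LCU identity already recorded above, namely $\tilde U=\tilde{\mathbb{G}}^\dag\,\text{Select}(H_p)\,\tilde{\mathbb{G}}$ with $\tilde{\mathbb{G}}=G_{|\bm{\alpha}\rangle}\otimes\mathbb{I}_{2^n}$, and then to cost out its two constituent subroutines using results already proved. The first subroutine is the preparation of the $(\log_2 P)$-qubit amplitude state $|\bm{\alpha}\rangle=\sum_p\sqrt{\alpha_p}|p\rangle$, to which Theorem~\ref{th:sp_main} applies verbatim under the substitution $N\to P$. The second is $\text{Select}(H_p)=\sum_p|p\rangle\langle p|\otimes H_p$, which is exactly the select oracle of Theorem~\ref{th:sop} with index register size $m=\log_2 P$ (so $M=P$) and word register size $L=n$, the word register being identified with the $n$-qubit data register, since each $H_p$ is an $n$-qubit Pauli string. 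First I would instantiate these two theorems and record that, in isolation, state preparation to accuracy $\varepsilon'$ costs $O(P\log(1/\varepsilon'))$ gates and $\text{Select}(H_p)$ costs $O(Pn)$ gates, each with a depth scaling as $\tilde O(\cdot\,\log n_\text{anc}/n_\text{anc})$.

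The crux is the error analysis tying the preparation accuracy $\varepsilon'$ to the block-encoding accuracy $\varepsilon$. I would first verify the exact identity $(\langle 0^m|\otimes\mathbb{I})\mathbb{G}^\dag\text{Select}(H_p)\mathbb{G}(|0^m\rangle\otimes\mathbb{I})=\sum_p\alpha_p H_p=H$, using $\mathbb{G}(|0^m\rangle\otimes|\psi\rangle)=|\bm{\alpha}\rangle\otimes|\psi\rangle$ together with $\langle\bm{\alpha}|p\rangle\langle p|\bm{\alpha}\rangle=\alpha_p$, which confirms $\alpha=1$ under the normalization $\sum_p\alpha_p=1$. Replacing $|\bm{\alpha}\rangle$ by the approximation $|\tilde{\bm\alpha}\rangle$ with $\||\bm{\alpha}\rangle-|\tilde{\bm\alpha}\rangle\|\le\varepsilon'$ yields the encoded operator $\sum_p|\langle p|\tilde{\bm\alpha}\rangle|^2 H_p$, so the block-encoding error is bounded by $\sum_p\big||\langle p|\bm{\alpha}\rangle|^2-|\langle p|\tilde{\bm\alpha}\rangle|^2\big|$. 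Factoring each term as $(|a_p|+|\tilde a_p|)\,\big||a_p|-|\tilde a_p|\big|$ and applying Cauchy--Schwarz with $\|H_p\|=1$ and the normalization of the two amplitude vectors bounds this by $O(\varepsilon')$; choosing $\varepsilon'=\Theta(\varepsilon)$ therefore delivers Eq.~\eqref{eq:bl} with the desired $\varepsilon$, while only affecting the state-preparation cost through the harmless replacement $\log(1/\varepsilon')=\Theta(\log(1/\varepsilon))$.

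With the accuracy fixed I would add the two budgets. The gate count is $O(P\log(1/\varepsilon))+O(Pn)=O\big(P(n+\log(1/\varepsilon))\big)$, where the two copies of $G_{|\bm{\alpha}\rangle}$ only change constants. For the depth, I would split the available $n_\text{anc}$ ancillas between the two subroutines (a constant fraction to each), so both run at their space-time-optimal rate; since each depth carries the same $\log n_\text{anc}/n_\text{anc}$ factor, the total is the sum $\tilde O\big(P\log(1/\varepsilon)\tfrac{\log n_\text{anc}}{n_\text{anc}}\big)+O\big(Pn\tfrac{\log n_\text{anc}}{n_\text{anc}}\big)=\tilde O\big(P(n+\log(1/\varepsilon))\tfrac{\log n_\text{anc}}{n_\text{anc}}\big)$, which I would then loosen to the stated $\tilde O\big(Pn\log(1/\varepsilon)\tfrac{\log n_\text{anc}}{n_\text{anc}}\big)$ via $n+\log(1/\varepsilon)=O(n\log(1/\varepsilon))$. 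The admissible ancilla window inherits its lower end $\Omega(\log_2 P)$ from the index register shared by both subroutines, with the depth formula understood to saturate once $n_\text{anc}$ exceeds the gate count of either piece.

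I expect the main obstacle to be the error-propagation step rather than the complexity bookkeeping: one must be careful that it is the \emph{squared} amplitudes $|\langle p|\tilde{\bm\alpha}\rangle|^2$ that enter the encoded operator, and that the map from the (possibly complex, slightly non-normalized) amplitude vector to these weights is Lipschitz with an $O(1)$ constant after normalization. This is precisely what guarantees the benign $\log(1/\varepsilon)$ dependence instead of a $1/\varepsilon$ one. A secondary subtlety is checking that allocating ancillas between preparation and select keeps each subroutine inside its valid trade-off range across the full stated window, so that the two depth bounds from Theorems~\ref{th:sp_main} and~\ref{th:sop} may be applied simultaneously.
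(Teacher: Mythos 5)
Your proposal is correct and follows essentially the same route as the paper: the block-encoding is built as $\tilde{\mathbb{G}}^\dag\,\text{Select}(H_p)\,\tilde{\mathbb{G}}$, with the costs read off from Theorem~\ref{th:sp_main} (applied with $N\to P$) and Theorem~\ref{th:sop} (with $M=P$, $L=n$), and the two depth regimes combined exactly as in the Methods section. The only difference is cosmetic: where the paper bounds $\|\delta H\|$ through the operator $W=\tilde U-U$ and its block structure, you compute the encoded block explicitly as $\sum_p|\langle p|\tilde{\bm{\alpha}}\rangle|^2H_p$ and control the squared-amplitude deviation by Cauchy--Schwarz, an equivalent (and arguably more transparent) way to reach the same $O(\varepsilon')$ bound.
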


The block-encoding of LCU can be constructed with polylogarithmic circuit complexity with respect to the data dimension, as oppose to the SAIM requiring polynomial gate count. Therefore, for structured classical data in the form of Eq.~\eqref{eq:lcugen} exponential quantum advantage can be expected. In below, we provide further discussions about by our results.

\section{Discussion}\label{sec:dis}

As demonstrated in Theorem.~\ref{th:lbsm}, a general SAIM can not be implemented with $O(\text{Poly}(n))$ size quantum circuit. In the language of complexity class, this implies that BQP$^{\text{SAIM}}\neq$BQP, where SAIM represent the quantum oracles in the form of Eq.~\eqref{eq:ofh}.  In other words, if problem $A$ can be solve with polynomial number of queries to the SAIM, $A$ is not necessarily solvable with polynomial-size quantum circuits. In fact, it is reasonable to conjecture that BQP$^{\text{SAIM}}\neq$PSPACE when considering the scaling with $n$. The reason is that for a general matrix with $2^n$ dimension, storing all its element requires exponentially large space, and this is true even for sparse matrix. The same argument applies to the block-encoding of sparse matrices as well.
 
 This argument is consistent with the results about classical dequantization algorithms~\cite{Tang.19,Chia.20b}, which demonstrate that sub-linear classical runtime can be achieved for tasks such as recommendation systems and solving linear systems.
 Note that these algorithms assumes a classical oracle similar to SAIM.
 
On the other hand, our study on sparse matrix encoding still has its great value. First of all, it is rare to have structured classical data that can be encoded with logarithmic complexity. In many cases, sparse matrix is the most compact representation for classical data of interest. Second, with SAIM or block-encoding,  polynomial quantum speedup with respect to the matrix dimension $N$ is still possible. Our constructions are nearly optimal, and can be used to estimate the concrete Clifford$+T$ complexities of many quantum algorithms of practical interest. Finally, techniques developed here may serve as a subroutine for encoding a larger matrix with special structures, with which the with which exponential quantum advantage may be possible.

An open question is how to determine whether a given matrix is efficiently block-encodable. This problem can be considere as a generalization of the unitary complexity problem~\cite{Nielsen.05,Bu.21,Eisert.21,Li.22}, which is important due to the broad applications of block-encoding~\cite{Martyn.21}.
According to Theorem.~\ref{th:bl1}, LCU for efficient unitaries [Eq.~\eqref{eq:lcugen}] is a sufficient condition of efficient block-encoding. Due to the generality and simplicity of LCU, it is reasonable to conjecture that the decomposition of a matrix in the form of Eq.~\eqref{eq:lcugen} has close relation to the efficiency of its block-encoding. The block encoding of $H$ is challenging when it can not be well approximated by Eq.~\eqref{eq:lcugen} with $P=O(\text{Poly}(n))$.

In conclusion, we have studied the circuit complexities of typical quantum access models, such as SAIM and block-encoding. We show that the circuit complexity lower bound for encoding sparse matrix is polynomial with respect to the matrix dimension. We provide nearly-optimal construction protocols to achieve the lower bound. For LCU-based block-encoding, we develop a construction protocol based on the improved implementation of quantum state preparation and select oracle for Pauli strings. Our protocols are based on Clifford$+T$ gates and allow tunable ancillary qubit number. We expect that our results are useful for processing classical data with quantum devices~\cite{Biamonte.17,Liu.22,Liu.23}. Future works may include the study of the circuit complexity lower bound for block-encoding, and how to further improve our protocols to achieve the lower bounds. Another interesting topic is about the power of quantum circuits with global quantum channels. For example, if the feedback controls are dependent on the measurement outcomes of many measurements. In this case, the elementary operations may no longer be described by local operations, and the computation power of the circuit is expected to be enhanced. In the direction of applications, it is interesting to find practical classical problems, whose data structure are able to be represented in the form LCU. In those scenarios, exponential quantum advantage can be expected. 

\section{Methods}

\noindent\textbf{Quantum state preparation}

\noindent We first consider the preparation with $n$ ancillary qubits. There are some state preparation protocol with optimal single- and two-qubit gate count, such as Ref~\cite{Mottonen.05}. However, with direct Clifford$+T$ decomposition, the gate complexity becomes suboptimal. We achieve gate count and circuit depth linear to the state dimension with an optimized Clifford$+T$ decomposition. The result is as follows.

\begin{lemma}\label{lm:sp_few}
With $n$ ancillary qubits, an arbitrary quantum state can be prepared to precision $\varepsilon$ with $O(N\log(1/\varepsilon))$  depth and $O(N\log(1/\varepsilon))$ count of Clifford $+T$ gates.
\end{lemma}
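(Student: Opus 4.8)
The plan is to start from the Möttönen et al. state-preparation circuit~\cite{Mottonen.05}: any $n$-qubit state $|\psi\rangle$ is produced, at the level of single-qubit rotations, by a product of $O(n)$ uniformly controlled rotations (UCRs), one uniformly controlled $R_y$ and one uniformly controlled $R_z$ per target qubit. The layer with $m$ controls, an $m$-UCR, decomposes into $2^m$ single-qubit rotations interleaved with $2^m$ CNOTs, so the whole circuit contains $\sum_{m=0}^{n-1}2^m=N-1$ rotations of each type, i.e. $O(N)$ rotations and $O(N)$ CNOTs. Since CNOTs are Clifford, the only place Clifford$+T$ error enters is the gate synthesis of the individual rotations, and the task reduces to choosing, for each rotation, how accurately to synthesize it so that the total gate count is minimized subject to a global error budget $\varepsilon$.

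The error bookkeeping contains the first key point. An $m$-UCR is block diagonal in the control register, $U_m=\sum_{j=0}^{2^m-1}|j\rangle\langle j|\otimes R(\theta_j)$. If every rotation of layer $m$ is replaced by a Clifford$+T$ approximation of spectral-norm error at most $\delta_m$, then $\|U_m-\tilde U_m\|=\max_j\|R(\theta_j)-\tilde R(\theta_j)\|\le\delta_m$; the block structure makes the layer error the \emph{maximum}, not the sum, over its $2^m$ rotations. Composing the $O(n)$ layers and using subadditivity of error under unitary composition, the total error obeys $\big\||\psi\rangle-|\tilde\psi\rangle\big\|\le 2\sum_{m=0}^{n-1}\delta_m$, the factor $2$ accounting for the $R_y$ and $R_z$ families, so it suffices to enforce $\sum_m\delta_m\le\varepsilon/2$.

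The heart of the proof is the gate-count optimization under this constraint. I would invoke optimal single-qubit synthesis (Ross--Selinger / Kliuchnikov--Maslov--Mosca): a $Z$- or $Y$-rotation admits a $\delta$-approximation with $O(\log(1/\delta))$ Clifford$+T$ gates. The total count is then $\sum_{m=0}^{n-1}2^m\,O(\log(1/\delta_m))$, and the naive uniform choice $\delta_m=\varepsilon/(2n)$ would give $O(N\log(n/\varepsilon))$, carrying an unwanted $\log n$ factor. Instead I would allocate the budget non-uniformly with $\delta_m$ proportional to $2^m$, e.g. $\delta_m=2^{m-1}\varepsilon/N$, which is the stationary point of the Lagrangian and stays in $(0,1)$ for all $m\le n-1$ whenever $\varepsilon<1$; it satisfies $\sum_m\delta_m=\tfrac{\varepsilon}{2N}(N-1)\le\varepsilon/2$. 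Up to the synthesis constant, the count becomes
\begin{equation}
\sum_{m=0}^{n-1}2^m\log_2\!\frac{2N}{2^m\varepsilon}
=(N-1)\log_2\frac{2N}{\varepsilon}-\sum_{m=0}^{n-1}m\,2^m .
\end{equation}
Using $\sum_{m=0}^{n-1}m2^m=(n-2)2^n+2$ and $\log_2 N=n$, the $nN$ contribution from $\log_2 N$ in the first term is cancelled by the second, leaving $O\!\left(N\log(1/\varepsilon)+N\right)=O(N\log(1/\varepsilon))$. This telescoping cancellation is precisely the effect promised by ``distributing the decomposition error in an optimized way.''

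For the depth, the same sum governs the bound: the $2^m$ rotations of a layer act on a single target qubit, hence are applied sequentially, each as a length-$O(\log(1/\delta_m))$ chain of Clifford$+T$ gates, while the CNOT network and any bookkeeping on the $n$ allowed ancillas used to realize the uniformly-controlled structure contribute only $O(N)$ depth, which is subdominant. Summing gives depth $O(N\log(1/\varepsilon))$, matching the count. The main obstacle is verifying the cancellation cleanly --- checking that the chosen $\delta_m$ remain valid accuracies ($0<\delta_m<1$) across all layers for $\varepsilon<1$, and that the residual lower-order terms are genuinely $O(N)$ rather than concealing an $O(N\log n)$ --- together with a careful justification of the block-diagonal ``max, not sum'' error estimate and of the subadditive telescoping across the $O(n)$ UCR layers.
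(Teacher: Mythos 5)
Your proposal is correct and follows essentially the same route as the paper's proof: the M\"ott\"onen UCR decomposition, Clifford$+T$ synthesis of each single-qubit rotation to layer-dependent accuracy, the block-diagonal ``max, not sum'' error bound per layer with telescoping over layers, and the non-uniform budget $\delta_m\propto 2^m$ (your $\delta_m=2^{m-1}\varepsilon/N$ is exactly the paper's $\varepsilon_j=\varepsilon/2^{n-j+1}$), yielding the cancellation that removes the $\log N$ factor. The count and depth computations match the paper's.
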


\begin{proof}
According to~\cite{Mottonen.05}, with single- and two-qubit gates, an arbitrary quantum state $|\psi_{\text{targ}}\rangle$ can be expressed as 
\begin{align}
|\psi_{\text{targ}}\rangle=\left(\prod_{j=1}^{n}F^{y}_j\right)\left(\prod_{j=1}^{n}F^{z}_j\right)|0\cdots0\rangle,
\end{align}
where $F^{z}_j$ and $F^{y}_j$ are uniformly controlled Z- and Y-rotations
\begin{subequations}\label{eq:F1}
\begin{align}
F^{z}_j=\sum_{k=0}^{2^{j-1}-1}|k\rangle\langle k|\otimes R_z(\alpha_{j,k}^z)\otimes\mathbb{I}_{2^{n-j}},\\
F^{y}_j=\sum_{k=0}^{2^{j-1}-1}|k\rangle\langle k|\otimes R_y(\alpha_{j,k}^y)\otimes\mathbb{I}_{2^{n-j}},
\end{align}
\end{subequations}
 with single qubit rotation gates $R_y(\theta)=e^{-i\theta\sigma_y/2}$, $R_z(\theta)=e^{-i\theta\sigma_z/2}$. Here $\alpha^y_{j,k}\in\mathbb{R}$ and $\alpha^z_{j,k}\in\mathbb{R}$ are some rotation angles, the exact values of which are not important for our analysis. 
 
Single-qubit rotations can be approximated with Clifford$+T$ gates. According to Ref~\cite{Selinger.12}, unitary $\widetilde u_z$ satisfying $\left\|\widetilde u_z-R_z(\alpha_{j,k}^z/2)\right\|\leqslant \varepsilon_j/2$
 can be constructed with $O(\log(1/\varepsilon_j))$ single-qubit Clifford$+T$ gates without ancilla. 
 Accordingly, we can implement single-qubit-controlled-$R_z(\alpha_{j,k}^z,\varepsilon_j)$, such that 
 \begin{align}
 \left\|R_z(\alpha_{j,k}^z,\varepsilon_j)-R_z(\alpha_{j,k}^z)\right\|\leqslant\varepsilon_j
 \end{align}
  with the following circuit. 
  
  \begin{center}
  \begin{quantikz}
&\qw&\ctrl{1}&\qw&\ctrl{1}&\qw\\
&\gate{\widetilde u_z}&\targ{}&\gate{\widetilde u_z^\dag}&\targ{}&\qw
\end{quantikz}
  \end{center}

\noindent Note that $\widetilde u_z^\dag$ can be realized by the inverse conjugation of the Clifford$+T$ gate sequence of $\widetilde u_z$. Similar argument is also applied for $R_y(\alpha^y_{j,k})$. Then, according to Lemma.~\ref{lm:sl1} as will be introduced in the next section, one can construct the following unitaries
\begin{align}
\widetilde F_{y}^j=\sum_{k=0}^{2^{j-1}-1}|k\rangle\langle k|\otimes \widetilde R_y(\alpha_{j,k}^y,\varepsilon_j)\otimes\mathbb{I}_{2^{n-j}}\\
\widetilde F_{z}^j=\sum_{k=0}^{2^{j-1}-1}|k\rangle\langle k|\otimes \widetilde R_z(\alpha_{j,k}^z,\varepsilon_j)\otimes\mathbb{I}_{2^{n-j}}
\end{align}
with $j$ ancillary qubits, $O(2^j\log(1/\varepsilon_j))$ depth and $O(2^j\log(1/\varepsilon_j))$ count of Clifford$+T$ gates. We therefore approximate the target state with the following
\begin{align}
|\widetilde\psi_{\text{targ}}\rangle=\left(\prod_{j=1}^{n}\widetilde{F}^{z}_j\right)\left(\prod_{j=1}^{n}\widetilde{F}^{y}_j\right)|0\cdots0\rangle.
\end{align}
In below, we first bound the distance between $|\psi_{\text{targ}}\rangle$ and $|\widetilde\psi_{\text{targ}}\rangle$. It can be verified that  $\left\|\widetilde F^{y}_j|\psi\rangle-F^{y}_j|\psi\rangle\right\|\leqslant \varepsilon_j$ and $\left\|\widetilde F^{z}_j|\psi\rangle-F^{z}_j|\psi\rangle\right\|\leqslant\varepsilon_j$ for any quantum state $|\psi\rangle$. In other words, we have $\left\|\widetilde F^{y}_j-F^{y}_j\right\|\leqslant\varepsilon_j$ and $\left\|\widetilde F^{z}_j-F^{z}_j\right\|\leqslant\varepsilon_j$.
Therefore,
\begin{align}
&\left\|\left(\prod_{j=1}^{n}\widetilde F^{y}_j-\prod_{j=1}^{n}F^{y}_j\right) \right\|\notag\\
\leqslant&
\left\|\widetilde F^{y}_{n}\left(\prod_{j=1}^{n-1}\widetilde F^{y}_j-\prod_{j=1}^{n-1}F^{y}_j\right) \right\|
+
\left\|\left(\widetilde F^{y}_{n}-F^{y}_{n}\right)\prod_{j=1}^{n-1}F^{y}_j\right\|\notag\\
\leqslant&\left\|\left(\prod_{j=1}^{n-1}\widetilde F^{y}_j-\prod_{j=1}^{n-1}F^{y}_j\right) \right\|+ \varepsilon_{n}\notag\\
&\cdots\notag\\
\leqslant&\sum_{j=1}^n\varepsilon_{j}.
\end{align}
In a similar way, we can obtain $\left\|\left(\prod_{j=1}^{n}\widetilde F^{z}_j-\prod_{j=1}^{n}F^{z}_j\right) \right\|\leqslant\sum_{j=1}^n\varepsilon_{j}$. So we have  
\begin{align}\label{eq:targ_err0}
&\left\||\psi_{\text{targ}}\rangle-|\widetilde{\psi}_{\text{targ}}\rangle\right\|\notag\\
\leqslant&\left\|\left(\prod_{j=1}^{n}\widetilde F^{y}_j\prod_{j=1}^{n}\widetilde F^{z}_j-\prod_{j=1}^{n}F^{y}_j\prod_{j=1}^{n}F^{z}_j\right) \right\|\notag\\
\leqslant&\left\|\left(\prod_{j=1}^{n}\widetilde F^{z}_j-\prod_{j=1}^{n}F^{z}_j\right) \right\|+\left\|\left(\prod_{j=1}^{n}\widetilde F^{y}_j-\prod_{j=1}^{n}F^{y}_j\right) \right\|\notag\\
\leqslant&2\sum_{j=1}^n\varepsilon_{j}.
\end{align}
According to Eq.~\eqref{eq:targ_err0}, to control the total error rate to a constant value i.e. $\||\psi_{\text{targ}}\rangle-|\widetilde{\psi}_{\text{targ}}\rangle\|\leqslant\varepsilon$, it suffice to set $\varepsilon_j=\varepsilon/2^{n-j+1}$.
Because each $\widetilde F^y_j$ or $\widetilde F^z_j$ require $O(2^j\log(1/\varepsilon_j))$ gate count and circuit depth, the total gate count is 
\begin{align}
C&=\sum_{j=0}^{n-1}O(2^j\log(1/\varepsilon_j))\notag\\
&=\sum_{j=0}^{n-1}O(2^j\log(2^{n-j}/\varepsilon))\notag\\
&=O(N\log(1/\varepsilon)).
\end{align} 
Similarly, the total circuit depth is 
\begin{align}
D&=\sum_{j=0}^{n-1}O(2^j\log(1/\varepsilon_j))=O(N\log(1/\varepsilon)).
\end{align}
\end{proof}

We then consider the quantum state preparation with exponential ancillary qubits. Our protocol follows the same idea in~\cite{Zhang.22} with improvement.

\begin{lemma}\label{lm:sp_many}
Arbitrary $n$-qubit quantum state can be prepared with $O(N)$ ancillary qubits, $O(n\log(n/\varepsilon))$ depth and $O(N\log(1/\varepsilon))$ count of Clifford$+T$ gates.
\end{lemma}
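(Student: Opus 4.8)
The plan is to realize the state preparation by the standard ``uncomputation trick'': first build the target amplitudes in an exponentially large ancillary register using a balanced binary tree of controlled rotations, then disentangle (uncompute) the ancillas so that only the $n$ data qubits carry $|\widetilde\psi_{\text{targ}}\rangle$. Following~\cite{Zhang.22}, I would allocate $O(N)$ ancillary qubits arranged so that the $2^j$ distinct rotation angles at level $j$ of the tree can all be applied \emph{in parallel} on separate ancilla copies, rather than sequentially as in Lemma~\ref{lm:sp_few}. The key structural fact is that each uniformly controlled rotation $F^y_j$ or $F^z_j$, which sequentially costs $O(2^j)$ depth, can instead be executed in $O(\log(1/\varepsilon_j))$ depth once the $2^j$ control patterns are spread across fresh ancillas via a fan-out (copy) network of depth $O(\log 2^j)=O(j)$. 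Summing the fan-out depths $\sum_{j=1}^n O(j)=O(n^2)$ naively, but a careful pipelined/tree construction as in~\cite{Zhang.22} brings the copy overhead down, and the rotation layers contribute $\sum_j O(\log(1/\varepsilon_j))$.

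The core of the argument is the \emph{error-distributed Clifford$+T$ decomposition} already introduced in Lemma~\ref{lm:sp_few}: I would reuse the assignment $\varepsilon_j=\varepsilon/2^{n-j+1}$, so that the cumulative error obeys $\||\psi_{\text{targ}}\rangle-|\widetilde\psi_{\text{targ}}\rangle\|\leqslant 2\sum_{j=1}^n\varepsilon_j\leqslant\varepsilon$ by exactly the triangle-inequality telescoping of Eq.~\eqref{eq:targ_err0}. Each single-qubit rotation at level $j$ compiles to $O(\log(1/\varepsilon_j))=O(\log(2^{n-j}/\varepsilon))=O((n-j)+\log(1/\varepsilon))$ Clifford$+T$ gates by~\cite{Selinger.12}. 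The count is then $C=\sum_{j=1}^n 2^j\cdot O((n-j)+\log(1/\varepsilon))$; the geometric weight $2^j$ dominates the linear factor $(n-j)$, so $\sum_j 2^j(n-j)=O(N)$ and $\sum_j 2^j\log(1/\varepsilon)=O(N\log(1/\varepsilon))$, giving the claimed $O(N\log(1/\varepsilon))$ gate count.

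For the depth, the rotations across all $n$ levels can be performed essentially in sequence on the disjoint ancilla copies, so the rotation contribution is $\sum_{j=1}^n O(\log(1/\varepsilon_j))=\sum_{j=1}^n O((n-j)+\log(1/\varepsilon))=O(n^2+n\log(1/\varepsilon))$, and the fan-out/uncomputation network over $O(N)$ ancillas adds $O(n)$ per level, i.e. $O(n^2)$ total. Combining and using $\log(n/\varepsilon)=O(n)+O(\log(1/\varepsilon))$ yields the stated depth $O(n\log(n/\varepsilon))$; I would verify that the pipelined schedule of~\cite{Zhang.22} indeed collapses the per-level $O(n)$ copy depth so the $n^2$ terms consolidate into $n\log(n/\varepsilon)$ rather than a strictly larger bound.

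The main obstacle I anticipate is the depth bookkeeping for the ancilla fan-out and uncomputation: I must ensure that copying each level's control register to $2^j$ ancillas, applying the parallel rotations, and then cleanly uncomputing all $O(N)$ ancillas back to $|0\rangle$ can be \emph{interleaved} so that the total depth stays at $O(n\log(n/\varepsilon))$ and the total count stays at $O(N\log(1/\varepsilon))$ simultaneously. The error analysis is routine given Lemma~\ref{lm:sp_few}; the genuinely delicate part is confirming that the improved Clifford$+T$ decomposition of~\cite{Zhang.22}, with error budget $\varepsilon_j$ concentrated on the cheap high-$j$ levels, is compatible with their parallel circuit layout without inflating either resource beyond the claimed bounds.
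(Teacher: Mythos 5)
Your overall architecture matches the paper's: grow the amplitudes in a binary tree of $O(N)$ ancillas layer by layer with parallel controlled rotations (following~\cite{Zhang.22}), then uncompute, and distribute the Clifford$+T$ synthesis error non-uniformly across layers. However, there is a genuine gap in the error schedule, and it is precisely the step the paper identifies as the key to this lemma. You propose to reuse the assignment $\varepsilon_j=\varepsilon/2^{n-j+1}$ from Lemma~\ref{lm:sp_few}. That choice is tuned for the \emph{sequential} construction, where layer $j$ contains $2^j$ rotations executed one after another and the exponentially shrinking $\varepsilon_j$ is absorbed by the exponentially growing layer width. In the parallel construction the $n$ layers are still executed \emph{in sequence}, and each layer's depth is $\Theta(\log(1/\varepsilon_j))$ no matter how many ancillas you have, so the rotation depth is $\sum_{j=1}^n\Theta((n-j)+\log(1/\varepsilon))=\Theta(n^2+n\log(1/\varepsilon))$. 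This exceeds the claimed $O(n\log(n/\varepsilon))$ whenever $\log(1/\varepsilon)=o(n)$, and no pipelining or rescheduling of the fan-out network can remove it --- the $n^2$ comes from the per-layer synthesis precision, not from the copy overhead. The paper's fix is to let the per-layer error decay only \emph{polynomially} in the distance from the last layer, $\varepsilon_L=K\varepsilon/(n-L+1)^2$, which still sums to $O(\varepsilon)$ but gives depth $\sum_L O(\log((n-L+1)^2/\varepsilon))=O(\log(n!))+O(n\log(1/\varepsilon))=O(n\log(n/\varepsilon))$ and count $\sum_L O(2^L\log((n-L+1)^2/\varepsilon))=O(N\log(1/\varepsilon))$.

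A secondary remark: your anxiety about an $O(n)$ fan-out cost per level (hence $O(n^2)$ total) is misplaced in the paper's layout. Each step $|\Psi_{L-1}\rangle\to|\Psi_L\rangle$ is a single parallel layer $W_L$ of three-qubit gates acting on disjoint parent--children triples of the tree, so the non-rotation (Clifford) overhead of stage~1 is $O(1)$ depth per layer, i.e.\ $O(n)$ total, and the subsequent transfer and uncomputation stages are Clifford circuits of depth $O(n)$ and count $O(N)$. Once you replace the exponential error schedule by the polynomial one, the rest of your argument (the telescoping error bound and the geometric-sum count estimate) goes through as in the paper.
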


\begin{proof}
Our construction is based on the protocol in~\cite{Zhang.22} with revision and improved Clifford+$T$ decomposition. 

\textit{General procedure}. The hardware layout of our method contains a binary tree of qubits with $n+1$ layers, which is denoted as $H$. The $l$th (with $0\leqslant l\leqslant n$) layer of $H$ is denoted as $H_l$. For $1\leqslant l\leqslant n$, $H_l$ connects to another binary tree of qubits, denoted as $V_l$. The root of the tree $V_l$ serves as the $l$th data qubit, and we denote it as $\text{d}_l$ here. 

Our protocol for preparing target state $|\psi_{\text{targ}}\rangle
=\sum_{k=0}^{2^n-1}\alpha_k|k\rangle_{\text{d}}
$ works as follows. We initialize the root of $H$ as $|1\rangle_{H_1}$ while all other qubits are at state $|0\rangle$.   In the first stage, $H$ is prepared at the quantum state (qubits at state $|0\rangle$ are not shown)
\begin{align}\label{eq:stg1}
|1\rangle_{H_1}\rightarrow\sum_{k=0}^{2^n-1}\alpha_k|\varphi_k\rangle_{H}.
\end{align}
Here, $|\varphi_k\rangle$ is one of the computational basis of $H$ to be defined later. In the second stage,  the data qubits are transferred to the $n$-qubit computational basis $|k\rangle_{\text{d}}$ conditioned on $|\varphi_k\rangle$, i.e.
\begin{align}\label{eq:stg2}
\sum_{k=0}^{2^n-1}\alpha_k|\varphi_k\rangle_{H}\rightarrow\sum_{k=0}^{2^n-1}\alpha_k|\varphi_k\rangle_{H}|k\rangle_{\text{d}}.
\end{align}
Finally, the binary tree $H$ is uncomputed 
\begin{align}\label{eq:stg3}
\sum_{k=0}^{2^n-1}\alpha_k|\varphi_k\rangle_{H}|k\rangle_{\text{d}}\rightarrow\sum_{k=0}^{2^n-1}\alpha_k|0\rangle_{H}|k\rangle_{\text{d}}.
\end{align}
The target state is then obtained after tracing out $H$. The readers are refereed to~\cite{Zhang.22} for more details.
Transformations in Eq.~\eqref{eq:stg2} and Eq.~\eqref{eq:stg3} can be ideally realized using Clifford circuit with $O(n)$ depth and $O(2^n)$ gate count. On the other hand, the first stage for obtaining Eq.~\eqref{eq:stg1} contains rotation that has to be approximated with $T$ gates and hence more complicated. So we focus on Eq.~\eqref{eq:stg1} in below. 

\textit{Realization of Eq.~\eqref{eq:stg1}.} We will first show how Eq.~\eqref{eq:stg1} can be realized with single-qubit and CNOT gates with a method slightly different from~\cite{Zhang.22}, and then introduce its Clifford $+T$ decomposition.

We define $\alpha_{n,k}\equiv a_{k}$ and $\alpha_{L,k}=\text{arg}(\alpha_{L+1,2k})\sqrt{|\alpha_{L+1,2k}|^2+|\alpha_{L+1,2k+1}|^2}$
for all $0\leqslant l\leqslant n-1$. Note that we can assume $\text{arg}(\alpha_0)=0$ without loss of generality. For $0\leqslant L\leqslant n$, we define
\begin{align}\label{eq:stpL}
|\Psi_{L}\rangle=&\sum_{k=0}^{2^{L}-1}\alpha_{L,k}\bigotimes_{l=0}^{L}|(k,l)\rangle'_{H_{l}}.
\end{align}
The realization of Eq.~\eqref{eq:stg1} contains $n$ steps, with the $L$th step corresponds to $|\Psi_{L-1}\rangle\rightarrow|\Psi_{L}\rangle$.

In Eq.~\eqref{eq:stpL}, we have defined $(0,0)\equiv0$, and $(k,l)\equiv k_nk_{n-1}\cdots k_{n-l+1}$ for $l\geqslant 1$; $|(k,l)\rangle'\equiv|0\rangle^{\otimes (k,l)}|1\rangle|0\rangle^{\otimes 2^l-(k,l)-1}$; $H_l$ represents the $l$th layer of $H$. 
 Eq.~\eqref{eq:stg1} and Eq.~\eqref{eq:stpL} have the correspondence $|\varphi_k\rangle=\bigotimes_{l=0}^{L}|(k,l)\rangle'_{H_l}$ and $\sum_{k=0}^{2^n-1}\alpha_k|\varphi_k\rangle_{H}=|\psi_{n}\rangle$. So $|\Psi_n\rangle$ is the target state of the stage 1 introduced in Eq.~\eqref{eq:stg1}.

We then introduce the realization of  $|\Psi_{L-1}\rangle\rightarrow|\Psi_{L}\rangle$.
 We define single qubit rotation $r_y(\theta)=\begin{pmatrix}\cos\theta&\sin\theta\\-\sin\theta&\cos\theta\end{pmatrix}$ and $r_z(\phi)=\begin{pmatrix}e^{-i\phi}&0\\0&e^{i\phi}\end{pmatrix} 
$, and a three-qubit controlled operation as follows.

\begin{widetext}
\begin{center}
\begin{quantikz}
\lstick{$a$}&\ctrl{1}&\qw&\ctrl{1}&\qw&\qw&\ctrl{1}&\qw&\ctrl{1}&\ctrl{1}&\qw\\
\lstick{$w(\theta;\phi;a;b;c)\quad=\quad b$}&\targ{}&\gate{r_y(\theta/2)}&\targ{}&\gate{r_y(\theta/2)^\dag}&\gate{r_z(\phi/2)^\dag}&\targ{}&\gate{r_z(\phi/2)}&\targ{}&\octrl{1}&\qw\\
\lstick{$c$}&\qw&\qw&\qw&\qw&\qw&\qw&\qw&\qw&\targ{}&\qw&\\
\end{quantikz}

\end{center}
\end{widetext}

$a,b,c$ are labels of the corresponding qubits. Let $\theta_{l,j}\equiv \arccos (b_{l,2j}/b_{l-1,j})$ and $\phi_{l,j}=\phi_{l,2j+1}-\phi_{l,2j}$, at the $L$th step ($1\leqslant L\leqslant n$), we implement the parallel rotation
\begin{align}
W_L=\prod_{j=0}^{2^{L-1}-1}w(\theta_{L,j};\phi_{L,j};H_{L-1,j};H_{L,2j};H_{L,2j+1})
\end{align}
which costs $O(1)$ depth and $O(2^L)$ count of single-qubit and CNOT gates.  It can be verified that 
\begin{align}
W_L|\Psi_{L-1}\rangle=|\Psi_{L}\rangle.
\end{align}
The total single-qubit $+$ CNOT depth and gate count are $O(n)$ and $O(2^n)$ respectively.

\textit{Clifford$+T$ decomposition}. $W_L$ are assumed to be constructed with single- and two-qubit gates. In below, we discuss how to decompose it with Clifford +$T$  gates with high accuracy. According to Ref~\cite{Selinger.12}, one can always construct a unitaries $\tilde r_y(\theta;\varepsilon)$, $\tilde r_z(\phi;\varepsilon)$, with $O(\log(1/\varepsilon))$ depth of gates in $\{H,S,T\}$, which satisfies
\begin{align}
\|\tilde r_y(\theta;\varepsilon)- r_y(\theta)\|\leqslant\varepsilon, \quad \|\tilde r_z(\phi;\varepsilon)- r_z(\phi)\|\leqslant\varepsilon.
\end{align}
Accordingly, we define $\widetilde w(\theta;\phi;\varepsilon;a;b;c)$ as the following transformation
\begin{widetext}
\begin{center}
\begin{quantikz}
\lstick{$a$}&\ctrl{1}&\qw&\ctrl{1}&\qw&\qw&\ctrl{1}&\qw&\ctrl{1}&\ctrl{1}&\qw\\
\lstick{$\widetilde w(\theta;\phi;a;b;c)\quad=\quad b$}&\targ{}&\gate{\tilde r_y(\theta/2;\varepsilon)}&\targ{}&\gate{\tilde r_y(\theta/2;\varepsilon)^\dag}&\gate{\tilde r_z(\phi/2;\varepsilon)^\dag}&\targ{}&\gate{\tilde r_z(\phi/2;\varepsilon)}&\targ{}&\octrl{1}&\qw\\
\lstick{$c$}&\qw&\qw&\qw&\qw&\qw&\qw&\qw&\qw&\targ{}&\qw&\\
\end{quantikz}

\end{center}
\end{widetext}

We have
\begin{align}\label{eq:tw}
&\widetilde w(\theta;\phi;\varepsilon;a;b;c)(a|0\rangle_a|0\rangle_{b}|0\rangle_{c}+b|1\rangle_a|0\rangle_{b}|0\rangle_{c})\notag\\
=&a|0\rangle_a|0\rangle_{b}|0\rangle_{c}+b_1(\varepsilon)|1\rangle|10\rangle+b_2(\varepsilon)|01\rangle,
\end{align}
for some $\sqrt{|b_1(\varepsilon)-b_1(0)|^2+|b_2(\varepsilon)-b_2(0)|^2}\leqslant|b|\varepsilon$.
We then define 
\begin{align}
\widetilde W_L(\varepsilon)=\prod_{j=0}^{2^{L-1}-1}\widetilde w(\theta_{L,j};\phi_{L,j};\varepsilon;H_{L-1,j};H_{L,2j};H_{L,2j+1}),
\end{align}
which is used to approximate $W_L$.
From Eq.~\eqref{eq:tw}, it can be verified that
$\left\|\widetilde W_L(\varepsilon)|\Psi_{L-1}\rangle-W_L|\Psi_{L-1}\rangle\right\|\leqslant\varepsilon$. We set the accuracy at the $L$th layer as $\varepsilon_L$, and define 
\begin{align}
|\widetilde{\Psi}_{0}\rangle=|\Psi_{0}\rangle,\quad |\widetilde{\Psi}_{L}\rangle=\widetilde W_L(\varepsilon_L)|\widetilde{\Psi}_{L-1}\rangle.
\end{align}
We have
\begin{align}
&\left\||\widetilde\Psi_{L}\rangle-|\Psi_{L}\rangle\right\|\notag\\
=
&\left\|\widetilde W_L(\varepsilon_L)|\widetilde\Psi_{L-1}\rangle-W_L|\Psi_{L-1}\rangle\right\|\notag\\
\leqslant
&\left\|\widetilde W_L(\varepsilon_L)|\widetilde\Psi_{L-1}\rangle-\widetilde W_L(\varepsilon_L)|\Psi_{L-1}\rangle\right\|\notag\\
+&\left\|\widetilde W_L(\varepsilon_L)|\Psi_{L-1}\rangle-W_L|\Psi_{L-1}\rangle\right\| \notag\\
\leqslant 
&\left\|\widetilde W_L(\varepsilon_L)|\widetilde\Psi_{L-1}\rangle-\widetilde W_L(\varepsilon_L)|\Psi_{L-1}\rangle\right\|+\varepsilon_L\notag\\
=&\left\||\widetilde\Psi_{L-1}\rangle-|\Psi_{L-1}\rangle\right\|+\varepsilon_L.\label{eq:WL}
\end{align}
By applying the inequality above iteratively from $L=1$ to $L=n$, we have 
\begin{align}\label{eq:targ_err1}
\left\||\widetilde\Psi_n\rangle-|\Psi_n\rangle\right\|\leqslant\sum_{L=1}^{n}\varepsilon_{L}.
\end{align}
According to Eq.~\eqref{eq:targ_err1}, to control the total error rate to a constant value, it suffices to set $\varepsilon_{L}=K\varepsilon/(n-L+1)^2$ for some constant $K$. This is the key step of our improved construction.

\textit{Circuit complexity.}  Each $\widetilde W_L$ can be realized with $O(2^L\log(1/\varepsilon_L))$ count and $O(\log(1/\varepsilon_L))$ depth of Clifford$+T$ gates. Therefore, the total gate count at stage 1 (Eq.~\eqref{eq:stg1}) is
\begin{align}\label{eq:stg1C}
C&=O\left(\sum_{L=1}^{n}2^{L}\log(1/\varepsilon_{L})\right)\\
&=O\left(\sum_{L=1}^{n}2^{L}\log((n-L+1)^2/\varepsilon)\right)\notag\\
&=O\left(2^{n+1}\sum_{m=1}^{n}\frac{\log(m)}{2^{m}}\right)+O\left(2^{n}\log(1/\varepsilon)\right)\notag\\
&=O\left(2^{n}\right)+O\left(2^{n}\log(1/\varepsilon)\right)\notag\\
&=O\left(N\log(1/\varepsilon)\right).\notag
\end{align}
The total circuit depth at stage 1 is 
\begin{align}\label{eq:stg1D}
D&=O\left(\sum_{L=1}^{n}\log(1/\varepsilon_{L})\right)\notag\\
&=O\left(\sum_{L=1}^{n}\log((n-L+1)^2/\varepsilon)\right)\notag\\
&=O\left(\sum_{m=1}^{n}\log(m^2)\right)+O\left(n\log(1/\varepsilon)\right)\notag\\
&=O\left(\log(n!)\right)+O\left(n\log(1/\varepsilon)\right)\notag\\
&=O\left(n\log(n/\varepsilon)\right).
\end{align}
Recall that Eq.~\eqref{eq:stg2} and Eq.~\eqref{eq:stg3} has $O(N)$ count and $O(n)$ depth of Clifford$+T$ gates. So the total gate count and circuit depth are $O\left(N\log(1/\varepsilon)\right)$ and $O\left(n\log(n/\varepsilon)\right)$ respectively.

\end{proof}

We also cares about the controlled quantum state preparation. In our preparation scheme, the initial state is $|1\rangle_{H_1}$, i.e. the root of $H$ is set as $|1\rangle$. If we set $H_1$ as $|0\rangle_{H_1}$ instead, it can be verified that the output state is $|0\cdots0\rangle_{\text{d}}$. Therefore, to implement controlled state preparation, one can simply replace the root qubit $H_1$ by the controlled qubit, and the circuit complexity remains unchanged. In other words, we have the following result.

\begin{lemma}\label{lm:sp_many_ctrl}
Arbitrary single-qubit-controlled $n$-qubit state preparation unitary can be constructed with $O(N)$ ancillary qubits, $O(n\log(n/\varepsilon))$ depth and $O(N\log(1/\varepsilon))$ count of Clifford$+T$ gates.
\end{lemma}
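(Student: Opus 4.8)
The plan is to exploit the special role played by the root qubit of the binary tree $H$ in the construction of Lemma~\ref{lm:sp_many}: the root acts as a global ``enable'' switch for the whole protocol, so promoting it to a control qubit should turn the state-preparation unitary into its controlled version at essentially no extra cost. Concretely, I would run exactly the circuit of Lemma~\ref{lm:sp_many} but leave the root $H_1$ as a free input qubit $|c\rangle$ rather than hard-wiring it to $|1\rangle$, and then argue the two computational branches separately.

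First I would verify the branches. For $c=1$ the circuit is literally that of Lemma~\ref{lm:sp_many}, so stage~1 [Eq.~\eqref{eq:stg1}] produces $\sum_k\alpha_k|\varphi_k\rangle_H$, and stages~2 and~3 [Eq.~\eqref{eq:stg2},~\eqref{eq:stg3}] transfer the amplitudes to the data register, yielding $|\psi_{\text{targ}}\rangle_{\text{d}}$. For $c=0$ I would observe that every rotation in stage~1 sits inside the gadget $w(\theta;\phi;a;b;c)$, whose action on $b,c$ is triggered only when its control $a$ is $|1\rangle$; since each $W_L$ is controlled by layer $H_{L-1}$ and the whole cascade is rooted at $H_1$, setting the root to $|0\rangle$ makes $W_1$ the identity (the two $r_y$ and the two $r_z$ factors cancel because the interleaving CNOTs do not fire), so $H_1$ stays $|0\rangle$, and inductively every $W_L$ is the identity. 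Thus $H$ remains in $|0\cdots0\rangle$, the stage~2 readout copies nothing, and the data qubits stay in $|0^n\rangle$. Together the branches give the controlled map $|1\rangle|0^n\rangle_{\text{d}}\mapsto|1\rangle|\psi_{\text{targ}}\rangle_{\text{d}}$ and $|0\rangle|0^n\rangle_{\text{d}}\mapsto|0\rangle|0^n\rangle_{\text{d}}$.

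The step I expect to be the main obstacle is the clean restoration of the ancillary tree $H$ in both branches simultaneously. In the uncontrolled protocol stage~3 [Eq.~\eqref{eq:stg3}] erases the \emph{entire} tree, including the root; if the root is now the control this would consume the control qubit, and worse it is inconsistent with unitarity whenever $\alpha_0\neq0$, since both the $c=0$ and $c=1$ inputs would then be sent to non-orthogonal states with root $|0\rangle$. The fix is to uncompute only the non-root qubits $H\setminus\{H_1\}$, leaving the control intact, and to make this data-driven uncomputation itself conditioned on the root: for $c=1$ it reverses stage~2 exactly as before, while for $c=0$ it must act trivially, because the non-root tree is already $|0\cdots0\rangle$ and the naive data-controlled flips (which assume the pattern $|\varphi_0\rangle$) would otherwise leave garbage. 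Conditioning stage~3 on the root costs one extra control per gate; fanning the root into $O(N)$ copies through an ancillary tree of depth $O(n)$ and uncomputing it afterwards adds only $O(N)$ gates and $O(n)$ depth, matching the scaling of stages~2 and~3. Hence the overhead is at most a constant factor, and after optionally importing an external control by a single CNOT into the root, the controlled unitary inherits the $O(N)$ count and $O(n\log(n/\varepsilon))$ depth of Lemma~\ref{lm:sp_many}.
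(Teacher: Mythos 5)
Your proposal is correct and takes essentially the same route as the paper, which likewise obtains the controlled version by promoting the root qubit $H_1$ of the binary tree to the control and observing that a $|0\rangle$ root leaves the whole cascade, and hence the data register, in the all-zero state. You are in fact more careful than the paper's one-paragraph argument: the observation that the stage-3 uncomputation must spare the root (and act trivially on the $c=0$ branch) is a real subtlety that the paper glosses over, and your fix stays within the stated $O(N)$ count and $O(n\log(n/\varepsilon))$ depth.
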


Based on Lemma.~\ref{lm:sp_few}, Lemma.~\ref{lm:sp_many} and Lemma.~\ref{lm:sp_many_ctrl}, We have the following result for intermediate number of ancillary qubits. Note that Theorem.~\ref{th:sp_main} in the main text follows directly from Theorem.~\ref{th:stsp}.

\begin{theorem}[space-time tradeoff QSP]\label{th:stsp}
With $n_{\text{anc}}$ ancillary qubits where $\Omega(n)\leqslant n_{\text{anc}}\leqslant O(2^n)$, state preparation and controlled state preparation of an arbitrary $n$-qubit quantum state can be realized with precision $\varepsilon$ with $O(N\log(1/\varepsilon))$ count and $O\left(N\frac{\log(n_{\text{anc}})\log(\log(n_{\text{anc}})/\varepsilon)}{n_{\text{anc}}}\right)$ depth of Clifford$+T$ gates.
\end{theorem}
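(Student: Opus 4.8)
The plan is to interpolate between the two extreme constructions already in hand: the few-ancilla protocol of Lemma~\ref{lm:sp_few}, which attains the optimal count $O(N\log(1/\varepsilon))$ but depth $O(N\log(1/\varepsilon))$ with only $O(n)$ ancillas, and the exponential-ancilla tree protocol of Lemmas~\ref{lm:sp_many} and~\ref{lm:sp_many_ctrl}, which attains depth $O(n\log(n/\varepsilon))$ with $O(N)$ ancillas. I would keep the M\"ott\"onen uniformly-controlled-rotation (UCR) decomposition of Lemma~\ref{lm:sp_few} as the backbone, so that the whole state preparation is a product of $n$ UCRs, the $j$-th acting on the $j$-th qubit with $2^{j-1}$ angles. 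Equivalently, one may view this as a register split $|\psi_{\text{targ}}\rangle=\sum_{k_1}\beta_{k_1}|k_1\rangle_{\text{top}}\otimes|\phi_{k_1}\rangle_{\text{bot}}$, where the small top block (the low-order UCRs) is prepared by the tree method of Lemma~\ref{lm:sp_many} and the bottom block carries the bulk of the $\Theta(N)$ rotations.

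The key to the trade-off is to parallelize the UCRs using the ancillary budget. A single $m$-UCR is intrinsically sequential if applied to one target, but its $2^{m}$ branch-rotations can be spread across parallel strands by fanning out a copy of the $m$ control qubits into auxiliary registers and treating disjoint blocks of control patterns concurrently, at a cost of $O(\log n_{\text{anc}})$ ancillas per strand for the fan-out. With $n_{\text{anc}}$ ancillas this yields a parallel width of $W=\Theta(n_{\text{anc}}/\log n_{\text{anc}})$. The lowest-order UCRs, whose own branch count is already $O(n_{\text{anc}})$, are instead folded into the tree construction of Lemma~\ref{lm:sp_many} (respecting the budget because only $2^{n_1}=O(n_{\text{anc}})$ tree nodes are instantiated), contributing only lower-order depth.

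The count bound $O(N\log(1/\varepsilon))$ then follows by summing the per-strand gate counts, exactly as in the two extreme lemmas. For the depth I would reuse the optimized error allocation already exploited in those proofs, i.e.\ assigning the $L$-th layer an error budget decaying like $1/(n-L+1)^2$ (cf.\ the choices $\varepsilon_L=K\varepsilon/(n-L+1)^2$ and $\varepsilon_j=\varepsilon/2^{\,n-j+1}$), so that the depth-critical rotations each require only $O(\log(\log n_{\text{anc}}/\varepsilon))$ Clifford$+T$ gates rather than $O(\log(1/\varepsilon))$; the accumulated error is controlled by the telescoping triangle-inequality argument of Eqs.~\eqref{eq:targ_err0} and~\eqref{eq:targ_err1}. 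Combining a total of $\Theta(N)$ rotations, width $W=\Theta(n_{\text{anc}}/\log n_{\text{anc}})$, and per-rotation cost $O(\log(\log n_{\text{anc}}/\varepsilon))$ gives depth $O\!\big(N\frac{\log n_{\text{anc}}}{n_{\text{anc}}}\log(\log n_{\text{anc}}/\varepsilon)\big)$, which correctly reduces to $O(n\log(n/\varepsilon))$ at $n_{\text{anc}}=O(N)$ and to the few-ancilla bound (up to doubly-logarithmic factors) at $n_{\text{anc}}=O(n)$. The controlled version is obtained at no extra cost by replacing the root/first register with the external control qubit, as in Lemma~\ref{lm:sp_many_ctrl}.

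The main obstacle I anticipate is precisely this parallelization of the UCRs at a width matching the budget: one must verify that fanning out and later uncomputing the control copies and routing ancillas stays within both $n_{\text{anc}}$ qubits and the target depth across the entire range $\Omega(n)\le n_{\text{anc}}\le O(N)$, and in particular that the auxiliary copies are returned to $|0\rangle$ cleanly without leaving residual entanglement with the data register. The secondary technical point is the error bookkeeping that replaces $\log(1/\varepsilon)$ by $\log(\log n_{\text{anc}}/\varepsilon)$; it must be checked that distributing the budget over the $\Theta(\log n_{\text{anc}})$ depth-contributing layers interpolates correctly and does not inflate the count beyond $O(N\log(1/\varepsilon))$.
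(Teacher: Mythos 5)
Your high-level plan (interpolate between Lemma~\ref{lm:sp_few} and Lemma~\ref{lm:sp_many}, reuse the layered error allocation, get the controlled version by replacing the root qubit) and your final arithmetic are consistent with the theorem, but the core mechanism you propose for the depth reduction has a genuine gap. You want to parallelize each $m$-UCR by fanning out copies of its control register and letting $W=\Theta(n_{\text{anc}}/\log n_{\text{anc}})$ strands handle disjoint blocks of control patterns concurrently. The obstruction is that all $2^{m}$ branch rotations of a single UCR act on the \emph{same} target qubit, so copying the controls does not let you apply them in the same time steps; you would need a per-strand copy of the target and then a coherent, pattern-dependent merge/uncomputation of those copies back into one qubit without residual entanglement. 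That merge is precisely the nontrivial content (it is essentially what the binary-tree construction of Lemma~\ref{lm:sp_many} accomplishes internally), and you explicitly leave it as ``the main obstacle'' rather than resolving it. Since everything else in your argument is bookkeeping, the proof as written does not go through.

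The paper's proof avoids parallelizing any UCR. It splits the data qubits so that register $A$ is \emph{large} ($N_a=N/m$ dimensional) and register $B$ is \emph{small} ($N_b=m$ dimensional), writes $|\psi_{\text{targ}}\rangle=\sum_k\beta_k|k\rangle_A|\phi_k\rangle_B$, prepares $\sum_k\tilde\beta_k|k\rangle_A$ with the sequential few-ancilla method of Lemma~\ref{lm:sp_few} (cost $O((N/m)\log(1/\varepsilon))$ in both count and depth, which is subdominant), and then applies $\mathrm{Select}(\widetilde G_k)$ via Lemma~\ref{lm:sl1}: a \emph{sequential} iteration over the $N/m$ values of $k$, each step invoking the $O(m)$-ancilla controlled tree preparation of Lemma~\ref{lm:sp_many_ctrl} on the $m$-dimensional state $|\widetilde\phi_k\rangle_B$ in depth $O(\log m\,\log(\log m/\varepsilon))$. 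The total depth $O\bigl((N/m)\log m\,\log(\log m/\varepsilon)\bigr)$ then comes from many cheap sequential chunks, not from widening the circuit; the $\log(\log n_{\text{anc}}/\varepsilon)$ factor is inherited from the error allocation inside Lemma~\ref{lm:sp_many}, not from a new allocation across UCR layers. Note also that this assignment is the reverse of your ``equivalent'' register split: the tree method is applied (conditionally, $N/m$ times) to the \emph{small} register, while the \emph{large} register is handled by the slow method exactly because it carries only $N/m$ amplitudes. If you want to rescue your version, you would have to either supply the strand-merge construction or restructure along these lines.
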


\begin{proof}
We separate all data qubits into two registers. Register $A$ contains the last $n_a=n-\lfloor\log_2m\rfloor$ data qubits, and register $B$ contains the first $n_b=\lfloor\log_2m\rfloor$ qubits for some $n\leqslant m\leqslant 2^n$. We define $N_a=2^{n_a}$ and $N_b=2^{n_b}$. The target state can be rewritten as
\begin{align}
|\psi_{\text{targ}}\rangle=\sum_{k=0}^{N_a-1}\beta_k|k\rangle_A|\phi_k\rangle_B
\end{align}
for some normalized $\beta_k$, and normalized quantum states $|\phi_k\rangle$. We define $|\psi_a\rangle_A=\sum_{k=0}^{N_a-1}\beta_k|k\rangle_A$. 

In the first step, we prepare register $A$ to a quantum state
\begin{align}
|\widetilde{\psi}_a\rangle_A=\sum_{k=0}^{N_a-1}\tilde{\beta}_k|k\rangle_A
\end{align}
which satisfies $\left\||\psi_a\rangle-|\widetilde{\psi}_a\rangle\right\|\leqslant\varepsilon/2$. According to Lemma.~\ref{lm:sp_few}, this step can be realized with $O(N_a\log(1/\varepsilon))=O(\frac{N}{m}\log(1/\varepsilon))$ count and depth of Clifford$+T$ circuit. 

In the second step, we implement
\begin{align}
&\text{Select}(\widetilde{G}_k)\sum_{k=0}^{N_a-1}\widetilde\beta_k|k\rangle_A|0^{n_b}\rangle_B=\sum_{k=0}^{N_a-1}\widetilde{\beta}_k|k\rangle_A|\widetilde{\phi}_k\rangle_B\notag\\
\equiv&|\widetilde{\psi}_{\text{targ}}\rangle
\end{align}
where $\widetilde{G}_k$ is a state preparation unitary satisfying $\widetilde{G}_k|0\rangle=|\widetilde\phi_k\rangle$ for some $\left\||\phi_k\rangle-|\widetilde\phi_k\rangle\right\|\leqslant \varepsilon/2$. It can be then verified that $\left\||\psi_\text{targ}\rangle-|\widetilde\psi_\text{targ}\rangle\right\|\leqslant\varepsilon$. 
According to Lemma.~\ref{lm:sp_many}, controlled-$\widetilde{G}_k$ such that $\widetilde{G}_k|0^{n_b}\rangle=|\widetilde\phi_k\rangle$ can be constructed with $O(m)$ ancillary qubits, $O(N_b\log(1/\varepsilon))$ count and $O(n_b\log(n_b/\varepsilon))=O(\log(m)\log(\log(m)/\varepsilon))$ depth of Clifford$+T$ gates. Then, according to Lemma.~\ref{lm:sl1}, with  $O(m)$ ancillary qubits, Select$(\widetilde{G}_k)$ can be constructed with  
\begin{align}
C=O(N_a\times N_b\log(1/\varepsilon))=O(N\log(1/\varepsilon))
\end{align}
gate count, and 
\begin{align}
D&=O\left(N_a\times\log(m)\log(\log(m)/\varepsilon)\right)\notag\\
&=O\left(N\frac{\log(m)\log(\log(m)/\varepsilon)}{m}\right)
\end{align}
 depth of Clifford$+T$ gates. By setting $n_{\text{anc}}=O(m)$ for some $n_{\text{anc}}\geqslant n$, we complete the proof.
\end{proof}

\noindent\textbf{Select oracle for general unitary functions.}
Suppose $x$ is an $m$-bit bitstring, and $U_x$ are general unitaries. We consider the unitary 
\begin{align}\label{eq:selectu}
\text{Select}(U_x)=\sum_{x=0}^{M-1}|x\rangle\langle x|\otimes U_x,
\end{align}
where $M=2^m$.
In below, we discuss how to construct Select$(U_x)$ based on the implementation of single-qubit-controlled-$U_x$, and the corresponding circuit complexity upper bound. We define $C_{\text{ctrl}}(U_x,r)$ and $D_{\text{ctrl}}(U_x,r)$ as the count and depth of Clifford$+T$ gates required to construct the controlled-$U_x$,  given $r$ ancillary qubits. The following result corresponds to the case with $m+r$ ancillary qubits.

\begin{lemma}[Appendix G.4 of~\cite{Childs.18}]\label{lm:sl1}
With $m+r$ ancillary qubits, $\rm{Select}$$(U_x)$ can be constructed with $O(MC_{\text{ctrl}}(U_x,r))$ count and $O(MD_{\text{ctrl}}(U_x,r))$ depth of Clifford$+T$ gates.
\end{lemma}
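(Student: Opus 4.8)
The plan is to reduce $\text{Select}(U_x)$ to two ingredients: a mechanism that, for each index value $x$, temporarily activates a single control qubit exactly when the index register holds $|x\rangle$, and the assumed single-qubit-controlled-$U_x$ primitive applied with that control qubit. The first ingredient I would handle by \emph{unary iteration} over the $m$ index bits; the second is a direct invocation of the hypothesis, which supplies a controlled-$U_x$ using $r$ ancillary qubits at cost $C_{\text{ctrl}}(U_x,r)$ and depth $D_{\text{ctrl}}(U_x,r)$. I allocate one ancilla per level of the iteration structure (at most $m$) and $r$ ancillas for the controlled gates, for a total of $m+r$, matching the claim.

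First I would set up the unary iteration on the $m$-qubit index register. Viewing the $2^m$ index values as the leaves of a balanced binary tree whose internal nodes correspond to prefixes of the bit string, I introduce one ancilla per level to store the running AND of the index bits along the current root-to-leaf path. Processing the leaves in order $x=0,1,\dots,M-1$, the ancilla at the deepest level sits in state $|1\rangle$ precisely when the index equals the current $x$, which is the single control line I need. The key quantitative fact is that moving from the control for $x$ to that for $x+1$ only updates the ancillas along the path segment that changes, and summed over all transitions this touches each tree edge $O(1)$ times. Since the tree has $O(M)$ edges, the whole iteration (including its final uncomputation restoring the level ancillas to $|0\rangle$) uses only $O(M)$ Toffoli gates, each compiled into $O(1)$ Clifford$+T$ gates.

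Next, interleaved with the iteration, at the step where the control line is active for value $x$ I would apply the controlled-$U_x$ primitive, using that line as the control and $r$ ancillas for its construction; by hypothesis this costs $C_{\text{ctrl}}(U_x,r)$ gates and $D_{\text{ctrl}}(U_x,r)$ depth, and it returns the $r$ ancillas to $|0\rangle$ so they are reused across all $M$ steps. Tallying the gate count gives $O(M)+\sum_{x} C_{\text{ctrl}}(U_x,r)=O\big(M\,C_{\text{ctrl}}(U_x,r)\big)$, the $O(M)$ iteration overhead being absorbed since $C_{\text{ctrl}}\geqslant 1$. Because the construction runs sequentially through the $M$ values, its depth is $O(M)+\sum_x D_{\text{ctrl}}(U_x,r)=O\big(M\,D_{\text{ctrl}}(U_x,r)\big)$. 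Correctness follows because on a basis state $|x\rangle$ only the $x$th controlled gate fires while all others act trivially, reproducing $\sum_x|x\rangle\langle x|\otimes U_x$, and linearity extends this to superpositions.

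The step I expect to be the main obstacle is establishing the amortized $O(1)$-per-index cost of the control generation: a naive scheme that recomputes an $m$-fold match condition independently for each $x$ incurs $O(Mm)$ overhead and can dominate $O(M\,C_{\text{ctrl}})$ whenever $C_{\text{ctrl}}$ is small (for instance $O(1)$), so the binary-tree sharing of partial-AND ancillas is essential rather than cosmetic. The accompanying subtlety is clean uncomputation at every stage---both the level ancillas and the $r$ working ancillas must be restored to $|0\rangle$---so that $\text{Select}(U_x)$ equals the claimed unitary exactly, with no residual entanglement between the word register and the ancillary register.
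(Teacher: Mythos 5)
Your proposal is correct and is essentially the same construction as the paper's: both implement $\text{Select}(U_x)$ by unary iteration over a binary tree of index prefixes, using one ancilla per level to hold partial ANDs of the index bits, firing the single-qubit-controlled-$U_x$ at each leaf, and bounding the Toffoli overhead by $O(M)$ so that the total is $O(M\,C_{\text{ctrl}}(U_x,r))$ count and $O(M\,D_{\text{ctrl}}(U_x,r))$ depth with $m+r$ ancillas. The only cosmetic difference is that the paper writes the iteration as a recursive procedure (compute/uncompute a Toffoli around each subtree, $O(1)$ per internal node), whereas you phrase the same $O(M)$ bound via amortizing path updates between consecutive leaves.
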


\begin{proof}
We introduce an ancillary register with $m$ qubits. 
We denote the $j$th qubit at the index register (encoding $|x\rangle$) and ancillary registers as $C_j$, $A_j$ respectively. We also denote $\bm{C}=[C_1,C_2,\cdots,C_m]$ and $\bm{A}=[A_0,A_1,A_2,\cdots,A_m]$. $A_0$ is initialized as $|1\rangle$ while all other ancillary qubits are initialized as $|0\rangle$.

Eq.~\eqref{eq:selectu} can be realized by querying Select$(\bm{C},\bm{A},m,0)$, which is defined recursively by Algorithm.~\ref{alg:select}. In Algorithm.~\ref{alg:select}, Toffoli($a,b;c$) is the Toffoli gate with qubit $a$ and $b$ as the controlled qubits and $c$ as the target qubit; C-$U_x(a)$ is the controlled-$U_x$ with qubit $a$ as controlled qubit and the corresponding word register as target qubits;   $\text{dim}(\bm{v})$ represent the dimension of the vector (for example, $\text{dim}(\bm{C})=n$); $v_{j}$ represents the $j$th element of $\bm{v}$ and $\bm{v}_{j:}=[v_j,v_{j+1},\cdots,v_{\text{dim}(\bm{v})}]$.

\begin{algorithm} [H]
\caption{ Select$(\bm{y},\bm{q},l,x)$
\label{alg:select}}
\begin{algorithmic}[1]
\STATE \textbf{if} $l\neq0$:   
\STATE \quad Toffoli($\overline{y_1},q_1;q_2$)
\STATE \quad Select($\bm{y_{2:}},\bm{q_{2:}}$,$l-1$,$x$)
\STATE \quad Toffoli($\overline{y_1},q_1;q_2$)
\STATE \quad Toffoli($y_1,q_1;q_2$)
\STATE \quad Select($\bm{y_{2:}},\bm{q_{2:}}$,$l-1$,$x+2^{l-1}$)
\STATE \quad Toffoli($y_1,q_1;q_2$)
\STATE \textbf{elseif} $l=0$:
\STATE \quad C-$U_x(q_1)$
\STATE \textbf{end if}
\end{algorithmic}
\end{algorithm}

In our implementation, the controlled-$U_x$ are queried for totally $M$ times with $x\in\{0,\cdots,m-1\}$ sequentially. Moreover, there are totally $O(M)$ Toffoli gates acting sequentially. Therefore, the total gate count and circuit depth are $O(MC_{\text{ctrl}}(U_x,r))$ and $O(MD_{\text{ctrl}}(U_x,r))$  respectively. 

\end{proof}

We note that Algorithm.~\ref{alg:select} can be further simplified by combining some concatenated gates~\cite{Childs.18}. But the asymptotic scaling here is optimal.

We then consider the construction of expoential  ancillary qubits. In Algorithm 4,5 of~\cite{Zhang.22}, based on the bucket-brigade architecture for quantum random access memory~\cite{Giovannetti.08,Hann.19,Hann.21}, it has been shown that any Select$(U_x)$ can be constructed by $4M-1$ ancillary qubits, $O(M)$ Clifford$+T$ gates arranged in $O(m)$ circuit depth, and queries to  all single-qubit-controlled-$U_x$ for $x\in\{0,\cdots,M-1\}$ in parallel. If each controlled-$U_x$ uses $r$ ancillary qubits, we require totally $M(4+r)-1$ ancillary qubits, because they are implemented in parallel. To sum up, we have the following result.

\begin{lemma}[many qubit Select oracle]\label{lm:sl2}

With $M(4+r)-1$ ancillary qubits, $\rm{Select}$$(U_x)$ can be constructed with $O(MC_{\text{ctrl}}(U_x,r))$ count and $O(m+D_{\text{ctrl}}(U_x,r))$ depth of Clifford$+T$ gates.
\end{lemma}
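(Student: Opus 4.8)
The plan is to reduce the construction to the parallel bucket-brigade routing primitive established in Algorithms 4 and 5 of~\cite{Zhang.22}, and then to account for the additional ancillary cost incurred when each idealized query to controlled-$U_x$ is replaced by its explicit $r$-ancilla Clifford$+T$ realization. First I would recall the routing primitive: the bucket-brigade architecture of~\cite{Giovannetti.08,Hann.19,Hann.21} arranges $4M-1$ ancillary qubits into an $m$-level binary tree addressed by the index register, and, using $O(M)$ Clifford$+T$ gates in $O(m)$ depth, coherently routes the index state $|x\rangle$ so that for each value of $x$ a distinct leaf branch is activated. The property I would invoke is that this places the $M$ leaf-local controlled operations on disjoint, simultaneously available qubit blocks.

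Second, I would attach at each of the $M$ leaves the explicit controlled-$U_x$ circuit counted by $C_{\text{ctrl}}(U_x,r)$ and $D_{\text{ctrl}}(U_x,r)$. Because the routing places these $M$ circuits on disjoint blocks, running them concurrently costs depth $D_{\text{ctrl}}(U_x,r)$ rather than $M\,D_{\text{ctrl}}(U_x,r)$, while the total gate count is $M\,C_{\text{ctrl}}(U_x,r)$. I would then uncompute the routing tree, again in $O(M)$ gates and $O(m)$ depth, leaving the net action $\sum_x|x\rangle\langle x|\otimes U_x$.

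Third comes the bookkeeping. The ancilla budget is the $4M-1$ routing qubits plus one block of $r$ ancillas for each of the $M$ concurrently executed controlled-$U_x$ circuits; since these run in the same time window their ancillas cannot be reused, contributing $Mr$ qubits and a total of $M(4+r)-1$. Summing the routing and query contributions gives count $O(M)+M\,C_{\text{ctrl}}(U_x,r)=O(M\,C_{\text{ctrl}}(U_x,r))$ and depth $O(m)+D_{\text{ctrl}}(U_x,r)=O(m+D_{\text{ctrl}}(U_x,r))$, as claimed.

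The main obstacle I anticipate is justifying the parallelism rigorously rather than the arithmetic: one must verify that the bucket-brigade routing genuinely disperses the $M$ controlled-$U_x$ circuits onto disjoint qubit blocks in coherent superposition, so that they form a single parallel layer, and that the subsequent uncomputation restores all $4M-1$ routing qubits to $|0\rangle$ without disturbing the encoded action. The resource counting is then routine once the parallel composition of~\cite{Zhang.22} is taken as given, the only care being that the $r$-ancilla blocks are counted $M$ times rather than reused.
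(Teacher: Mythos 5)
Your proposal matches the paper's own justification: the paper likewise cites Algorithms 4 and 5 of~\cite{Zhang.22} (the bucket-brigade routing using $4M-1$ ancillae, $O(M)$ gates in $O(m)$ depth) and then adds $Mr$ ancillae because the $M$ controlled-$U_x$ queries run in parallel, giving the stated count and depth. Your accounting and the parallel-composition argument are exactly the paper's, just spelled out in slightly more detail.
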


\noindent\textbf{Select oracle for general unitary functions.} 
In below, we give the proof of Theorem.~\ref{th:sop} about the construction of select oracles for Pauli strings defined in 
Eq.~\eqref{eq:slp}. Note that Eq.~\eqref{eq:slp} is a special case of Eq.~\eqref{eq:selectu} with $U_x\in\{\pm I,\pm X,\pm Y,\pm Z\}^{\otimes L}$.

\begin{proof}[proof of Theorem.~\ref{th:sop}.] 
Recall that Select oracle for Pauli strings corresponds to Eq.~\eqref{eq:selectu} with $U_x=H_x$, where $H_x=\bigotimes_{l=1}^{L}H_{x,l}$ and $H_{x,l}\in\{\pm I,\pm X,\pm Y,\pm Z\}$.

Given $L$ ancillary qubits, controlled-$H_x$ can be constructed with the following circuit. 
\begin{center}
\begin{quantikz}
\lstick{$c$}&\ctrl{6}&\qw&\qw&\ctrl{6}&\qw\\
\lstick{$a_1$}&\targ{}&\ctrl{1}&\qw&\targ{}&\qw\\
\lstick{$t_{1}$}&\qw&\gate{H_{x,1}}&\qw&\qw&\qw\\
\lstick{$a_2$}&\targ{}&\ctrl{1}&\qw&\targ{}&\qw\\
\lstick{$t_{2}$}&\qw&\gate{H_{x,2}}&\qw&\qw&\qw\\
\cdots&&\cdots&\cdots&&\cdots&\\
\lstick{$a_{L}$}&\targ{}&\ctrl{1}&\qw&\targ{}&\qw\\
\lstick{$t_{L}$}&\qw&\gate{ H_{x,L}}&\qw&\qw&\qw
\end{quantikz}
\end{center}
where control qubit is denoted as $c$, ancillary qubits, all initialized as $|0\rangle$, are denoted as $a_1, a_2,\cdots,a_{L}$ and target qubits are denoted as $t_1, t_2,\cdots,t_{L}$ respectively. Two of the $L$-Toffoli gates can be effectively constructed with $O(L)$ count and $O(\log L)$ depth of Clifford$+T$ gates. All controlled Pauli gates can be constructed with totally $O(L)$ count and $O(1)$ depth of Clifford$+T$ gates. In other words, we have $C_{\text{ctrl}}(H_x,L)=O(L)$ and $D_{\text{ctrl}}(H_x,L)=O(\log L)$.

Our protocol of constructing Select$(H_x)$ uses at least $\Omega(m+L)$ ancillary qubits. 
We divide the $m$-qubit index registers into two subregisters $A$ and $B$ with $m_a\geqslant\log_2(m+L)$ and  $m_b=m-m_a$ qubits respectively. Let $M_a=2^{m_a}, M_b=2^{m_b}$, $\text{Select}(H_{x})$ can be rewritten as
\begin{align}
 \text{Select}(H_{x}) &=\sum_{x_a=0}^{M_a-1}|x_a\rangle\langle x_a|\otimes V_{x_a}\\
V_{x_a} &=\sum_{x_b=0}^{M_b-1}|x_b\rangle\langle x_b|\otimes{H}_{x_{a}\oplus x_b}.
\end{align}
$x_a$ and $x_b$ are bit strings with $m_a$ and $m_b$ bits respectively, and $x\equiv x_a\oplus x_b$. According to Lemma.~\ref{lm:sl2}, $V_{x_a}$ can be constructed with $M_a(4+L)-1$ ancillary qubits, $O(M_aL)$ count and $O(m_a+\log L)$ depth of Clifford$+T$ gates. According to Lemma.~\ref{lm:sl1}, with totally $n_{\text{anc}}=M_a(4+L)-1+m_b$ ancillary qubits, the Clifford$+T$ gate count of Select$(H_x)$ is 
\begin{align}
C=O(M_bM_aL)=O(ML).
\end{align}
The Coifford$+T$ depth is  
\begin{align}
D&=O(M_b(m_a+\log L))\notag\\
&=O\left(M\frac{\log(M_aL)}{M_a}\right)\notag\\
&=O\left(M\frac{\log((n_{\text{anc}}+1)/4)}{(n_{\text{anc}}+1)/4L}\right)\notag\\
&=O\left(ML\frac{\log n_{\text{anc}}}{n_{\text{anc}}}\right),
\end{align}
which completes the proof.
\end{proof}

$\\$
\noindent\textbf{Details about LCU-based Block-encoding}
 
\noindent Without loss of generality, we assume that $m=\log_2P$ is an integer. We let $\tilde G_{|\bm{\alpha}\rangle}$ be a  state preparation unitary satisfying $\left\|\tilde G_{|\bm{\alpha}\rangle}|0^{m}\rangle_{\text{anc}}-G_{|\bm{\alpha}\rangle}|0^{m}\rangle_{\text{anc}}\right\|\leqslant\varepsilon/3$. Let $\tilde u_p$ be unitaries satisfying $\|\tilde u_p-u_p\|\leqslant\varepsilon/3$. 
We then define

\begin{align} 
\tilde{\mathbb{G}}&\equiv\tilde G_{|\bm{\alpha}\rangle}\otimes\mathbb{I}_{N},\\
 U&\equiv\mathbb{G}^\dag \text{Select}(u_p)\mathbb{G},\\ 
 \tilde U&\equiv\tilde{\mathbb{G}}^\dag \text{Select}(u_p)\tilde{\mathbb{G}},\label{eq:tg}
\end{align}
and $W\equiv\tilde U-U$. With a similar argument to Eq.~\eqref{eq:WL}, we have 
 
 \begin{align}\label{eq:wpsi0}
 \left\|W|\Psi\rangle\right\|\leqslant\varepsilon,
 \end{align}
where $|\Psi\rangle=|0^{m}\rangle\otimes|\psi\rangle$ and $|\psi\rangle$ is an arbitrary $N$-dimensional quantum state.
We may rewrite $W$ as 
\begin{align}\label{eq:defw}
W=\begin{pmatrix}\delta H&W_{1,2}\\W_{2,1}&W_{2,2}\end{pmatrix}
\end{align}
where $\delta H\in\mathbb{C}^{P\times P}$, $W_{1,2}\in\mathbb{C}^{N\times P}$, $W_{2,1}\in\mathbb{C}^{P\times N}$ and $W_{2,2}\in\mathbb{C}^{N\times N}$. Note that if $\|\delta H\|\leqslant\varepsilon$, $\tilde U$ is a $(m,\varepsilon)$-block-encoding of $H$. We have

\begin{align}\label{eq:wpsi}
W|\Psi\rangle=\begin{pmatrix}\delta H&W_{1,2}\\W_{2,1}&W_{2,2}\end{pmatrix}\begin{pmatrix}|\psi\rangle\\0\end{pmatrix}
=
\begin{pmatrix}\delta H|\psi\rangle\\W_{2,1}|\psi\rangle\end{pmatrix}
\end{align}
Combining Eq.~\eqref{eq:wpsi0} with Eq.~\eqref{eq:wpsi}, we have 
\begin{align}\label{eq:dh}
\|\delta H|\psi\rangle\|\leqslant\|W|\Psi\rangle\|\leqslant\varepsilon.
\end{align}
Because Eq.~\eqref{eq:dh} is applied for arbitrary $|\psi\rangle$, we have $\|\delta H\|\leqslant\varepsilon$. Therefore, $\tilde U$ is a $(m,\varepsilon)$-block-encoding to $H$.  We can now study the efficiency of block-encoding.

The actual circuit complexity depends on the form of $u_p$. We now proof Theorem.~\ref{th:bl1} which corresponds to $u_p\in\{\pm I,\pm X,\pm Y,\pm Z\}^{\otimes n}$.

\begin{proof}[Proof of Theorem.~\ref{th:bl1}]
With $n_\text{anc}$ ancillary qubits where $\log_2P\leqslant n_\text{anc}\leqslant O(P)$, $\tilde{\mathbb{G}}$ can be constructed with $O(P\log(1/\varepsilon))$ count and $O\left(P\frac{\log(n_{\text{anc}})\log(\log(n_{\text{anc}})/\varepsilon)}{n_{\text{anc}}}\right)$ depth of Clifford$+T$ gates. With $\Omega(\log_2P)\leqslant n_{\text{anc}}\leqslant O(Pn)$, Select$(H_x)$ can be constructed with $O(nP)$ count and $O\left(nP\frac{\log n_{\text{anc}}}{n_{\text{anc}}}\right)$ depth of Clifford$+T$ gates. Therefore, the total gate count of Select$(H_x)$ is $O(P(n+\log(1/\varepsilon)))$. For $\Omega(\log_2P)\leqslant n_{\text{anc}}\leqslant O(P)$, the circuit depth is 

\begin{align}
&O\left(P\frac{\log n_{\text{anc}}}{n_{\text{anc}}} \left(n+\log(\log(n_{\text{anc}})/\varepsilon\right))\right)\noindent\\
=&O\left(P\left(n+\log(1/\varepsilon)\right)\frac{\log n_{\text{anc}}}{n_{\text{anc}}} \right).
\end{align}
For $\Omega(P)\leqslant n_{\text{anc}}\leqslant O(Pn)$, the circuit depth for $\tilde{\mathbb{G}}$ is $O(\log P\log(\log(P)/\varepsilon))=O(\log n\log(\log n/\varepsilon))$, where we have used the assumption $P=O(\text{Poly}(n))$. Combining with circuit depth of Select$(H_x)$, the total circuit depth for block-encoding is

\begin{align}
&O\left(\left(\frac{n_{\text{anc}}\log(n)\log(\log (n)/\varepsilon)}{\log n_{\text{anc}}}+nP\right)\frac{\log n_{\text{anc}}}{n_{\text{anc}}} \right)\notag\\
=&O\left(\left(\frac{nP\log(n)\log(\log (n)/\varepsilon)}{\log (nP)}+nP\right)\frac{\log n_{\text{anc}}}{n_{\text{anc}}} \right)\notag\\
=&O\left(\left(nP\log(\log (n))+nP\log(1/\varepsilon)\right)\frac{\log n_{\text{anc}}}{n_{\text{anc}}} \right)\notag\\
=&\tilde O\left(nP\log(1/\varepsilon)\frac{\log n_{\text{anc}}}{n_{\text{anc}}} \right),
\end{align}
which completes the proof. 
\end{proof}

$\\$
\noindent\textbf{Sparse Boolean memory}

\noindent Recall that sparse Boolean memory performs the transformation $\text{Select}(B)|q\rangle_{\text{idx}}|z\rangle_{\text{wrd}}=|q\rangle_{\text{idx}}|z\oplus B(q)\rangle_{\text{wrd}}$, idx represents an $n$-qubit index register, wrd represents a $\tilde n$-qubit register, and there are most $s$ input digits $q$ satisfying $B(q)\neq0\cdots0$.
We define $q_k$ as the $k$th input digit with nonzero output, and $\mathcal{Q}_B\equiv\{q_1,q_2,\cdots,q_s\}$. 
In~\cite{Zhang.22}, we have developed a construction of SBM with $O(ns\tilde n)$ ancillary qubits. The result is as follows.

\begin{lemma}[Sec.III B in Supplemental Material of~\cite{Zhang.22}]\label{th:sbm_many}
With $O(ns\tilde n)$ ancillary qubits, Select$(B)$ can be realized with $O(ns\tilde n)$ count and $O(\log(ns\tilde n))$ depth of Clifford$+T$ gates.
\end{lemma}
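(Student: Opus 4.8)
The plan is to reduce the $n$-bit-address, $\tilde n$-bit-word, $s$-sparse Boolean memory to a select oracle over only $\lceil\log_2 s\rceil$ address bits, so that the sparsity parameter $s$—rather than the full dimension $N=2^n$—controls the cost. The key observation is that $\mathrm{Select}(B)$ acts nontrivially only on the $s$ addresses $\mathcal{Q}_B=\{q_1,\dots,q_s\}$ and is the identity everywhere else. I would therefore first build a front end that maps an incoming index $|q\rangle_{\text{idx}}$ to a one-hot ``match'' register identifying which (if any) of the $q_k$ it equals, then route a small $s$-branch select that writes the corresponding word, and finally uncompute the front end.

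Concretely, I would proceed in four stages. (i) \emph{Matching}: broadcast the $n$ address qubits to $s$ parallel comparators and, for each $k$, compute a flag $f_k=[\,q=q_k\,]$ by an $n$-bit equality test realized as a balanced AND-tree; this costs $O(ns)$ gates in $O(\log n+\log s)$ depth using $O(ns)$ scratch ancillas, and because the $q_k$ are distinct the resulting flags are mutually exclusive. (ii) \emph{Encoding}: collapse the one-hot flags into a $\lceil\log_2 s\rceil$-qubit pointer $|k\rangle$ with a one-hot-to-binary encoder, costing $O(s\log s)$ gates and $O(\log s)$ depth. (iii) \emph{Write}: apply a select oracle over the pointer register in which branch $k$ XORs the fixed $\tilde n$-bit string $B(q_k)$ into $\text{wrd}$; invoking Lemma~\ref{lm:sl2} with $m=\lceil\log_2 s\rceil$ and $C_{\text{ctrl}}=O(\tilde n)$, $D_{\text{ctrl}}=O(\log\tilde n)$ (a controlled $\tilde n$-fold fanout of CNOTs) gives $O(s\tilde n)$ count and $O(\log s+\log\tilde n)$ depth. (iv) \emph{Uncompute} stages (ii) and (i) by reversal. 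Summing, the gate count is $O(ns+s\tilde n)\subseteq O(ns\tilde n)$, the ancilla count is likewise $O(ns+s\tilde n)\subseteq O(ns\tilde n)$, and the depth is $O(\log(ns\tilde n))$, matching the claimed bound, which is what the construction of~\cite{Zhang.22} applied to sparse $B$ achieves.

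The main obstacle is keeping the depth logarithmic while letting the gate and ancilla counts grow with $s(n+\tilde n)$: this forces every comparison and every word bit to be touched simultaneously, which in turn requires the bucket-brigade routing of~\cite{Zhang.22} to fan the address out to all $s$ comparators and to fan each flag out to its $\tilde n$ word targets without serialization. Correctness hinges on the flags being genuinely one-hot, so that the net action on each word qubit is the intended single XOR $z\mapsto z\oplus B(q)$, and on a clean, fully parallel uncomputation of all scratch so that $\mathrm{Select}(B)$ is exactly unitary on the entire register rather than merely on the support $\mathcal{Q}_B$.
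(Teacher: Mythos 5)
The paper does not actually prove this lemma: it is imported verbatim, with the bracketed attribution, from Sec.~III~B of the Supplemental Material of Ref.~\cite{Zhang.22}, so there is no in-paper argument to compare against line by line. Your self-contained construction is sound and is in the same spirit as the cited one (parallel match-and-write with tree-structured fanout): the $s$ parallel $n$-bit comparators cost $O(ns)$ gates and ancillas in depth $O(\log n+\log s)$ once the address is fanned out, the write stage via Lemma~\ref{lm:sl2} with $M=O(s)$, $C_{\text{ctrl}}=O(\tilde n)$, $D_{\text{ctrl}}=O(\log\tilde n)$ costs $O(s\tilde n)$ gates and ancillas in depth $O(\log s+\log\tilde n)$, and uncomputation doubles these. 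Your totals $O(s(n+\tilde n))$ for count and ancillas are in fact tighter than the stated $O(ns\tilde n)$, and the depth $O(\log(ns\tilde n))$ matches. One could even drop stage~(ii) entirely and let each flag $f_k$ directly drive a fanned-out XOR of $B(q_k)$ into the word register, collecting the contributions to each word qubit by a balanced XOR tree; this avoids the encoder without changing the asymptotics.

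The one point that needs explicit care is the no-match case. With a $\lceil\log_2 s\rceil$-qubit pointer, an input $q\notin\mathcal{Q}_B$ leaves the pointer at $|0\cdots0\rangle$, which collides with a legitimate branch index of the $s$-branch select and would erroneously write $B(q_0)$ (or $B(q_1)$). You must either reserve pointer value $0$ for ``no match'' using $\lceil\log_2(s+1)\rceil$ pointer qubits with branch $0$ acting as the identity, or condition the entire write stage on an additional ``some flag fired'' qubit computed as the OR of the $f_k$. Either fix costs $O(s)$ extra gates and $O(\log s)$ extra depth and does not affect the claimed bounds, but without it the constructed unitary is not $\mathrm{Select}(B)$ on the full domain.
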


Based on Lemma.~\ref{th:sbm_many}, we can obtain the gate complexity with intermediate number of ancillary qubits. The proof of Lemma.~\ref{th:sbm} is given as follows.

\begin{proof}[Proof of Lemma.~\ref{th:sbm}]
  Let $\text{wrd}_l$ be the $l$th qubit of the word register, and $z_l$ be the $l$th digit of $z$. So $|z\rangle_\text{wrd}=\prod_{l=1}^{\tilde n}|z_{l}\rangle_{\text{wrd}_{l}}$. $\text{Select}(B)$ can be separated into multiple Boolean functions applied at different words. 
 Let $B_{l}(q)$ be the $l$th digit of $B(q)$, and $B_{l_{\min}:l_{\max}}(q)\equiv B_{l_{\max}}(q)\cdots B_{l_{\min}+1}(q)B_{l_{\min}}(q)$. We define $\text{Select}(B_{l_{\min}:l_{\max}})$ as a unitary satisfying 
\begin{align}
&\text{Select}(B_{l_{\min}:l_{\max}})|q\rangle_{\text{idx}}\prod_{l=l_{\min}}^{l_{\max}}|z_l\rangle_{\text{wrd}_l}\\
=&|q\rangle_{\text{idx}}\prod_{l=l_{\min}}^{l_{\max}}|z_l\oplus B_l(q)\rangle_{\text{wrd}_l}.
\end{align}
For any $1=l_0< l_1<\cdots< l_{n'}=\tilde n+1$, it can be verified that 
\begin{align}\label{eq:B_a}
\text{Select}(B)=\prod_{r=1}^{n'}\text{Select}(B_{l_{r-1}:l_{r}-1}).
\end{align}
We also define $\text{Select}(B_{l})=\text{Select}(B_{l:l})$. For each $B_l$, we further define Boolean functions $B_{l,k_{\min}:k_{\max}}(q)=B_{l}(q)\wedge (k_{\min}\leqslant k\leqslant k_{\max})$ for $k_{\min}\leqslant k_{\max}$. For any $0=k_0<k_1<\cdots<k_{s'}=s$, it can be verified that  
\begin{align}\label{eq:B_b}
\text{Select}(B_l)=\prod_{j=1}^{s'}\text{Select}(B_{l,k_{j-1}:k_j-1}).
\end{align}

We first consider the construction with ancillary qubit number $O(ns)\leqslant n_{\text{anc}}\leqslant O(ns\widetilde n)$. In this case, we decompose Select$(B(q))$ with Eq.~\eqref{eq:B_a}. We let $d=\lfloor n_{\text{anc}}/(ns)\rfloor$ and $n'=\lceil\tilde n/d\rceil$, and 
\begin{eqnarray}
l_r= \left\{
\begin{array}{lll}
rd+1     &    & r< n'\\
\tilde n+1    &  & r=n'\\  
\end{array} \right. \ .
\end{eqnarray}
 According to Lemma.~\ref{th:sbm_many}, with $n_{\text{anc}}$ ancillary qubits, each $\text{Select}(B_{l_{r-1}:l_{r}-1})$ can be constructed with $O(nsd)$ count and $O(\log(nsd))=O(\log n_{\text{anc}})$ depth of Clifford$+T$ circuit. So the total gate count is $O(nsd)\times n'=O(ns\tilde n)$, and the total circuit depth is $O(\log(nsn_{\text{anc}}))\times n'=O\left(ns\tilde n\frac{\log n_{\text{anc}}}{n_{\text{anc}}}\right)$.

We then consider the construction with ancillary qubit number $O(n)\leqslant n_{\text{anc}}\leqslant O(ns)$. In this case, we first perform the decomposition $\text{Select}(B)=\prod_{r=1}^{\tilde n}\text{Select}(B_{l})$. Then, we decompose each $\text{Select}(B_{l})$ with Eq.~\eqref{eq:B_b}. We let $w=\lfloor m/n\rfloor$ and $s'=\lceil s/w\rceil$, and 
\begin{eqnarray}
k_j= \left\{
\begin{array}{lll}
jw     &    & j< s'\\
s    &  & j=n'\\  
\end{array} \right. \ .
\end{eqnarray}
According to Lemma.~\ref{th:sbm_many}, with $n_{\text{anc}}$ ancillary qubits, each $\text{Select}(B_{l,k_{j-1}:k_{j}-1})$ can be constructed with $O(nw)$ count and $O(\log(nw))=O(\log n_{\text{anc}})$ depth of Clifford$+T$ circuit. So each Select$(B_l)$ requires gate count $O(nw)\times s'=O(ns)$, and circuit depth $O(\log(n_{\text{anc}}))\times s'=O\left(ns\frac{\log n_{\text{anc}}}{n_{\text{anc}}}\right)$. In this case, we have $n'=\tilde n$ in Eq.~\eqref{eq:B_a}, so the total gate count and circuit depth of Select$(B(q))$ is $O(ns\tilde n)$ and $O\left(ns\tilde n\frac{\log n_{\text{anc}}}{n_{\text{anc}}}\right)$  respectively.
\end{proof}

\noindent\textbf{Data availability} Data sharing is not applicable to this article as no datasets were generated or analysed during the current study.

\noindent\textbf{Acknowledgement} This work is supported by the National Natural Science Foundation of China (Grant No. 12175003, 12247124 and 12361161602), NSAF (Grant No.~U2330201) and Project funded by China Postdoctoral Science Foundation (Grant No. 2023T160004).


\noindent\textbf{Competing interests} The authors declare that there are no competing interests. 

\noindent\textbf{Author Contributions} Xiao-Ming Zhang conceived the project. All authors  contributed in the preparation of the manuscript.

%

\vspace{1cm}
\onecolumngrid
\newpage

\begin{center}
{\bf\large Supplementary material}
\end{center}
\vspace{0.5cm}

\setcounter{secnumdepth}{3}  
\setcounter{equation}{0}
\setcounter{figure}{0}
\setcounter{table}{0}
\setcounter{section}{0}

\renewcommand{\theequation}{S-\arabic{equation}}
\renewcommand{\thefigure}{S\arabic{figure}}
\renewcommand{\thetable}{S-\Roman{table}}
\renewcommand\figurename{Supplementary Figure}
\renewcommand\tablename{Supplementary Table}
\newcommand\citetwo[2]{[S\citealp{#1}, S\citealp{#2}]}
\newcommand\citecite[2]{[\citealp{#1}, S\citealp{#2}]}

\newcolumntype{M}[1]{>{\centering\arraybackslash}m{#1}}
\newcolumntype{N}{@{}m{0pt}@{}}

\makeatletter \renewcommand\@biblabel[1]{[S#1]} \makeatother

\makeatletter \renewcommand\@biblabel[1]{[S#1]} \makeatother



\section{Supplementary Discussion 1: Circuit complexity lower bound}\label{app:lower}
\subsection{Capacity of quantum circuits with fixed gate count: proof of Lemma.~\ref{lm:CC}}\label{app:CC}

Without loss of generality, we require that all ancillary qubits are initialized at trivial state, $|0\rangle$, and uncomputed at the end of the quantum circuits. The quantum circuit with $C$ quantum gates can generally be expressed as 
\begin{align}\label{eq:G}
G=\prod_{j=1}^CG(\theta_j,a_j,b_j),
\end{align}
where $G(\theta_j,a_j,b_j)$ is the two-qubit gate $G(\theta_j)\in\mathcal{G}_{\text{ele}}$ applied at qubits with label $a_j$ and $b_j$, which can be either data or ancillary qubits. $\theta_j$ is the label of the elementary gate. Our proof is based on Algorithm.~\ref{alg:unique_gates} which specify the parameters in Eq.~\eqref{eq:G}. We denote the set containing all unitaries that can be constructed with Algorithm.~\ref{alg:unique_gates} as $\mathcal{G}_{\text{alg1}}$. All unitaries in $\mathcal{G}_C$ (i.e. all ancillary qubits are uncomputed) can be constructed with Algorithm.~\ref{alg:unique_gates}. So we have $\mathcal{G}_{C}\subset\mathcal{G}_{\text{alg1}}$, and $|\mathcal{G}_{C}|\leqslant |\mathcal{G}_{\text{alg1}}|$.  In below, we discuss the number of elements in $\mathcal{G}_{\text{alg1}}$.

Algorithm.~\ref{alg:unique_gates} contains two stages. In the first stage, we determine $a_j$ and $b_j$ for each $j$, which represent the ``skeleton'' of the circuit without specifying what exact elementary gate is applied. In the second stage, we determine $\theta_j$ for each $j$, i.e. fill in the skeleton with specific elementary gates in $\mathcal{G}_{\text{ele}}$.

We first consider stage $1$. We define $\mathcal{D}$ as a set containing all data qubits. For the first gate $j=1$, we may choose to introduce $0, 1$ or $2$ ancillary qubits in which the first gate is applied at. In case introducing $0$ ancillary qubit, we choose both $a_1$ and $b_1$ among $\mathcal{D}$, which accounts for $n(n-1)$ possible choices. In case introducing $1$ ancillary qubit, we label the ancillary qubit being introduced as $A_1$, and set $a_1=A_1$ (or $b_1=A_1$). We then choose $b_1$ (or $a_1$) among $\mathcal{D}$. This case accounts for $2n$ possible choices. In case introducing $2$ ancillary qubits, we label them as $A_1$ and $A_2$ respectively. We set $a_1=A_1, b_2=A_2$, or $a_1=A_2, b_2=A_1$. So this case accounts for $2$ possible choices. Taking all cases into consideration, there are totally $n(n-1)+2n+2=O(n^2)$ possible choices for the first gate. After this step, we set $n_{\text{anc}}$ as the number of ancillary qubits that has been introduced ($n_{\text{anc}}=0,1$ or $2$).

The subsequent gates ($j>1$) are more involved. For the $j$th gate, we define $\mathcal{D}_j$ as the set containing all data qubits and all ancillary qubits that have been introduced. Note that ancillary qubits in $\mathcal{D}_j$ have been labeled as $A_1, A_2,\cdots,A_{n_{\text{anc}}}$ respectively.  Similarly, for the $j$th gate, we can introduce $0, 1$ or $2$ extra ancillary qubits in which the gate is applied at. We note that $|\mathcal{D}_j|\leqslant (n+2(j-1))$. So if we introduce $0$ extra ancillary qubits, there are totally possible $|\mathcal{D}_j|(|\mathcal{D}_j|-1)$ choices for $(a_j, b_j)$. If we introduce 1 extra ancillary qubit, we set $n_{\text{anc}}\leftarrow n_{\text{anc}}+1$, and label the extra qubit as $A_{n_{\text{anc}}}$. 
In this case, we set $a_j=A_{n_{\text{anc}}}$ (or $b_j=A_{n_{\text{anc}}}$), and choose $b_j$ among all qubits in $\mathcal{D}_j$. There are totally $O(|\mathcal{D}_j|)$ possible choices. If we introduce $2$ extra ancillary qubits, we set $n_{\text{anc}}\leftarrow n_{\text{anc}}+2$, and label the them as $A_{n_{\text{anc}}-1}$ and $A_{n_{\text{anc}}}$ respectively. In this case, we set $a_j=A_{n_{\text{anc}}-1}, b_2=A_{n_{\text{anc}}}$, or $a_1=A_{n_{\text{anc}}}, b_2=A_{n_{\text{anc}}-1}$. So there are $2$ possible choices. Taking all cases into considerations, there are totally $O(|\mathcal{D}_j|^2)$ possible choices for the $j$th gate. It can be verified iteratively that $|\mathcal{D}_j|\leqslant n+2(j-1)$, so there are totally $O((n+j)^2)$ choices.

Summing up all possible choices from $j=1$ to $j=C$, there are totally 
\begin{align}
\prod_{j=1}^CO((n+j)^2)=O((n+C)^C)
\end{align}
possible choices of all labels $a_j, b_j$ in the first stage.

The second stage is to determine $\{\theta_j\}$, which is simpler. We choose each $G(\theta_j)$ among $\mathcal{G}_{\text{ele}}$ independently, and there are totally $O(g^C)$ possible choices. 

Combining both stages, in Algorithm.~\ref{alg:unique_gates}, there are totally $O\left(((C+n)g)^C\right)$ possible choices of the parameters, i.e. $|\mathcal{G}_{\text{alg1}}|=O\left(((C+n)g)^C\right)$. 
It can be verified that all unitaries in $\mathcal{G}_{C}$ can be constructed with Algorithm.~\ref{alg:unique_gates}, so we have 
\begin{align}
|\mathcal{G}_{C}|\leqslant |\mathcal{G}_{\text{alg}1}|=O\left(((C+n)g)^C\right).
\end{align}
 Because $g=O(1)$, we have 
\begin{align}\label{eq:logc}
\log|\mathcal{G}_{C}|=O\left(C\log(C+n)\right),
\end{align}
which completes the proof.

\begin{algorithm} [H]
\caption{Pseudo code for determining parameters in Eq.~\eqref{eq:G}  \label{alg:unique_gates}}
\begin{algorithmic}[1]
\STATE set $\mathcal{D}_1\leftarrow\mathcal{D}$ and $n_{\text{anc}}=0$  \quad $\#$ Beginning of stage 1
\STATE \textbf{for} $j=1$ \textbf{to} $C$:
\STATE \quad\textbf{choose} Extra$\_$Anc=0 \textbf{or} $1$  \textbf{or} $2$  
\STATE \quad\quad $n_{\text{anc}}\leftarrow n_{\text{anc}}+$Extra$\_$Anc
\STATE \quad\textbf{if} Extra$\_$Anc=0:
\STATE \quad\quad  \textbf{choose} $a_j$ among $\mathcal{D}_j$
\STATE \quad\quad  \textbf{choose} $b_j$ among $\mathcal{D}_j-\{a_j\}$
\STATE \quad\quad $\mathcal{D}_{j+1}\leftarrow\mathcal{D}_j$

\STATE \quad\textbf{else if} Extra$\_$Anc=1:
\STATE \quad\quad \textbf{choose} $a_j=A_{n_\text{anc}}$ \textbf{or} $b_j=A_{n_\text{anc}}$
\STATE \quad\quad \textbf{if} $a_j=A_{n_\text{anc}}$:
\STATE \quad\quad\quad  \textbf{choose} $b_j$ among $\mathcal{D}_j$
\STATE \quad\quad \textbf{else if} $b_j=A_{n_\text{anc}}$:
\STATE \quad\quad\quad  \textbf{choose} $a_j$ among $\mathcal{D}_j$
\STATE \quad\quad \textbf{end if} 
\STATE \quad\quad $\mathcal{D}_{j+1}\leftarrow\mathcal{D}_j\bigcup\{A_{n_\text{anc}}\}$
\STATE \quad\quad $n_\text{anc}\leftarrow n_\text{anc}+1$

\STATE \quad\textbf{else if} Extra$\_$Anc=2:
\STATE \quad\quad  \textbf{choose} $(a_j,b_j)=(A_{n_\text{anc}-1}, A_{n_\text{anc}})$ \textbf{or} $(a_j,b_j)=(A_{n_\text{anc}}, A_{n_\text{anc}-1})$
\STATE \quad\quad $\mathcal{D}_{j+1}\leftarrow\mathcal{D}_j\bigcup\{A_{n_\text{anc}-1}, A_{n_\text{anc}}\}$

\STATE  \quad\textbf{end if}
\STATE \textbf{end for}

\STATE \textbf{for} $j=1$ \textbf{to} $C$:   \quad $\#$ Beginning of stage 2
\STATE\quad  \textbf{choose} $G(\theta_j)$ among $\mathcal{G}_{\text{ele}}$
\STATE \textbf{end for}

\end{algorithmic}
\end{algorithm}

\subsection{Capacity of quantum circuits with fixed depth and qubit number: proof of Lemma.~\ref{lm:cancd} }\label{app:cancd}

We then study the upper bound of the total number of unique quantum circuits with circuit depth $D$ and ancillary qubit number $n_{\text{anc}}$. 
With a fixed qubit number $n+n_{\text{anc}}$, the quantum circuit may be constructed in the following way. In the first stage, at each layer, we partition $(n+n_{\text{anc}})$ qubits into $(n+n_{\text{anc}})/2$ subsets (assuming $(n+n_{\text{anc}})$ is even), each of which contains two qubits. There are totally $O((n+n_{\text{anc}})^{2D})$ possible ways of partitions. In the second stage, we fill each subset with an elementary gate in $\mathcal{G}_{\text{ele}}$. There are totally $g$ choices for each subset, so there are totally $O(g^{(n+n_{\text{anc}})D/2})$ possible choices in this stage. 
$\mathcal{G}'_{n_{\text{anc}},D}$ is a subset of quantum circuits constructed in the process above. So we have 
\begin{align}
\left|\mathcal{G}'_{n_{\text{anc}},D}\right|=O\left((n+n_{\text{anc}})^{2D}\times g^{(n+n_{\text{anc}})D/2}\right).
\end{align}
Because $g=O(1)$, we have
\begin{align}\label{eq:logd}
\log\left|\mathcal{G}'_{n_{\text{anc}},D}\right|=O\left(D\log(n+n_{\text{anc}})+D(n+n_{\text{anc}})\right)=O(D(n+n_{\text{anc}})),
\end{align}
which completes the proof. 

$\left|\mathcal{G}'_{n_{\text{anc}},D}\right|$ may be tighten with a restriction similar to Lemma.~\ref{lm:CC}, i.e. acting trivially at the ancillary subspace. But the current result is sufficient for our analysis.

\subsection{Circuit depth lower bound for SAIM: proof of Theorem.~\ref{th:lbsm}}\label{app:lbsm}
To begin with, we introduce the following lemma about the number of unique matrices required to ``cover'' another set of matrices.  

\begin{lemma}\label{lm:mono}
Given a set of matrices $\mathcal{M}$ satisfying the following: If $M_1,M_2\in\mathcal{M}$ and $M_1\neq M_2$, then, $\|M_1-M_2\|\geqslant 2\varepsilon$. We suppose there is another set of matrices $\mathcal{M}'$ satisfying the following: If $M\in\mathcal{M}$, there exist $M'\in\mathcal{M}$ such that $\|M-M'\|<\varepsilon$. Then, we have $|\mathcal{M}'|\geqslant |\mathcal{M}|$.
\end{lemma}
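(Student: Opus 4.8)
The plan is to exhibit an injective map from $\mathcal{M}$ into $\mathcal{M}'$; since an injection forces the target set to be at least as large as the source, this immediately yields $|\mathcal{M}'|\geqslant|\mathcal{M}|$. Conceptually this is the standard packing-versus-covering argument: $\mathcal{M}$ is a $2\varepsilon$-separated set (a packing) while $\mathcal{M}'$ is an $\varepsilon$-net covering it, and the geometric intuition is that each ball of radius $\varepsilon$ centered at an element of $\mathcal{M}'$ can absorb at most one element of the separated set $\mathcal{M}$.

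First I would define the assignment $\Phi:\mathcal{M}\to\mathcal{M}'$. For each $M\in\mathcal{M}$, the covering hypothesis guarantees at least one $M'\in\mathcal{M}'$ with $\|M-M'\|<\varepsilon$; I pick one such $M'$ (any choice suffices) and set $\Phi(M)=M'$, which makes $\Phi$ well-defined.

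The key step is to verify that $\Phi$ is injective. Suppose $M_1,M_2\in\mathcal{M}$ with $M_1\neq M_2$ but $\Phi(M_1)=\Phi(M_2)=M'$. Then both $\|M_1-M'\|<\varepsilon$ and $\|M_2-M'\|<\varepsilon$ hold, so the triangle inequality gives
\[
\|M_1-M_2\|\leqslant\|M_1-M'\|+\|M'-M_2\|<2\varepsilon,
\]
which contradicts the separation hypothesis $\|M_1-M_2\|\geqslant2\varepsilon$. Hence distinct elements of $\mathcal{M}$ have distinct images under $\Phi$.

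There is essentially no hard part here; the only subtlety worth flagging is the interplay between the strict and non-strict bounds. The covering inequality is strict ($<\varepsilon$) whereas the separation inequality is non-strict ($\geqslant2\varepsilon$), so the chain above produces a strict $<2\varepsilon$ that genuinely contradicts $\geqslant2\varepsilon$; had both inequalities been non-strict, the argument would fail precisely at the boundary case. With injectivity established, $|\mathcal{M}|\leqslant|\mathcal{M}'|$ follows at once, completing the proof.
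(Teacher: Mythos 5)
Your proof is correct and follows essentially the same route as the paper's: both rest on the triangle-inequality observation that a single $M'\in\mathcal{M}'$ within distance $\varepsilon$ of some element of the $2\varepsilon$-separated set $\mathcal{M}$ cannot be within distance $\varepsilon$ of any other, which you package as an explicit injection $\Phi:\mathcal{M}\to\mathcal{M}'$ while the paper states it as ``each $M'$ approximates at most one element.'' Your explicit treatment of the strict-versus-non-strict inequalities is a slightly cleaner write-up of the same argument.
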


\begin{proof}
We suppose $\|M_1-M_2\|\geqslant 2\varepsilon$ and $\|M_1-M'\|<\varepsilon$. According to triangular inequality, we have $\|M_2-M'\|> 2\varepsilon-\varepsilon=\varepsilon$. Therefore, any $M'\in\mathcal{M}'$ can approximate at most one of the elements in $\mathcal{M}$ to accuracy $\varepsilon$. Therefore, to approximate all elements in $\mathcal{M}$, we have $|\mathcal{M}'|\geqslant|\mathcal{M}|$.
\end{proof}

We denote $\mathcal{O}_{\text{val}}$ and $\mathcal{O}_{\text{find}}$ as the sets containing all $O_H$  and $O_F$ with different parameters of the sparse matrix. To begin with, we estimate the lower bound of $|\mathcal{O}_{\text{val}}|$ and $|\mathcal{O}_{\text{find}}|$.
We notice that $|\mathcal{O}_{\text{val}}|$ and $|\mathcal{O}_{\text{find}}|$ increases with sparsity $s$ and number of digits $d$, so we can only consider the case with $s=1$ and $d=1$ to give a lower bound. In this case, we have $|\mathcal{O}_{\text{val}}|=|\mathcal{O}_{\text{find}}|=\Omega(N!)$.  

For each pair of $O_{H_1}, O_{H_2}\in\mathcal{O}_{\text{val}}$ such that $O_{H_1}\neq O_{H_2}$, there exist input $|x,y\rangle$ in computational basis satisfying $\|O_{H_1}|x,y\rangle- O_{H_2}|x,y\rangle\|=2$. So we have $\|O_{H_1}-O_{H_2}\|\geqslant2$. According to Lemma.~\ref{lm:mono}, to approximate all elements in $\mathcal{O}_{\text{val}}$ to accuracy $1$ with $\mathcal{G}_C$ or $\mathcal{G}'_{n_{\text{anc},D}}$, it is required that  $|\mathcal{G}_{C}|\geqslant\Omega(N!)$ and $|\mathcal{G}'_{n_{\text{anc}},D}|\geqslant\Omega(N!)$. Combining with Eq.~\eqref{eq:logc} for Lemma.~\ref{lm:CC} and Eq.~\eqref{eq:logd} for Lemma.~\ref{lm:cancd}, we have 

\begin{align}
C\log(n+C)&=\Omega(\log (N!)),\\
D(n+n_{\text{anc}})&=\Omega(\log(N!)).
\end{align}
Notice that $C=\Omega(n)$, we have
\begin{align}
C\log C=\Omega(N\log N),\\
(n+n_{\text{anc}})D=\Omega(Nn),
\end{align}
which are equivalent to $C=\Omega(N)$ and $(n+n_{\text{anc}})D=\Omega(2^n)$ as stated in Theorem.~\ref{th:lbsm}.

\subsection{Circuit depth lower bound for sparse matrix block-encoding: proof of Theorem.~\ref{th:hd_spbe}}\label{app:lbblsm}
We first introduce a lemma similar to Lemma.~\ref{lm:mono} as follows.
\begin{lemma}\label{lm:mono_2}
Given a set of matrices $\mathcal{M}$ satisfying the following: 
(1) The matrix is normalized, i.e. $\|M\|=1$ for any $M\in\mathcal{M}$.
(2) If $M_1,M_2\in\mathcal{M}$ and $M_1\neq M_2$, then, $\|M_1-M_2\|\geqslant 4\varepsilon$ for a sufficiently small $\varepsilon$. 

We suppose there is another set of unitaties $\mathcal{U}$ with fixed dimension satisfying the following: For any $M\in\mathcal{M}$, there exist $U\in\mathcal{U}$ such that $U$ is a $(\alpha,n_{\text{anc}},\varepsilon)$-block-encoding of $M$ for some $\alpha>0$. Then, we have $|\mathcal{U}|\geqslant |\mathcal{M}|$.
\end{lemma}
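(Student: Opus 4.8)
The plan is to reuse the template of Lemma~\ref{lm:mono}: I would argue that a single unitary $U\in\mathcal{U}$ can be an $(\alpha,n_{\text{anc}},\varepsilon)$-block-encoding of at most one matrix in $\mathcal{M}$. If that holds, then picking for each $M\in\mathcal{M}$ a witnessing block-encoding $U\in\mathcal{U}$ defines an injection $\mathcal{M}\hookrightarrow\mathcal{U}$, and hence $|\mathcal{U}|\geqslant|\mathcal{M}|$, as claimed.

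First I would set up the contradiction. Suppose some fixed $U\in\mathcal{U}$ were simultaneously a block-encoding of two distinct $M_1,M_2\in\mathcal{M}$, with (a priori different) normalization factors $\alpha_1,\alpha_2>0$. Write $\tilde H=\left(\langle 0^{n_{\text{anc}}}|\otimes\mathbb{I}_N\right)U\left(|0^{n_{\text{anc}}}\rangle\otimes\mathbb{I}_N\right)$ for the common top-left block extracted from $U$. By the definition in Eq.~\eqref{eq:bl} this gives $\|M_1-\alpha_1\tilde H\|\leqslant\varepsilon$ and $\|M_2-\alpha_2\tilde H\|\leqslant\varepsilon$, i.e. both targets are approximated by scalar multiples of the \emph{same} matrix $\tilde H$.

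The hard part, and the only genuinely new feature compared with Lemma~\ref{lm:mono}, is that the two encodings may carry different prefactors, so I cannot simply subtract the two approximations. The key is to exploit the normalization hypothesis $\|M_1\|=\|M_2\|=1$ to pin both prefactors to nearly the same value. By the reverse triangle inequality, for $i=1,2$ we have $\big|\,\alpha_i\|\tilde H\|-1\,\big|=\big|\,\|\alpha_i\tilde H\|-\|M_i\|\,\big|\leqslant\|\alpha_i\tilde H-M_i\|\leqslant\varepsilon$, using $\alpha_i>0$ so that $\|\alpha_i\tilde H\|=\alpha_i\|\tilde H\|$. Hence $\alpha_i\|\tilde H\|\in[1-\varepsilon,1+\varepsilon]$, and therefore $|\alpha_1-\alpha_2|\,\|\tilde H\|=\big|\,\alpha_1\|\tilde H\|-\alpha_2\|\tilde H\|\,\big|\leqslant 2\varepsilon$.

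Combining these through the triangle inequality then yields
\[
\|M_1-M_2\|\leqslant\|M_1-\alpha_1\tilde H\|+|\alpha_1-\alpha_2|\,\|\tilde H\|+\|\alpha_2\tilde H-M_2\|\leqslant\varepsilon+2\varepsilon+\varepsilon=4\varepsilon,
\]
which contradicts the separation assumption $\|M_1-M_2\|\geqslant 4\varepsilon$. As in Lemma~\ref{lm:mono}, where the approximation is taken strict, the ``sufficiently small $\varepsilon$'' clause lets me close the boundary case and keep the contradiction strict (simultaneous tightness of all three triangle inequalities together with $\alpha_i\|\tilde H\|$ sitting at opposite endpoints is a non-generic knife-edge). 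Consequently no $U\in\mathcal{U}$ can witness two distinct elements of $\mathcal{M}$, the assignment $M\mapsto U$ is injective, and $|\mathcal{U}|\geqslant|\mathcal{M}|$. The anticipated obstacle is entirely the prefactor mismatch; once it is controlled by the unit-norm normalization, the remaining steps are routine triangle-inequality bookkeeping that parallels Lemma~\ref{lm:mono} exactly.
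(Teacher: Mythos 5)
Your proof is correct and takes essentially the same route as the paper's: both extract the common top-left block of $U$, use the normalization $\|M_i\|=1$ together with the reverse triangle inequality to pin $\alpha_i\|\tilde H\|$ to $[1-\varepsilon,1+\varepsilon]$, and then conclude by the triangle inequality that a single $U$ cannot witness two $4\varepsilon$-separated matrices (the paper phrases this via the ratio $\beta=\alpha_2/\alpha_1$ and a lower bound on $\|\beta M'-M_2\|$, whereas you bound $\|M_1-M_2\|$ directly and symmetrically, which is if anything slightly cleaner). The only blemish --- that the chain closes at exactly $4\varepsilon$ rather than strictly below it --- is shared with (indeed, slightly worse in) the paper's own computation, and both arguments absorb it into the ``sufficiently small $\varepsilon$'' clause.
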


\begin{proof}
We consider $\|M_1-M_2\|\geqslant 4\varepsilon$ for some $M_1,M_2\in\mathcal{M}$. We suppose $U$ is a $(\alpha_1,n_{\text{anc}},\varepsilon)$-block-encoding of $M_1$ for some $\alpha_1>0$. According to definition, let $M'\equiv\alpha_1\left(\langle0^{n_{\text{anc}}}|\otimes\mathbb{I}_{N}\right) U\left(|0^{n_{\text{anc}}}\rangle\otimes\mathbb{I}_{N}\right)$, we have $\|M'-M_1\|\leqslant\varepsilon$. According to triangular inequality, have $\big|\|M'\|-1\big|\leqslant\varepsilon$, i.e. $\|M'\|\in[1-\varepsilon,1+\varepsilon]$.

Now, we assume that $U$ is also a $(\alpha_2,n_{\text{anc}},\varepsilon)$-block-encoding of $M_1$ for some $\alpha_2>0$. In other words, we have 
\begin{align}\label{eq:contr}
\|\beta M'-M_2\|\leqslant\varepsilon,
\end{align}
 where $\beta=\alpha_2/\alpha_1$. Similarly, we have $\big|\beta\|M'\|-1\big|\leqslant\varepsilon$, i.e. $\beta\in\left[\frac{1-\varepsilon}{1+\varepsilon},\frac{1+\varepsilon}{1-\varepsilon}\right]$. On the other hand, according to triangular inequality, we have 
 
 $$\|\beta M'-M_2\|\geqslant \|M_1-M_2\|-\|M_1-\beta M'\|\geqslant 4\varepsilon-\|(M_1-M')+(1-\beta)M'\|\geqslant4\varepsilon-\varepsilon-(1-\beta)\|M'\|\geqslant\varepsilon.$$
 which is contradicted to Eq.~\eqref{eq:contr}. In other words,  $U$ can not be a block-encoding of $M_2$ for accuracy $\varepsilon$. Then, with a similar argument to Lemma.~\ref{lm:mono}, we have $|\mathcal{U}|\geqslant|\mathcal{M}|$.
 
\end{proof}

 The circuit depth lower bound of sparse matrix block-encoding can be obtained based on  Lemma.~\ref{lm:mono_2}.
To obtain a lower bound can just consider a special matrix form, the diagonal matrix 
\begin{align}\label{eq:spa}
A(\bm{v})=\sum_{j=0}^{N-1}v_j|j\rangle\langle j|,
\end{align}
where $\bm{v}=[v_0,v_1,\cdots,v_{N-1}]$. We define 
\begin{align}
\mathcal{A}\equiv\{A(\bm{v})|v_j=-1 \quad\text{or}\quad v_j=1\}.
\end{align}
It can be verified that $|\mathcal{A}|=2^N$, and for any $A\in\mathcal{A}$, we have $\|A\|=1$. Moreover, for $A_1,A_2\in\mathcal{A}$ and $A_1\neq A_2$, we have $\|A(\bm{v_1})-A(\bm{v_2})\|\geqslant 2$. Combining  Lemma.~\ref{lm:CC},~\ref{lm:cancd} with Lemma.~\ref{lm:mono_2}, we have

\begin{align}
C\log(C+n)&=\Omega(\log (2^N)),\label{eq:spc2}\\
D(n+n_{\text{anc}})&=\Omega(\log (2^N)).
\end{align}
Equivalently, we have $(n+n_{\text{anc}})D=\Omega(N)$ and $C=\Omega(N^{\alpha})$ for arbitrary $\alpha\in(0,1)$.

\subsection{Circuit complexity lower bound for quantum state preparation}
In Section 4.5.4 of~\cite{Nielsen.02}, it has been shown that to achieve accuracy $\varepsilon$ for an $n$-qubit quantum state, at least $\Omega\left(\frac{1}{\varepsilon^{2^{n+1}}-1}\right)$ unique quantum circuits are required [Eq.(4.83)]. According to Lemma.~\ref{lm:CC},~\ref{lm:cancd}, we have the following circuit complexity for quantum state preparation.

\begin{theorem}
Given an arbitrary finite two-qubit elementary gate set $\mathcal{G}_{\text{ele}}$.
Let $n_{\text{anc}}$, $D$ and $C$ be the number of ancillary qubits, circuit depth and total number of gates in $\mathcal{G}_{\text{ele}}$ required to approximate an arbitrary quantum state to accuracy $\varepsilon<1$. Then, we have $(n+n_{\text{anc}})D=\Omega(2^n\log(1/\varepsilon))$ and $C=\Omega((2^n\log_2(1/\varepsilon))^\alpha)$ for arbitrary $\alpha\in(0,1)$.
\end{theorem}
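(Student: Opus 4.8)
The plan is to reuse the template behind Theorems~\ref{th:lbsm} and~\ref{th:hd_spbe}: bound the \emph{size} of the task---the number of distinct unitaries that must be available so that some achievable output state $\varepsilon$-approximates every $n$-qubit target---and pit it against the \emph{capacity} upper bounds of Lemma~\ref{lm:CC} and Lemma~\ref{lm:cancd}. The size input is exactly the covering estimate of Section~4.5.4 of~\cite{Nielsen.02}: since the pure-state space $\mathbb{CP}^{2^n-1}$ has real dimension $2^{n+1}-2$, covering it by $\varepsilon$-balls requires $(1/\varepsilon)^{\Theta(2^n)}$ balls, so at least that many distinct circuits are needed. The only feature of this estimate I will use is that its logarithm is $\Theta\!\left(2^n\log(1/\varepsilon)\right)$.

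First I would record that any circuit family preparing all states to accuracy $\varepsilon$ has $|\mathcal{G}_C|$ and $|\mathcal{G}'_{n_{\text{anc}},D}|$ both at least $(1/\varepsilon)^{\Theta(2^n)}$; this follows from the same monotonicity reasoning as Lemma~\ref{lm:mono}, since each circuit outputs one state and one state can $\varepsilon$-approximate at most one point of a $2\varepsilon$-separated net of targets. Taking logarithms and inserting Eq.~\eqref{eq:logc} and Eq.~\eqref{eq:logd} gives
\begin{align}
C\log(C+n)&=\Omega\!\left(2^n\log(1/\varepsilon)\right),\\
D(n+n_{\text{anc}})&=\Omega\!\left(2^n\log(1/\varepsilon)\right).
\end{align}
The second line is already the asserted depth bound $(n+n_{\text{anc}})D=\Omega(2^n\log(1/\varepsilon))$, so nothing more is needed there.

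It remains to convert $C\log(C+n)=\Omega(X)$, writing $X\equiv 2^n\log(1/\varepsilon)$, into $C=\Omega(X^{\alpha})$ for a fixed but arbitrary $\alpha\in(0,1)$, precisely the step taken from Eq.~\eqref{eq:spc2} to the conclusion of Theorem~\ref{th:hd_spbe}. Here I would note that $n\leqslant\log_2 X$ because $X\geqslant 2^n$, so in the only nontrivial regime $C=O(X)$ one has $\log(C+n)=O(\log X)$; if $C$ exceeded $X$ the claim would hold trivially. Assuming $C=o(X^{\alpha})$ for contradiction then forces $C\log(C+n)=o\!\left(X^{\alpha}\log X\right)=o(X)$, contradicting the lower bound, so $C=\Omega(X^{\alpha})$ for every $\alpha\in(0,1)$.

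The one genuinely substantive ingredient is the covering-number estimate imported from~\cite{Nielsen.02}; pinning the exponent to $\Theta(2^n)$ (the real dimension of $\mathbb{CP}^{2^n-1}$) is what makes both the $2^n$ and $\log(1/\varepsilon)$ factors surface together, and one should check that the hypothesis $\varepsilon<1$ keeps the separated net nonempty so the counting argument is valid. Everything after that---the two substitutions into the capacity lemmas and the $C\log C$-to-power-law inversion---is routine and duplicates the earlier sparse-matrix arguments almost verbatim.
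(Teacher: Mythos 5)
Your proposal is correct and follows essentially the same route as the paper: both import the $(1/\varepsilon)^{\Theta(2^n)}$ covering estimate from Section~4.5.4 of~\cite{Nielsen.02}, feed its logarithm $\Theta(2^n\log(1/\varepsilon))$ into Lemma~\ref{lm:CC} and Lemma~\ref{lm:cancd}, and then invert $C\log(C+n)=\Omega(X)$ to $C=\Omega(X^{\alpha})$ (the paper does this directly via $C^{1/\alpha}\gtrsim C\log C$ under the assumption $C>n$, while you argue the contrapositive; the two are equivalent). The only cosmetic difference is that you spell out the $2\varepsilon$-separated-net monotonicity step and the depth bound explicitly, which the paper leaves implicit.
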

\begin{proof}
For Lemma.1, we have
\begin{align}
\log|\mathcal{G}_C|&=O(C\log(C+n)).
\end{align}
To satisfy the requirement of approximating arbitrary state, we have $|\mathcal{G}_C|=\Omega\left(\frac{1}{\varepsilon^{2^{n+1}}-1}\right)$, and equivalently,
\begin{align}
\log|\mathcal{G}_C|&=\Omega(2^n\log(1/\varepsilon)).
\end{align}
Therefore, we should have $K_2 C\log(C+n)>K_1 2^n\log(1/\varepsilon)$ for some constant $K_1,K_2$. Equivalently, we have
\begin{align}
C\log(C+n)=\Omega(2^n\log(1/\varepsilon)).
\end{align}
Because the number of gates is typically larger than the number of qubits, i.e. $C>n$, so we have $C\log(2C)>C\log(C+n)$. Therefore
\begin{align}
C\log(C)=\Omega(2^n\log(1/\varepsilon)).
\end{align}

For any $\alpha\in(0,1)$, we have $C^{1/\alpha}=C\cdot C^{1/\alpha-1}>K_3C\log(C)$ for some $K_3$. This is because $\beta\equiv1/\alpha-1>0$, and $C^{\beta}$ increases faster than $\log C$. So we have $C^{1/\alpha}=\Omega(C\log C)=\Omega(2^n\log(1/\varepsilon))$, which is equivalent to 
\begin{align}
C=\Omega(C\log C)=\Omega((2^n\log(1/\varepsilon))^{\alpha}).
\end{align}
\end{proof}

\subsection{Generalization to quantum channels}
Our result can be readily generalized to quantum channels. More specifically, we are given an elementary set of quantum channels
\begin{align}\label{eq:chan1}
\mathcal{E}_{\text{ele}}^{\text{(channel)}}=\left\{\mathcal{E}_1,\mathcal{E}_2,\cdots,\mathcal{E}_g\right\},
\end{align}
where $\mathcal{E}_j$ can be an arbitrary operations applied at two-qubit systems. This includes unitary, measurement, and corresponding feedback control, etc.  We also restrict that $\left|\mathcal{E}_{\text{ele}}^{\text{(channel)}}\right|=O(1)$, i.e. there are constant number of elementary quantum channels. With this definition, we have the following results.

\begin{lemma}\label{lm:channel}
Let $\mathcal{G}^{\text{(channel)}}_{C}$ be the set containing all $n$-qubit quantum channels that can be constructed with $C$ elementary channels in $\mathcal{G}^{(channel)}_{\text{ele}}$ defined in Eq~\eqref{eq:chan1} above. Then, we have $\log\big|\mathcal{G}^{\text{(channel)}}_{C}\big|= O\left((C\log(C+n))\right)$, even with unlimited ancillary qubit number. 

Let $\mathcal{G}^{\text{(channel)}'}_{n_{\text{anc}},D}$ be the set containing all unitaries that can be constructed with $n_{\text{anc}}$ ancillary qubits and $D$ circuit depth. Then, we have $\log\big|\mathcal{G}^{\text{(channel)}'}_{n_{\text{anc}},D}\big|=O\left(D(n+n_{\text{anc}})\right)$.
\end{lemma}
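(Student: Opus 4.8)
The plan is to observe that the two counting arguments behind Lemma~\ref{lm:CC} and Lemma~\ref{lm:cancd} are purely combinatorial: they never invoke unitarity of the elementary operations, and use only that the elementary set is finite of size $g=O(1)$ and that the operation induced on the data qubits is a deterministic function of the discrete circuit description. Since $\mathcal{E}_{\text{ele}}^{\text{(channel)}}$ is finite with $\big|\mathcal{E}_{\text{ele}}^{\text{(channel)}}\big|=g=O(1)$, I would transcribe both proofs almost verbatim, replacing each elementary unitary gate by an elementary channel drawn from $\mathcal{E}_{\text{ele}}^{\text{(channel)}}$ and discarding all ancillary registers at the end.

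For the first bound I would write a generic channel circuit, in analogy with Eq.~\eqref{eq:G}, as a composition
$$\mathcal{G}^{\text{(channel)}}=\mathcal{E}(\theta_C,a_C,b_C)\circ\cdots\circ\mathcal{E}(\theta_1,a_1,b_1),$$
where $\mathcal{E}(\theta_j,a_j,b_j)$ applies the elementary channel $\mathcal{E}_{\theta_j}\in\mathcal{E}_{\text{ele}}^{\text{(channel)}}$ to qubits $a_j,b_j$. I would then run Algorithm~\ref{alg:unique_gates} unchanged: stage~1 fixes the ``skeleton'' $\{(a_j,b_j)\}$ with $\prod_{j=1}^C O((n+j)^2)=O((n+C)^C)$ choices, and stage~2 fixes the labels $\{\theta_j\}$ with $O(g^C)$ choices. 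Every $n$-qubit channel in $\mathcal{G}^{\text{(channel)}}_C$ arises from at least one such specification, so $\big|\mathcal{G}^{\text{(channel)}}_C\big|\leqslant O(((C+n)g)^C)$, and since $g=O(1)$ we obtain $\log\big|\mathcal{G}^{\text{(channel)}}_C\big|=O(C\log(C+n))$. For the second bound I would reuse the layered construction behind Lemma~\ref{lm:cancd}: at each of the $D$ layers, partition the $n+n_{\text{anc}}$ qubits into pairs ($O((n+n_{\text{anc}})^{2D})$ partitions across all layers) and assign each pair an elementary channel ($O(g^{(n+n_{\text{anc}})D/2})$ choices). This over-counts the channels in $\mathcal{G}^{\text{(channel)}'}_{n_{\text{anc}},D}$, so taking logarithms and using $g=O(1)$ gives $\log\big|\mathcal{G}^{\text{(channel)}'}_{n_{\text{anc}},D}\big|=O(D(n+n_{\text{anc}}))$, exactly as in Eq.~\eqref{eq:logd}.

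The main point to justify carefully---and the only place where the channel setting differs in substance---is that the induced data channel is genuinely determined by the discrete data $(\{(a_j,b_j)\},\{\theta_j\})$. This holds precisely because each $\mathcal{E}_{\theta_j}$ is a fixed two-qubit channel drawn from a finite set: even when it incorporates measurement and classically-controlled feedback, that measurement-and-feedback is confined to its own two qubits, so no continuous parameter and no global classical wire enters the specification. I would emphasize that this is exactly the locality assumption flagged in the Discussion; were feedback allowed to depend on the measurement outcomes of many distant qubits, the elementary operations would no longer be local and the counting would break. Since distinct specifications may collapse to the same channel, the counts above are over-approximations, which only strengthens the upper bounds, and tracing out the ancillas is a single fixed final step contributing nothing to the count.
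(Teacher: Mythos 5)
Your proposal is correct and follows essentially the same route as the paper: the paper's own proof likewise writes the channel circuit as a composition $\mathcal{E}=\prod_{j=1}^C\mathcal{E}(\theta_j,a_j,b_j)$ and then reruns the counting arguments of Lemma~\ref{lm:CC} and Lemma~\ref{lm:cancd} verbatim, relying on the finiteness and two-qubit locality of $\mathcal{E}_{\text{ele}}^{\text{(channel)}}$. Your additional remark that the induced channel is a deterministic function of the discrete specification precisely because feedback is confined to each elementary channel's own two qubits is exactly the point the paper flags after the lemma, so nothing is missing.
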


\begin{proof}
Similar to the unitary case, a quantum channel constructed with $C$ elementary channels in $\mathcal{G}_{\text{ele}}^{\text{(channel)}}$ can always be described as
\begin{align}\label{eq:G}
\mathcal{E}=\prod_{j=1}^C\mathcal{E}(\theta_j,a_j,b_j),
\end{align}
where $\mathcal{E}(\theta_j,a_j,b_j)$ is the two-qubit quantum channel $\mathcal{E}(\theta_j)\in\mathcal{G}^{\text{(channel)}}_{\text{ele}}$ applied at qubits with label $a_j$ and $b_j$. $\log\big|\mathcal{G}^{\text{(channel)}}_{C}\big|$ and $\log\big|\mathcal{G}^{\text{(channel)}'}_{n_{\text{anc}},D}\big|$ can then be estimated with the same arguments in the proof of Lemma.~\ref{lm:CC} and Lemma.~\ref{lm:cancd}.
\end{proof}

The reason why Lemma.~\ref{lm:channel} have the same complexity to unitary case is that we have restricted that the elementary operations are still local, i.e. applied nontrivially on at most $2$ qubits. Based on Lemma.~\ref{lm:channel}, we can also analysis lower bound about the circuit complexity with local quantum channels. Take the SAIM as an example, we have the following result.

\begin{theorem}\label{th:gen}
Given an arbitrary finite two-qubit quantum channel sets $\mathcal{G}^{\text{(channel)}}_{\text{ele}}$.
Let $n_{\text{anc}}$, $D$ and $C$ be the number of ancillary qubits, circuit depth and total number of gates in $\mathcal{G}^{\text{channel}}_{\text{ele}}$ required to approximate SAIM in Eq.~\eqref{eq:ofh} with any accuracy $\varepsilon<1$. Then, we have $(n+n_{\text{anc}})D=\Omega(2^nn)$ and $C=\Omega(2^{n})$.
\end{theorem}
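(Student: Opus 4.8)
The plan is to reproduce the argument of Theorem~\ref{th:lbsm} almost verbatim, replacing the operator norm by the diamond norm and invoking the channel capacity estimates of Lemma~\ref{lm:channel} in place of Lemmas~\ref{lm:CC} and~\ref{lm:cancd}. The key observation is that the counting proof never used unitarity beyond two facts: (i) the existence of a large, pairwise well-separated family of target objects, and (ii) an upper bound on the number of distinct objects realizable with bounded resources. Both facts survive the passage from unitaries to channels, so the skeleton of the proof carries over unchanged.

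First I would record a channel analog of the covering Lemma~\ref{lm:mono}. Since the diamond norm $\|\cdot\|_\diamond$ is a genuine norm on Hermiticity-preserving maps and hence satisfies the triangle inequality, the identical argument gives: if a set of channels $\mathcal{M}$ is pairwise $2\varepsilon$-separated in diamond norm, and a set $\mathcal{M}'$ contains a diamond-norm $\varepsilon$-approximant to every element of $\mathcal{M}$, then $|\mathcal{M}'|\geqslant|\mathcal{M}|$. This is a near word-for-word repetition of Lemma~\ref{lm:mono} and introduces no new idea.

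Next I would transport the combinatorial input. Each SAIM unitary $O_H$ induces a unitary channel $\mathcal{O}_H(\rho)=O_H\rho O_H^\dagger$, and restricting to $s=1$, $d=1$ already yields $\Omega(N!)$ distinct such channels, since distinct conjugations give distinct channels. For the separation, I would use that for $O_{H_1}\neq O_{H_2}$ there is a computational-basis input $|x,y\rangle$ whose images are orthogonal; feeding $|x,y\rangle\langle x,y|$ through the two channels produces orthogonal pure states, whose trace-norm distance is $2$, so $\|\mathcal{O}_{H_1}-\mathcal{O}_{H_2}\|_\diamond\geqslant 2$. The family is therefore pairwise $2$-separated, and any scheme approximating each SAIM channel to accuracy $\varepsilon<1$ meets the hypothesis of the channel covering lemma with its separation parameter set to $1$. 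The covering lemma then forces $\big|\mathcal{G}^{\text{(channel)}}_{C}\big|\geqslant\Omega(N!)$ and $\big|\mathcal{G}^{\text{(channel)}'}_{n_{\text{anc}},D}\big|\geqslant\Omega(N!)$. Combining with Lemma~\ref{lm:channel}, which reproduces Eqs.~\eqref{eq:logc} and~\eqref{eq:logd} because the elementary channels remain two-local, gives $C\log(n+C)=\Omega(\log N!)$ and $(n+n_{\text{anc}})D=\Omega(\log N!)$. Using $\log N!=\Theta(N\log N)=\Theta(2^n n)$ together with $C=\Omega(n)$ yields $C=\Omega(2^n)$ and $(n+n_{\text{anc}})D=\Omega(2^n n)$, exactly as in Appendix~\ref{app:lbsm}.

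I expect the only delicate point to be fixing the metric in which ``accuracy $\varepsilon$'' is measured for channels and checking two-sided compatibility: the separation bound and the approximation guarantee must be stated in the same norm. Choosing the diamond norm makes the separation step immediate (orthogonal outputs give distance $2$) and keeps the triangle-inequality covering argument intact, while the capacity side is already secured by Lemma~\ref{lm:channel}; thus no genuine obstacle remains. A minor bookkeeping remark is that $\mathcal{O}_H$ acts on $O(n)$ qubits (the $2n$-qubit index and $n$-qubit word registers), but this merely rescales $n$ by a constant and does not affect the asymptotics.
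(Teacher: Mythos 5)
Your proposal is correct and follows essentially the same route as the paper, whose entire proof of Theorem~\ref{th:gen} is the single sentence that one repeats the argument of Sec.~\ref{app:lbsm} using Lemma~\ref{lm:channel} in place of Lemmas~\ref{lm:CC} and~\ref{lm:cancd}. You in fact supply the details the paper leaves implicit --- the diamond-norm version of the covering Lemma~\ref{lm:mono}, the observation that orthogonal outputs on a computational-basis input give pairwise separation $2$ for the induced unitary channels, and the check that the two sides of the argument use the same metric --- all of which are exactly what is needed and introduce no deviation from the intended proof.
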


The proof is straightforward by following the same process in Sec.~\ref{app:lbsm}. 
An open question is about the capacity when global quantum channels are allowed. For example, one may perform (local) unitaries conditioned on multiple measurement outcomes. In this case, the elementary operations can no longer be described by a two-qubit quantum channel. 

\section{Supplementary Discuss 2: Block-encoding for sparse  matrix}\label{app:sp_be}

The circuit complexity lower bound for block-encoding sparse matrix has been discussed in Theorem.~\ref{th:hd_spbe}.
In this section, consider the construction block encoding of a sparse matrix $A$ with at most $s=O(1)$ nonzero elements at each row and column. To facilitate the discussion, we consider the normalized matrices $\|A\|=1$ and set the normalization factor for block-encoding as $\alpha=1$.

\subsection{Sparse state preparation}\label{app:sp_sps}
Here, we consider the construction of a $n$-dimensional, $s$-sparse quantum state 
\begin{align}
|\psi\rangle=\sum_{k=0}^{s-1}\psi_k|q_k\rangle.
\end{align}
where $q_k$ is an $n$-qubit computational basis. We have the following result.

\begin{theorem}\label{th:sps}
Given $n_{\text{anc}}$ ancillary qubits where $\Omega(n)\leqslant n_{\text{anc}}\leqslant O(ns\log s)$. An arbitrary (controlled-)quantum state preparation unitary with accuracy $\varepsilon$ can be constructed with $O(s(n\log s+\log(1/\varepsilon)))$ count and $\tilde{O}\left( ns\log s\log(1/\varepsilon)\frac{\log n_{\text{anc}}}{n_{\text{anc}}}\right)$ depth of Clifford$+T$ gates. Here, $\tilde O$ suppresses doubly logarithmic factors of $n_{\text{anc}}$.
\end{theorem}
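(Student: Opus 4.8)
The plan is to exploit the fact that, although $|\psi\rangle$ lives in an $N=2^n$ dimensional space, its nonzero amplitudes form only an $s$-dimensional vector: the expensive amplitude-loading can therefore be done on a small $\lceil\log_2 s\rceil$-qubit register, while the (exact, Clifford-only) task of relocating the $s$ basis labels to their true positions $q_k$ is handled separately by sparse Boolean memories. Concretely, I would introduce a $\lceil\log_2 s\rceil$-qubit index register $I$ (ancilla) alongside the $n$-qubit output register $O$ and build the unitary in three stages. First, load the amplitudes on $I$, realizing $|0\rangle_O|0\rangle_I \mapsto |0\rangle_O \sum_{k=0}^{s-1}\psi_k|k\rangle_I$. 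Second, write the position into $O$ with a sparse Boolean memory implementing $k\mapsto q_k$, giving $\sum_k \psi_k|q_k\rangle_O|k\rangle_I$. Third, uncompute the index by the reverse sparse Boolean memory implementing $q_k\mapsto k$, returning $I$ to $|0\rangle$ and leaving the target $\sum_k\psi_k|q_k\rangle_O$.

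For the accounting I would invoke the earlier subroutines directly. Stage one is a $(\log_2 s)$-qubit dense state preparation, so by Theorem~\ref{th:sp_main} (with $n\to\log_2 s$, $N\to s$) it costs $O(s\log(1/\varepsilon))$ count and $\tilde O(s\log(1/\varepsilon)\,\log(n_{\text{anc}})/n_{\text{anc}})$ depth. Stage two is an instance of Select$(B)$ with a $\log_2 s$-bit index, an $n$-bit word, and $s$ nonzero entries, so by Theorem~\ref{th:sbm} it costs $O(ns\log s)$ count and $O(ns\log s\,\log(n_{\text{anc}})/n_{\text{anc}})$ depth; stage three is the symmetric instance with the $n$-bit and $\log_2 s$-bit registers exchanged, at the same cost. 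Summing, the count is $O(s\log(1/\varepsilon))+O(ns\log s)=O(s(n\log s+\log(1/\varepsilon)))$, and the depth is the sum of the three terms, which is bounded by $\tilde O(ns\log s\,\log(1/\varepsilon)\,\log(n_{\text{anc}})/n_{\text{anc}})$ whenever $n\log s\geqslant 1$ and $\log(1/\varepsilon)\geqslant 1$. The ancilla window $\Omega(n)\leqslant n_{\text{anc}}\leqslant O(ns\log s)$ is precisely the intersection of the windows demanded by these subroutines, the lower end $\Omega(n)$ being forced by the $n$-bit-index memory of stage three and the upper end by its $O(ns\log s)$ ceiling.

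The error analysis is clean, because stages two and three are exact Toffoli/CNOT networks, so the only approximation lives in stage one: if $\|\sum_k\psi_k|k\rangle-\sum_k\tilde\psi_k|k\rangle\|\leqslant\varepsilon$, then applying the exact position map preserves this distance exactly, since the $\{|q_k\rangle\}$ are orthonormal (the positions, enumerating the support, are distinct). The two points needing care, which I expect to be the main obstacle, are the consistency of the forward and backward Boolean functions so that stage three genuinely zeroes $I$ (this rests on the injectivity of $k\mapsto q_k$ on the support), and the controlled version: for controlled state preparation I would make the amplitude load controlled as in Lemma~\ref{lm:sp_many_ctrl} and additionally condition the Boolean-memory writes on the external control qubit, so that an inactive control leaves every register in $|0\rangle$. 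Care is required exactly when the all-zero string lies in $\{q_k\}$, which I would resolve by keying the writes on the control qubit rather than treating $I=|0\rangle$ as an implicit ``off'' flag. These additions alter the asymptotics only by constant factors, yielding the bounds claimed in Theorem~\ref{th:sps}.
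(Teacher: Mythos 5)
Your three-stage construction (dense amplitude loading on a $\lceil\log_2 s\rceil$-qubit register via Theorem~\ref{th:sp_main}, writing the positions $q_k$ into the $n$-qubit register, then uncomputing the index with a sparse Boolean memory keyed on $q_k\mapsto k$) is exactly the paper's proof, and your error and ancilla accounting match. The only difference is that the paper implements the middle step with the Pauli-string select oracle of Theorem~\ref{th:sop} at cost $O(sn)$ rather than your sparse Boolean memory at cost $O(ns\log s)$, which changes nothing in the stated bounds; your extra care about conditioning the Boolean-memory writes on the external control qubit is a valid refinement the paper leaves implicit.
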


\begin{proof}
We introduce two registers. Register $A$ contains $\lceil \log_2s\rceil$ qubits, and register $B$ contains $n$ qubits. They are initialized to $|0^{\lceil \log_2s\rceil}\rangle$ and $|0^n\rangle$ respectively.
In the first step, we perform the transformation 
\begin{align}
|0^{\lceil \log_2s\rceil}\rangle_A|0^n\rangle_B\longrightarrow\sum_{k=0}^{s-1}\tilde\psi_k|k\rangle_A|0\rangle_B.
\end{align}
for some $\left\|\sum_{k=0}^{s-1}\tilde\psi_k|k\rangle-\sum_{k=0}^{s-1}\psi_k|k\rangle\right\|\leqslant\varepsilon$.
According to Theorem.~\ref{th:sp_main}, this step requires $O(s\log(1/\varepsilon))$ count and $O\left(s\frac{\log(m_1)\log(\log(m_1)/\varepsilon}{m_1}\right)$ depth of Clifford$+T$ gates, provide $m_1$ ancillary qubits satisfying  $\Omega(\log_2 s)\leqslant m_1\leqslant O(s)$. Because $s\leqslant 2^n$, this is also equivalent to $\Omega(n)\leqslant m_1\leqslant O(s)$.
In the second step, we perform the transformation
\begin{align}\label{eq:sp_3}
\sum_{k=0}^{s-1}\psi_k|k\rangle_A|0\rangle_B\longrightarrow\sum_{k=0}^{s-1}\psi_k|k\rangle_A|q_k\rangle_B.
\end{align}
Eq.~\eqref{eq:sp_3} can be realized by a select oracle for Pauli strings with $A$ as index register and $B$ as word register. According to Theorem.~\ref{th:sop}, this step requires $O(sn)$ count and $O\left(sn\frac{\log m_2}{m_2}\right)$ depth of Clifford$+T$ gates, provide $m_2$ ancillary qubits satisfying $\Omega(n)\leqslant m_2\leqslant O(ns)$. Finally, we perform the transformation 
\begin{align}\label{eq:sp_4}
\sum_{k=0}^{s-1}\psi_k|k\rangle_A|q_k\rangle_B\rightarrow\sum_{k=0}^{s-1}\psi_k|0\rangle_A|q_k\rangle_B,
\end{align}
which can be realized by a sparse Boolean memory with $B$ as index register and $A$ as word register. According to Lemma.~\ref{th:sbm}, this step requires $O(ns\log s)$ count and $O\left(ns\log s\frac{\log m_3}{m_3}\right)$ depth of Clifford$+T$ gates, provide $m_3$ ancillary qubits satisfying $\Omega(n)\leqslant m_3\leqslant O(ns\log s)$. After this step, the target state is then obtained by tracing out register $A$. By combing the cost of each step, the total circuit complexity stated in Theorem.~\ref{th:sps} can be obtained.

\end{proof}

\subsection{Construction of sparse matrix block-encoding }
\begin{theorem}\label{th:bl2}
Given $n_{\text{anc}}$ ancillary qubits where $\Omega(n)\leqslant n_{\text{anc}}\leqslant O(Nns\log s)$, the $(\alpha,n_{\text{anc}},\varepsilon)$-block-encoding of $s$-sparse matrix $A\in\mathbb{C}^{N\times N}$ with $\alpha=\|A\|_{F}$ can be constructed with $O(Ns(n\log s+\log(1/\varepsilon))$ count and $\tilde O\left(Nns\log s\log(1/\varepsilon)\frac{\log n_{\text{anc}}}{n_{\text{anc}}}\right)$ of Clifford$+T$ gates. Here, $\|A\|_{F}$ is the Frobenius norm of $A$, and $\tilde O$ suppresses doubly logarithmic factors of $n_{\text{anc}}$.
\end{theorem}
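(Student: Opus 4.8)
The plan is to reduce the Frobenius-norm block-encoding of $A$ to a column-indexed family of sparse state preparations, mirroring the three-step logic of the sparse state-preparation routine (Theorem~\ref{th:sps}) but now conditioned on the system register. Write $A=\sum_{j}|A_j\rangle\langle j|$ with unnormalized columns $|A_j\rangle=\sum_i A_{ij}|i\rangle$, each carrying at most $s$ nonzero entries, and note that $\alpha^2=\|A\|_F^2=\sum_j \| |A_j\rangle \|^2$. Because the weights $\| |A_j\rangle \|^2/\alpha^2$ sum to one, I can encode the value data of a whole column together with its norm into a single normalized $(s+1)$-sparse amplitude vector, padding with one garbage amplitude $\gamma_j$ obeying $|\gamma_j|^2=1-\| |A_j\rangle \|^2/\alpha^2$. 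This padding is precisely what forces the normalization to be $\|A\|_F$ rather than the cruder $s\max_{ij}|A_{ij}|$.

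Concretely I would introduce a slot register $K$ of $\lceil\log_2 s\rceil+1$ qubits and an $n$-qubit row register $R$, and build the block-encoding as a product of four controlled operations. First, controlled on the system holding $|j\rangle$, prepare in $K$ the $(s+1)$-sparse amplitude vector carrying the values $A_{ij}/\alpha$ of column $j$; combined over all $N$ columns via the select oracle of Lemma~\ref{lm:sl1} (or Lemma~\ref{lm:sl2} in the wide-ancilla regime), this amounts to $N$ invocations of Theorem~\ref{th:sps} and contributes $O(Ns(n\log s+\log(1/\varepsilon)))$ gates. Second, use an $O_F$-type position lookup, realized by the sparse Boolean memory of Theorem~\ref{th:sbm}, to write the row index of the $k$th nonzero of column $j$ into $R$; this supplies the $O(Nns\log s)$ part of the count. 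Third, swap $R$ into the system register so that it now carries the row index $i$. Fourth, uncompute the slot label in $K$ from the pair $(i,j)$ using the complementary sparse lookup — the same reversible device that lets Theorem~\ref{th:sps} clear its label register, since on the support a row--column pair fixes its slot uniquely — returning the flag and $K$ to $|0\rangle$ on the good branch.

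For the error analysis I would follow the telescoping argument already used in Lemma~\ref{lm:sp_few} and in Eq.~\eqref{eq:WL}: each column preparation is accurate to $\varepsilon$ in operator norm on the relevant subspace, and because the controlled preparations act on orthogonal branches selected by $|j\rangle$, their errors do not accumulate across columns, so the overall $(\alpha,n_{\text{anc}},\varepsilon)$ guarantee follows from the single-column accuracy together with the block-encoding-from-state-preparation identity of~\cite{Low.19}. Aggregating the four steps gives count $O(Ns(n\log s+\log(1/\varepsilon)))$, while the depth is dominated by the select layer and inherits the space--time tradeoff factor $\log n_{\text{anc}}/n_{\text{anc}}$ from Lemmas~\ref{lm:sl1}--\ref{lm:sl2} and from Theorem~\ref{th:sps}, yielding $\tilde O\!\left(Nns\log s\,\log(1/\varepsilon)\,\log n_{\text{anc}}/n_{\text{anc}}\right)$ as claimed.

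The main obstacle I anticipate is the register-cleaning step for a \emph{general}, non-symmetric sparse matrix: a naive swap strands the original column index in an ancilla that cannot be returned to $|0\rangle$ from the row index alone. The resolution is never to create a free-standing column copy, and instead to clear only the slot register using the bidirectional position lookup, which is well defined precisely because restricting to the nonzero support of $A$ makes the slot a single-valued function of the row--column pair. Verifying that this keeps the full operation unitary while leaving the good branch correctly normalized to $\|A\|_F$ is the delicate point, and it is exactly where the sparsity assumption $s=O(1)$ and the invertibility built into $O_F$ (Theorem~\ref{th:saim}) are used.
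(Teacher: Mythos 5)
Your route is genuinely different from the paper's, and it contains a gap that the paper's construction is specifically designed to avoid. The paper follows the two-sided pattern of Ref.~\cite{Clader.22}: it prepares the row-norm state $|A\rangle=\sum_j(\|A_{j,\cdot}\|_F/\|A\|_F)|j\rangle$ with an unconditional unitary $U_L$, prepares the normalized row states $|A_j\rangle$ with a controlled unitary $U_R$ (built from Theorem~\ref{th:sps}), and takes $U_R^\dagger\,\mathrm{SWAP}\,U_L$; the matrix element $\langle j|A\rangle\langle A_j|k\rangle=A_{j,k}/\|A\|_F$ arises as a product of two overlaps, so no index information ever has to be erased. You instead propose a one-sided construction: controlled on the column index $j$, prepare a single $(s+1)$-sparse padded vector carrying the amplitudes $A_{ij}/\|A\|_F$, look the row index up into an ancilla $R$, swap, and uncompute the slot register.

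The gap is the fate of register $R$. After your swap the system holds $i$ and $R$ holds $j$; your fourth step clears only the slot label $K$ (which is indeed a single-valued function of $(i,j)$ on the support), but it leaves $j$ sitting in $R$. Since $R$ is an ancilla, the block-encoding projection onto $\langle 0^{n_{\mathrm{anc}}}|$ annihilates every branch with $j\neq 0$, so the operator you extract is not $A/\|A\|_F$. This cannot be repaired by any reversible lookup: for a general $s$-sparse matrix with $s>1$, a given row $i$ meets up to $s$ distinct columns, so $j$ is not a function of $i$, and no unitary acting on $R$ controlled by the remaining registers can send $|j\rangle_R\mapsto|0\rangle_R$ for several distinct $j$ simultaneously. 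The invertibility built into $O_F$ gives you $k$ from the pair $(i,j)$, never $j$ from $i$ alone. The only way to ``forget'' $j$ is to pay for it in amplitude, i.e., to close the circuit with the adjoint of a second state preparation acting on that register, whose overlap with $|j\rangle_R$ supplies a multiplicative factor --- which is precisely the paper's $U_R^\dagger$, and which also makes your garbage-padding amplitude $\gamma_j$ unnecessary, since the factors $\|A_{j,\cdot}\|_F/\|A\|_F$ and $A_{j,k}/\|A_{j,\cdot}\|_F$ combine to produce the normalization $\|A\|_F$ automatically. Your complexity bookkeeping (an $N$-fold select over Theorem~\ref{th:sps} plus the sparse Boolean memory of Theorem~\ref{th:sbm}) matches the paper's and would carry over to the corrected two-sided construction.
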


\begin{proof}
The block-encoding of a matrix can be realized with state preparation and controlled state preparation~\cite{Clader.22}. 
 We define $A_{j,k}$ as the element at the $j$th row and the $k$th column, i.e.  $A_{j,k}=\langle j|A|k\rangle$. We define quantum states
\begin{align}
|A\rangle&=\sum_{j=0}^{N-1}\frac{\|A_{j,\cdot}\|_F}{\|A\|_F}|j\rangle , \quad
|A_j\rangle=\sum_{k=0}^{N-1}\frac{A_{j,k}^*}{\|A_j,\cdot\|_F}|k\rangle. 
\end{align}
Note that $|A_j\rangle$ is a $s$-sparse quantum state, while $|A\rangle$ is a non-sparse state. We introduce state preparation unitaries $G_{A}$ and $G_{A_j}$ satisfying $G_{A}|0^n\rangle=|A\rangle$, and $G_{A_j}|0^n\rangle=|A_j\rangle$ respectively. We then define
\begin{align}
U_R=\sum_{j}|j\rangle\langle j|\otimes G_{A_j},\quad
U_L=\mathbb{I}_{N}\otimes G_{A}.
\end{align}
We use three subroutines satisfying the following
\begin{align}
\text{SWAP}|j\rangle|k\rangle&=|k\rangle|j\rangle,\\
U_{L}|k\rangle|0^{n}\rangle&=|k\rangle|A\rangle,\\
U_R|j\rangle|0^{n}\rangle&=|j\rangle|A_j\rangle.\label{eq:ur}
\end{align}
Eq.~\eqref{eq:ur} is equivalent to 
\begin{align}
\langle j|\langle0^{n}|U_R^\dag&=\langle j|\langle A_j|.
\end{align}
So we have 
\begin{align}
&\langle j|\langle 0^n|U_R^\dag \;\text{SWAP}\;U_L|k\rangle|0^{n}\rangle=\langle j|A\rangle\langle A_j|k\rangle.
=\frac{\|A_{j,\cdot}\|_F}{\|A\|_F}\frac{A_{j,k}}{\|A_{j,\cdot}\|_F}=\frac{A_{j,k}}{\|A\|_F}
\end{align}

Therefore $U_R^\dag \;\text{SWAP}\; U_L$ is a block-encoding to $A$ with normalization factor $\alpha=\|A\|_F$. Suppose we have two unitaries $\tilde U_R$ and $\tilde U_L$ satisfying $\| \tilde U_R-U_R\|\leqslant\varepsilon/2$ and $\| \tilde U_L-U_L\|\leqslant\varepsilon/2$. It can be verified that  $\tilde U_R^\dag \;\text{SWAP}\; \tilde U_L$ is a $(\|A\|_F,n,\varepsilon)$-block-encoding to $A$, and $\alpha=\|A\|_F\geqslant \|A\|$.
Based on Theorem.~\ref{th:sps} for sparse state preparation, the circuit complexity of $U_R$ can be estimated in the similar way to Theorem.~\ref{th:sop} for Select$(H_x)$. The circuit complexity for  $U_R^\dag \;\text{SWAP}\; U_L$ then follows readily.

\end{proof}

\end{document}